\newif\ifdraft
\setlist{leftmargin=10mm}
\newif\iffinal
    \newcommand{\tianhao}[1]{}
    \newcommand{\ruoxi}[1]{}
    \newcommand{\add}[1]{#1}
    \newcommand{\tianhao}[1]{{\bf \textcolor{purple}{[Tianhao: #1]}}}
    \newcommand{\ruoxi}[1]{{\bf \textcolor{BrickRed}{[Ruoxi:#1]}}}
    \newcommand{\add}[1]{\textcolor{red}{#1}}
\newtheorem{theorem}{Theorem}
\newtheorem{lemma}[theorem]{Lemma}
\newtheorem{definition}[theorem]{Definition}
\newtheorem{remark}{Remark}
\newtheorem{remark-star}{Remark}
\newtheorem{remark-star-1}{Remark}
\newtheorem*{proof-sketch}{Proof Sketch}
\DeclareMathOperator*{\argmax}{\arg\!\max}
\newcommand{\R}{\mathbb{R}}
\newcommand{\eps}{\varepsilon}
\newcommand{\norm}[1]{\left\lVert#1\right\rVert}
\newcommand{\ind}{\mathbbm{1}}
\newcommand{\A}{\mathcal{A}}
\newcommand{\U}{v}
\newcommand{\test}{\mathrm{val}}
\newcommand{\metric}{\texttt{ValAcc}}
\newcommand{\NBK}{ \texttt{NB}_{x^{(\test)}, K} }
\newcommand{\Dval}{ D^{(\test)} }
\newcommand{\red}[1]{\textcolor[RGB]{255, 49, 49}{#1}}
\newcommand{\cmt}[1]{\texttt{// #1}}
\newcommand{\zval}{ z^{(\test)} }
\newcommand{\xval}{ x^{(\test)} }
\newcommand{\yval}{ y^{(\test)} }
\newcommand{\C}{\mathbf{C}}
\newcommand{\weightfunc}{\omega_{\xval}}
\newcommand*\circled[1]{\tikz[baseline=(char.base)]{
            \node[shape=circle,draw,inner sep=1pt] (char) {#1};}}
\newcommand{\Gil}{\texttt{G}_{i, \ell}}
\newcommand{\Gizero}{\texttt{G}_{i, 0}}
\newcommand{\Giltil}{\widetilde{\texttt{G}}_{i, \ell}}
\newcommand{\Giltilmstar}{\widetilde{\texttt{G}}_{i, \ell}^{(\mstar)}}
\newcommand{\Fi}{\texttt{F}_i}
\newcommand{\Fihat}{\widehat{\texttt{F}}_i}
\newcommand{\Rim}{\texttt{R}_{i, m}}
\newcommand{\Rimhat}{\widehat{\texttt{R}}_{i, m}}
\newcommand{\wspace}{\mathbf{W}}
\newcommand{\nwspace}{W}
\newcommand{\s}{\mathbf{s}}
\newcommand{\e}{\mathbf{e}}
\newcommand{\mstar}{M^\star}
\newcommand{\phihat}{\widehat \phi}
\newcommand{\phihatmstar}{\phihat_{z_i}^{(\mstar)}}
\newcommand{\sign}{ \texttt{sign} }
\renewcommand{\cmt}[1]{\texttt{// #1}}
\newcommand{\wtil}{\widetilde{w}}
\newcommand{\Ualter}{\widetilde{v}}
\newcommand{\condKNN}{\texttt{Cond}_{K\text{NN}}}
\newcommand{\condNegToPos}{\texttt{Cond}_{\text{0to1}}}
\newcommand{\condPosToNeg}{\texttt{Cond}_{\text{1to0}}}
\newif\ifinappendix
\author[1]{Jiachen T. Wang$^{\mathbf{\star}}$}
\author[1]{Prateek Mittal}
\author[2]{Ruoxi Jia$^{\mathbf{\star}}$}
\affil[1]{Princeton University}
\affil[2]{Virginia Tech\protect\\
\texttt{\small \{tianhaowang,pmittal\}@princeton.edu},
\texttt{\small ruoxijia@vt.edu}
}
\date{}
\title{Efficient Data Shapley for Weighted Nearest Neighbor Algorithms}
\begin{document}



\maketitle

\newcommand\blfootnote[1]{%
  \begingroup
  \renewcommand\thefootnote{}\footnote{#1}%
  \addtocounter{footnote}{-1}%
  \endgroup
}

\begin{abstract}
This work aims to address an open problem in data valuation literature concerning the efficient computation of Data Shapley for weighted $K$ nearest neighbor algorithm (WKNN-Shapley). By considering the accuracy of hard-label KNN with discretized weights as the utility function, we reframe the computation of WKNN-Shapley into a counting problem and introduce a quadratic-time algorithm, presenting a notable improvement from $O(N^K)$, the best result from existing literature. We develop a deterministic approximation algorithm that further improves computational efficiency while maintaining the key fairness properties of the Shapley value. Through extensive experiments, we demonstrate WKNN-Shapley's computational efficiency and its superior performance in discerning data quality compared to its unweighted counterpart.\blfootnote{$^\mathbf{\star}$Correspondence to \textbf{Jiachen T. Wang} and \textbf{Ruoxi Jia}.}
\end{abstract}

\section{Introduction}

Data is the backbone of machine learning (ML) models, but not all data is created equally. In real-world scenarios, data often carries noise and bias, sourced from diverse origins and labeling processes \cite{northcutt2021pervasive}. Against this backdrop, data valuation emerges as a growing research field, aiming to quantify the quality of individual data sources for ML training. Data valuation techniques are critical in explainable ML to diagnose influential training instances and in data marketplaces for fair compensation. The importance of data valuation is highlighted by legislative efforts \cite{dashboardact} and vision statements from leading tech companies \cite{openaifuture}. 
For instance, OpenAI listed ``\emph{how to fairly distribute the benefits the AI systems generate}'' as an important question to be explored.


\textbf{Data Valuation via the Shapley Value.} 
Drawing on cooperative game theory, the technique of using the Shapley value for data valuation was pioneered by \cite{ghorbani2019data,jia2019towards}. 
The Shapley value is a renowned solution concept in game theory for fair profit attribution \cite{shapley1953value}. 
In the context of data valuation, individual data points or sources are regarded as ``players'' in a cooperative game, and \emph{Data Shapley} refers to the suite of data valuation techniques that use the Shapley value as the contribution measure for each data owner. 


\textbf{KNN-Shapley.}
Despite offering a principled approach to data valuation with a solid theoretical foundation, the exact calculation of the Shapley value has the time complexity of $O(2^N)$ in general, where $N$ refers to the number of data points/sources. 
Fortunately, \cite{jia2019efficient}  discovered an efficient $O(N \log N)$ algorithm to compute the exact Data Shapley for unweighted K-Nearest Neighbors (KNN), one of the oldest yet still popular ML algorithms. \emph{KNN-Shapley} refers to the technique of assessing data value for any learning algorithms based on KNN's Data Shapley score. Here, KNN serves as a proxy model for the original, perhaps complicated learning algorithm. KNN-Shapley can be applied to large, high-dimensional datasets by calculating the value scores on the features extracted from neural network embeddings. Due to its superior efficiency and effectiveness in discerning data quality, KNN-Shapley has become one of the most popular data valuation techniques \cite{pandl2021trustworthy}.

\textbf{Open question from \cite{jia2019efficient}: efficient computation of weighted KNN-Shapley.} 
While \cite{jia2019efficient} showed unweighted KNN-Shapley can be computed efficiently, they did not develop a practical algorithm for the more general weighted KNN-Shapley (WKNN-Shapley). Despite presenting a polynomial-time algorithm, its computational complexity of $O(N^K)$ becomes impractical even for modest $K$, such as 5.
Closing this efficiency gap is important, especially given the inherent advantages and wider application of weighted KNN. Compared with the unweighted counterpart, weighted KNN considers the distance between data points, assigning varying levels of importance to neighbors based on their proximity to the query. Consequently, the Data Shapley of weighted KNN can potentially better discriminate between low and high-quality data points. Moreover, weighted KNN is applied more widely in practice. For example, recent research discovered weighted KNN's capability to improve language model's performance \cite{khandelwal2019generalization}. 

Our contributions are summarized as follows:

\textbf{Making KNN Configurations ``Shapley-friendly'' (Section \ref{sec:challenge}).} 
Our preliminary investigations suggest that improving the computational efficiency of WKNN-Shapley for soft-label KNN classifiers with continuous weight values (the setting considered in \cite{jia2019efficient}), poses considerable challenges. Consequently, we make necessary modifications to the specific KNN classifiers' configuration and shift our focus to \emph{hard-label} KNN classifiers with \emph{discrete weight values}. The justification for these changes and their practical relevance is detailed in Section \ref{sec:challenge}. In particular, discretizing weights does not change the data value score significantly even with few bits. We emphasize that making proper tweaks to the problem setup is important for developing an efficient Shapley computation algorithm, a strategy frequently adopted in literature \cite{dall2019sometimes}.

\textbf{A quadratic-time algorithm for computing exact WKNN-Shapley \add{(Section \ref{sec:exact-shapley})}.} 
\add{Given the adjusted ``Shapley-friendly'' configurations of the weighted KNN,} we can reframe the Shapley value computation as a counting problem. We develop an algorithm with a quadratic runtime for solving the counting problem and computing the exact WKNN-Shapley, \add{which greatly improves the baseline $O(N^K)$ algorithm.}

\textbf{A \add{subquadratic-time} 
deterministic approximation algorithm that preserves fairness properties \add{(Section \ref{sec:deterministic-approx})}.} 
To further improve the computational efficiency, we propose a deterministic approximation algorithm \add{by making minor changes to the exact WKNN-Shapley implementation.} 
In particular, our approximation algorithm retains the crucial fairness properties of the original Shapley value. 
\textbf{Empirical Evaluations \add{(Section \ref{sec:eval})}.} 
We experiment on benchmark datasets and assess the efficiency and efficacy of our exact and approximation algorithms for WKNN-Shapley. 
Here are the key takeaways: 
\textbf{(1)} Our exact and approximation algorithm for WKNN-Shapley significantly improves computational efficiency compared to the baseline exact algorithm and Monte Carlo approximation, respectively. 
\textbf{(2)} WKNN-Shapley outperforms the unweighted KNN-Shapley in discerning data quality for critical downstream tasks such as detecting mislabeled/noisy data and data selection. \add{Remarkably, the approximated WKNN-Shapley matches the performance of the exact WKNN-Shapley on many benchmark datasets, attributable to its deterministic nature and the preservation of fairness properties.}


\add{
Overall, with proper changes to KNN configurations, we show that WKNN-Shapley can be efficiently calculated and approximated. This facilitates its wider adoption, offering a more effective data valuation method compared to unweighted KNN-Shapley.
}

\section{Preliminaries}

In this section, we formalize the setup of data valuation for ML, and revisit relevant techniques. 


\textbf{Setup \& Goal.} Given a labeled dataset $D := \{ z_i \}_{i=1}^N$ where each data point $z_i := (x_i, y_i)$, data valuation aims to assign a score to each training data point $z_i$, reflecting its importance for the trained ML model's performance. Formally, we seek a score vector $(\phi_{z_i})_{i=1}^N$ where each $\phi_{z_i} \in \R$ represents the ``value'' of $z_i$.



\subsection{Data Shapley}

The Shapley value (SV) \cite{shapley1953value}, originating from game theory, stands out as a distinguished method for equitably distributing total profit among all participating players. Before diving into its definition, we first discuss a fundamental concept: the \emph{utility function}.

\textbf{Utility Function.}
A \emph{utility function} maps an input dataset to a score indicating the utility of the dataset for model training. 
Often, this function is chosen as the validation accuracy of a model trained on the given dataset. That is, given a training set $S$, the utility function $\U(S) := \metric(\A(S))$, where $\A$ represents a learning algorithm that trains a model on dataset $S$, and $\metric(\cdot)$ is a function assessing the model's performance, e.g., its accuracy on a validation set. 

\begin{definition}[Shapley value \cite{shapley1953value}]
\label{def:shapley-value}
Let $\U(\cdot)$ denote a utility function and $D$ represent a training set of $N$ data points. The Shapley value, $\phi_z\left( \U \right)$, assigned to a data point $z \in D$ is defined as 
$
\phi_z\left( \U \right) := \frac{1}{N} \sum_{k=1}^{N} {N-1 \choose k-1}^{-1} \sum_{S \subseteq D_{-z}, |S|=k-1} \left[ \U(S \cup \{z\}) - \U(S) \right]
$ where $D_{-z} = D \setminus \{z\}$. 
\end{definition}

In simple terms, the Shapley value is a weighted average of the \emph{marginal contribution} $\U(S \cup \{z\}) - \U(S)$, i.e., the utility change when the point $z$ is added to different $S$s. For simplicity, we often write $\phi_z$ when the utility function is clear from the context. 
The Shapley value uniquely satisfies several important axioms, including key fairness requirements like the \emph{null player} and \emph{symmetry} axioms, lending justification to its popularity. See Appendix \ref{appendix:related-works} for detailed axiom definitions. 


\subsection{KNN-Shapley}

A well-known challenge of using the Shapley value is that its exact calculation is computationally infeasible in general, as it usually requires evaluating $\U(S)$ for all possible subsets $S \subseteq D$. A surprising result in \cite{jia2019efficient} showed that when the learning algorithm $\A$ is \emph{unweighted KNN}, there exists a highly efficient algorithm for computing its exact Data Shapley score. 

\textbf{K Nearest Neighbor Classifier.} 
Given a validation data point $\zval = (\xval, \yval)$ and a distance metric $d(\cdot, \cdot)$, we sort the training set $D = \{z_i = (x_i, y_i)\}_{i=1}^N$ according to their distance to the validation point $d(x_i, \xval)$ in non-descending order. Throughout the paper, we assume that $d(x_i, \xval) \le d(x_j, \xval)$ for any $i \le j$ unless otherwise specified. A KNN classifier makes a prediction for the query $\xval$ based on the (weighted) majority voting among $\xval$'s $K$ nearest neighbors in the training set. 
\textbf{Weight of data point:} in KNN, each data point $z_i$ is associated with a weight $w_i$. The weight is usually determined based on the distance between $x_i$ and the query $\xval$. For example, a popular weight function is RBF kernel $w_i := \exp(-d(x_i, \xval))$. If $w_i$ is the same for all $z_i$s, it becomes \emph{unweighted} KNN.


\cite{jia2019efficient} considers the utility function for the weighted, \emph{soft-label} KNN: 
\begin{align}
\U(S; \zval) &:= \frac{
\sum_{j=1}^{\min(K, |S|)} w_{\alpha_{\xval}^{(S, j)}} \ind \left[y_{\alpha_{\xval}^{(S, j)}} = \yval \right]
}{ \sum_{j=1}^{\min(K, |S|)} w_{\alpha_{\xval}^{(S, j)}} }
\label{eq:util-weighted-softlabel}
\end{align}
where $\alpha_{\xval}^{(S, j)}$ denotes the index (among $D$) of $j$th closest data point in $S$ to $\xval$. 
\add{``Soft-label'' refers to the classifiers that output the confidence scores.} The main result in \cite{jia2019efficient} shows that for \emph{unweighted} KNN, we can compute the \emph{exact} Shapley value $\phi_{z_i} \left(\U(\cdot; \zval) \right)$ for \emph{all} $z_i \in D$ within a total runtime of $O(N \log N)$ (see Appendix \ref{appendix:knn-background} for details). 

\begin{remark}
\add{In practice, the model performance is assessed based on a validation set $\Dval$. After computing $\phi_{z_i} \left(\U(\cdot; \zval) \right)$ for each $\zval \in \Dval$, one can compute the Shapley value corresponding to the utility function on the full validation set $\U(S; \Dval) := \sum_{\zval \in \Dval} \U(S; \zval)$ by simply taking the sum $\phi_{z_i} \left(\U(\cdot; \Dval) \right) = \sum_{\zval \in \Dval} \phi_{z_i} \left(\U(\cdot; \zval) \right)$ due to the \emph{linearity} property of the Shapley value.}
\end{remark}

\begin{remark}
\add{In alignment with the existing literature \cite{jia2019efficient, wang2023threshold}, our discussion of the time complexity of KNN-Shapley refers to the total runtime needed to calculate \emph{all} data value scores $\left(\phi_{z_1}(\U(\cdot; \zval)), \ldots, \phi_{z_N}(\U(\cdot; \zval))\right)$, given that standard applications of data valuation, such as profit allocation and bad data detection, all require computing the data value scores for \emph{all} data points in the training set.
Furthermore, the stated runtime is with respect to $\U(\cdot; \zval)$, and the overall runtime with respect to $\U(\cdot; \Dval)$ will be multiplied by the size of validation set $\Dval$. 
Runtime is typically presented this way because 
SV computations with respect to different $\U(\cdot; \zval)$ are independent, readily benefit from parallel computing.} 
\label{remark:runtime}
\end{remark}

Since its introduction, KNN-Shapley has rapidly gained popularity in data valuation for its efficiency and effectiveness, and is being advocated as the \emph{`most practical technique for effectively evaluating large-scale data'} in recent studies \cite{pandl2021trustworthy, karlavs2022data}.



\subsection{Baseline Algorithm for Computing and Approximating WKNN-Shapley}
\label{sec:baseline-algorithm}

\cite{jia2019efficient} developed an efficient $O(N \log N)$ algorithm to calculate the exact unweighted KNN-Shapley. However, when it comes to the more general weighted KNN-Shapley, only an $O(N^K)$ algorithm is given. 
While still in polynomial time (if $K$ is considered a constant), the runtime is impractically large even for small 
$K$ (e.g., 5). 
Here, we review the high-level idea of the baseline algorithms from \cite{jia2019efficient}. 



\textbf{An $O(N^K)$ algorithm for exact WKNN-Shapley computation.} 
From Definition \ref{def:shapley-value}, the Shapley value for $z_i$ is a weighted average of the \emph{marginal contribution (MC)} $\U(S \cup \{z_i\}) - \U(S)$; hence, we only need to study those $S$ whose utility might change due to the inclusion of $z_i$. For KNN, those are the subsets $S$ where $z_i$ is within the $K$ nearest neighbors of $\xval$ after being added into $S$. Note that for KNN, the utility of any $S$ only depends on the $K$ nearest neighbors of $\xval$ in $S$. Given that there are only $\sum_{j=0}^K {N \choose j}$ unique subsets of size $\le K$, we can simply query the MC value $\U(S \cup \{z_i\}) - \U(S)$ for all $S$ of size $\le K$. For any larger $S$, the MC must be the same as its subset of $K$ nearest neighbors. We can then compute the Shapley value as a weighted average of these MC values by counting the number of subsets that share the same MC values through simple combinatorial analysis. Such an algorithm results in the runtime of $\sum_{j=0}^K {N \choose j} = O(N^K)$. The algorithm details can be found in Appendix \ref{appendix:knn-background}. 

\textbf{Monte Carlo Approximation.} 
Given the large runtime of this exact algorithm, \cite{jia2019efficient} further proposes an approximation algorithm based on Monte Carlo techniques. However, Monte Carlo-based approximation is randomized and may not preserve the fairness property of the exact Shapley value. Additionally, the sample complexity of the Monte Carlo estimator is derived from concentration inequalities, which, while suitable for asymptotic analysis, may provide loose bounds in practical applications.


\section{Making KNN Configurations Shapley-friendly}
\label{sec:challenge}

We point out the major challenges associated with directly improving the computational efficiency for the soft-label KNN configuration considered in \cite{jia2019efficient}, and propose proper changes that enable more efficient algorithms for computing WKNN-Shapley. 

\textbf{Challenge \#1: weights normalization term.} The key principle behind the $O(N \log N)$ algorithm for unweighted KNN-Shapley from \cite{jia2019efficient} is that, \add{the MC can only take few distinct values. 
For example, for any $|S| \ge K$, we have 
$
\U(S \cup \{z_i\}) - \U(S) = \frac{1}{K} \left( \ind[y_{i} = \yval] - \ind[y_{\alpha_{\xval}^{(S, K)}} = \yval] \right)
$.
To avoid the task of evaluating $\U(S)$ for all $S \subseteq D$, one can just count the subsets $S \subseteq D \setminus \{z_i\}$ such that $z_i$ is among the $K$ nearest neighbors of $\xval$ in $S \cup \{z_i\}$, as well as the subsets share the same $K$th nearest neighbor to $\zval$. 
However, for weighted soft-label KNN with the utility function in (\ref{eq:util-weighted-softlabel}), there is little chance that any of two 
$\U(S_1 \cup \{z_i\})-\U(S_1)$ and 
$\U(S_2 \cup \{z_i\})-\U(S_2)$
can have the same value due to the weights normalization term $\left(\sum_{j=1}^{K} w_{\alpha_{\xval}^{(S, j)}}\right)^{-1}$ (note that this term is $1/K$, a constant, for unweighted setting).} 
\textbf{Solution \#1: hard-label KNN.} 
In this work, we instead consider the utility function for weighted \emph{hard-label} KNN. ``Hard-label'' refers to the classifiers that output the predicted class instead of the confidence scores (see (\ref{eq:util}) in Section \ref{sec:shapley-for-binary}). In practice, user-facing applications usually only output a class prediction instead of the entire confidence vector. More importantly, hard-label KNN's prediction only depends on the weight comparison between different classes, and hence its utility function does not have a normalization term. 

\textbf{Challenge \#2: continuous weights.} If the weights are on the continuous space, there will be infinitely many possibilities of weighted voting scores of the $K$ nearest neighbors. 
This makes it difficult to analyze which pairs of $S_1, S_2$ share the same MC value. 
\textbf{Solution \#2: discretize weights.} 
Therefore, we consider a more tractable setting where the weights lie in a discrete space. Such a change is reasonable since the weights are stored in terms of finite bits (and hence in the discrete space) in practice. 
Moreover, rounding is a deterministic operation and does not reverse the original order of weights.
In Appendix \ref{appendix:error-disc}, we show that the Shapley value computed based on the discrete weights has the same ranking order compared with the Shapley value computed on the continuous weights (it might create ties but will not reverse the original order). In Appendix \ref{sec:eval-discretization}, we empirically verify that weight discretization does not cause a large deviation in the Shapley value. 

\add{We emphasize that, proper adjustments to the underlying utility function are important for efficient Shapley computation, and such a strategy is frequently applied in game theory literature \cite{dall2019sometimes}.}




\section{Data Shapley for Weighted KNN}
\label{sec:shapley-for-binary}

\add{In this section, we develop efficient solutions for computing and approximating Data Shapley scores for weighted, hard-label KNN binary classifiers, where the weight values used in KNN are discretized. 
Without loss of generality, in this paper we assume every weight $w_i \in [0, 1]$.\footnote{
If it does not hold one can simply normalize the weights to $[0, 1]$ and the KNN classifier remains the same.} 
We use $\wspace$ to denote the discretized space of $[0, 1]$, where we create $2^b$ equally spaced points within the interval when we use $b$ bits for discretization. We denote $\nwspace := |\wspace| = 2^b$ the size of the weight space. 
}

\textbf{Utility Function for Weighted Hard-Label KNN Classifiers.} 
The utility function of weighted hard-label KNN, i.e., the correctness of weighted KNN on the queried example $\zval$, can be written as 
\begin{equation}
\begin{aligned}
\U(S; \zval) = \ind \bigg[ \yval \in \argmax_{ c \in \C }  \sum_{j=1}^{\min(K, |S|)} w_{\alpha_{\xval}^{(S, j)}} \times \ind[y_{\alpha_{\xval}^{(S, j)}} = c] \bigg]
\end{aligned}
\label{eq:util}
\end{equation}
where $\C = \{1, \ldots, C\}$ is the space of classes, and $C$ is the number of classes\footnote{If multiple classes have the same top counts, we take the utility as 1 as long as $\yval$ is among the majority classes.}. 
We omit the input of $\zval$ and simply write $\U(S)$ when the validation point is clear from the context. 
For KNN binary classifier,
we can rewrite the utility function in a more compact form:
\begin{equation}
\begin{aligned}
\U(S) = \ind \left[ \sum_{j=1}^{\min(K, |S|)} \wtil_{\alpha_{\xval}^{(S, j)}} \ge 0 \right]~\text{where}~\wtil_{j} := 
\begin{cases}
    w_{j} & y_{j} = \yval \\
    -w_{j} & y_{j} \ne \yval \\
\end{cases}
\label{eq:util-binary}
\end{aligned}
\end{equation}
\add{For ease of presentation, we present the algorithms for KNN binary classifier here, and defer the extension to multi-class classifier to Appendix \ref{appendix:multiclass}.}

\subsection{Exact WKNN-Shapley Calculation}
\label{sec:exact-shapley}

\subsubsection{Computing SV is a Counting Problem}
\label{sec:SV-counting-problem}

Given that the Shapley value is a weighted average of the marginal contribution $\U(S \cup \{z_i\}) - \U(S)$, we first study the expression of $\U(S \cup \{z_i\}) - \U(S)$ for a fixed subset $S \subseteq D \setminus \{z_i\}$ with the utility function in (\ref{eq:util-binary}). 

\begin{theorem} 
\label{thm:marginal-contri}
For any data point $z_i \in D$ and any subset $S \subseteq D \setminus \{z_i\}$, the marginal contribution has the expression as follows:
\begin{equation}
\begin{aligned}
\U(S \cup \{z_i\}) - \U(S) =
\begin{cases} 
1 & \text{if~}y_i = \yval, \condKNN, \condNegToPos \\
-1 & \text{if~}y_i \ne \yval, \condKNN, \condPosToNeg \\
0 & \text{Otherwise}
\end{cases}
\end{aligned}
\end{equation}
where 
\begin{equation}
\begin{aligned}
\condKNN &:= z_i~\text{is within}~K~\text{nearest neighbors of}~\xval~\text{among}~S \cup \{z_i\} \\
\condNegToPos &:= 
\begin{cases}
    \sum_{z_j \in S} \wtil_j \in [-\wtil_i, 0) & \text{if}~|S| \le K-1 \\
    \sum_{j=1}^{K-1} \wtil_{\alpha_{\xval}^{(S, j)}} \in \left[ -w_i,  -\wtil_{\alpha_{\xval}^{(S, K)}}\right) & \text{if}~|S| \ge K
\end{cases} \\
\condPosToNeg &:= 
\begin{cases}
    \sum_{z_j \in S} \wtil_j \in [0, -\wtil_i) & \text{if}~|S| \le K-1 \\
    \sum_{j=1}^{K-1} \wtil_{\alpha_{\xval}^{(S, j)}} \in \left[-\wtil_{\alpha_{\xval}^{(S, K)}}, -w_i\right) & \text{if}~|S| \ge K
\end{cases}
\end{aligned}
\nonumber
\end{equation}
\end{theorem}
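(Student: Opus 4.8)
The plan is to expand the marginal contribution directly from the compact utility in (\ref{eq:util-binary}), using the fact that $\U(T)$ depends only on the signed weights $\wtil$ of the $\min(K,|T|)$ nearest neighbors of $\xval$ in $T$. First I would dispose of the case in which $\condKNN$ fails: if $z_i$ is not among the $K$ nearest neighbors of $\xval$ in $S \cup \{z_i\}$, then necessarily $|S| \ge K$ (for $|S| \le K-1$ every point of $S \cup \{z_i\}$ is automatically a neighbor), and the set of $K$ nearest neighbors of $\xval$ is the same for $S$ and for $S \cup \{z_i\}$. Hence $\U(S \cup \{z_i\}) = \U(S)$ and the marginal contribution is $0$, which is the ``Otherwise'' branch.

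Assuming $\condKNN$, I would split on $|S|$. When $|S| \le K-1$, set $A := \sum_{z_j \in S} \wtil_j$; then $\U(S) = \ind[A \ge 0]$ and $\U(S \cup \{z_i\}) = \ind[A + \wtil_i \ge 0]$, so the marginal contribution is $\ind[A + \wtil_i \ge 0] - \ind[A \ge 0]$. Because $\wtil_i = w_i > 0$ when $y_i = \yval$ and $\wtil_i = -w_i < 0$ when $y_i \ne \yval$, this difference of indicators can be nonzero in only one direction; a short case split on the sign of $\wtil_i$ shows it equals $+1$ exactly when $A \in [-\wtil_i, 0)$ and $-1$ exactly when $A \in [0, -\wtil_i)$, which are the $|S| \le K-1$ branches of $\condNegToPos$ and $\condPosToNeg$.

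The crux is the regime $|S| \ge K$ (with $\condKNN$). Here I would first establish the combinatorial identity that the $K$ nearest neighbors of $\xval$ in $S \cup \{z_i\}$ are precisely $\{z_i\}$ together with the $K-1$ nearest neighbors of $\xval$ in $S$, so that inserting $z_i$ evicts exactly the former $K$-th neighbor $z_{\alpha_{\xval}^{(S,K)}}$. Writing $B := \sum_{j=1}^{K-1} \wtil_{\alpha_{\xval}^{(S,j)}}$, this yields $\U(S \cup \{z_i\}) = \ind[B + \wtil_i \ge 0]$ and $\U(S) = \ind[B + \wtil_{\alpha_{\xval}^{(S,K)}} \ge 0]$, so the marginal contribution becomes the comparison of two thresholds, $\ind[B + \wtil_i \ge 0] - \ind[B + \wtil_{\alpha_{\xval}^{(S,K)}} \ge 0]$.

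The main obstacle will be ruling out a flip in the wrong direction in this last difference, and this is exactly where the KNN structure enters. Because $z_i$ joins the top $K$ and evicts $z_{\alpha_{\xval}^{(S,K)}}$, $z_i$ is at least as close to $\xval$ as $z_{\alpha_{\xval}^{(S,K)}}$; since the weight is a non-increasing function of distance and global indices are sorted by distance, this forces $w_i \ge w_{\alpha_{\xval}^{(S,K)}} \ge \wtil_{\alpha_{\xval}^{(S,K)}}$. This monotonicity orders the two thresholds correctly, so the difference cannot take the opposite sign and collapses to $+1$ precisely on $[-\wtil_i, -\wtil_{\alpha_{\xval}^{(S,K)}})$ when $y_i = \yval$ and to $-1$ precisely on $[-\wtil_{\alpha_{\xval}^{(S,K)}}, -\wtil_i)$ when $y_i \ne \yval$, matching the $|S| \ge K$ branches of $\condNegToPos$ and $\condPosToNeg$. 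Collecting the three regimes then recovers the piecewise expression in the statement.
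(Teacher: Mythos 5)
Your proposal is correct and follows essentially the same route as the paper's proof: dispose of the case where $\condKNN$ fails, then split on $|S| \le K-1$ versus $|S| \ge K$ and read off the marginal contribution as a difference of threshold indicators on the sum of signed weights. If anything, your argument is slightly more complete than the paper's, since you explicitly invoke the monotonicity of weights in distance to rule out a flip in the wrong direction in the $|S| \ge K$ case (e.g., a utility decrease when $y_i = \yval$), a point the paper's proof leaves implicit by only stating the ``only if'' directions.
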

In words, the condition $\condKNN$ means that $z_i$ should be among the $K$ nearest neighbors to the query sample when it is added to $S$. The conditions $\condNegToPos$ and $\condPosToNeg$ cover situations where adding $z_i$ to the set $S$ changes the prediction of the weighted KNN classifiers. 
\add{In greater detail, $\condNegToPos$ captures the condition for which incorporating $z_i$ shifts the ``\emph{effective sum of signed weights $\wtil$}'' from a negative to a non-negative value, thereby incrementing the utility from $0$ to $1$. $\condPosToNeg$ can be interpreted similarly.} 
From Theorem \ref{thm:marginal-contri} and the formula of the Shapley value (Definition \ref{def:shapley-value}), we can reframe the problem of computing hard-label WKNN-Shapley as a counting problem. Specifically, this involves counting the quantity defined as follows:

\begin{definition}
\label{def:Gil}
Let $\Gil$ denote the count of subsets $S \subseteq D \setminus {z_i}$ of size $\ell$ that satisfy \textbf{(1)} $\condKNN$, and \textbf{(2)} $\condNegToPos$ if $y_i = \yval$, or $\condPosToNeg$ if $y_i \ne \yval$. 
\end{definition}

\begin{theorem}
\label{thm:shapley-counting}
For a weighted, hard-label KNN binary classifier using the utility function given by (\ref{eq:util-binary}), the Shapley value of a data point $z_i$ can be expressed as:
\begin{align}
    \phi_{z_i} = 
    \frac{ 2 \ind[y_i = \yval] - 1 }{N} \sum_{\ell=0}^{N-1} {N-1 \choose \ell}^{-1} \Gil 
    \label{eq:shapley-in-gil}
\end{align}
\end{theorem}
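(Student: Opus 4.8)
The plan is to substitute the marginal-contribution formula from Theorem \ref{thm:marginal-contri} directly into the Shapley value definition and then recognize the resulting inner sum as a subset count, i.e., as $\Gil$. First I would rewrite Definition \ref{def:shapley-value} by reindexing the outer sum with $\ell := k - 1$ so that $\ell$ ranges over $0, \ldots, N-1$:
\[
\phi_{z_i} = \frac{1}{N} \sum_{\ell=0}^{N-1} {N-1 \choose \ell}^{-1} \sum_{\substack{S \subseteq D \setminus \{z_i\} \\ |S| = \ell}} \left[ \U(S \cup \{z_i\}) - \U(S) \right].
\]
This isolates, for each cardinality $\ell$, an inner sum of marginal contributions taken over all subsets $S$ of that size, which is exactly the object Theorem \ref{thm:marginal-contri} lets us evaluate.

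Next I would split into the two label cases and invoke Theorem \ref{thm:marginal-contri}. When $y_i = \yval$, the marginal contribution equals $+1$ precisely on those subsets $S$ satisfying $\condKNN$ and $\condNegToPos$, and $0$ otherwise; hence the inner sum collapses to the number of such subsets, which is exactly $\Gil$ by Definition \ref{def:Gil}. When $y_i \ne \yval$, the marginal contribution equals $-1$ precisely on those subsets satisfying $\condKNN$ and $\condPosToNeg$, and $0$ otherwise; since $\Gil$ is defined to count the $\condPosToNeg$ subsets in this label regime, the inner sum equals $-\Gil$.

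Finally I would unify the two cases. The only difference between them is the global sign, $+1$ when $y_i = \yval$ and $-1$ when $y_i \ne \yval$, which is precisely $2\ind[y_i = \yval] - 1$. Pulling this factor out front and replacing the inner sum by $\Gil$ in both cases yields the claimed identity (\ref{eq:shapley-in-gil}).

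I do not anticipate a genuine mathematical obstacle here, since the heavy lifting is already done by Theorem \ref{thm:marginal-contri}; the argument is essentially careful bookkeeping. The one point that requires attention is confirming that Definition \ref{def:Gil} — which bundles the choice between $\condNegToPos$ and $\condPosToNeg$ into a single condition keyed on whether $y_i = \yval$ — matches the conditions appearing in the two branches of the marginal contribution, so that a single \emph{unsigned} count $\Gil$ suffices for both cases and all sign information is carried by the prefactor $2\ind[y_i = \yval] - 1$.
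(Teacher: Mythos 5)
Your proposal is correct and matches the paper's reasoning: the paper gives no separate proof of this theorem, treating it as an immediate consequence of substituting the marginal-contribution characterization (Theorem \ref{thm:marginal-contri}) into the reindexed Shapley formula and reading off the inner sum as $\pm\Gil$ via Definition \ref{def:Gil}, which is exactly your argument. Your closing check that the sign is fully absorbed by the prefactor $2\ind[y_i=\yval]-1$ while $\Gil$ stays unsigned is the right (and only) point needing care.
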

\add{Figure \ref{fig:counting-problem} illustrates the counting problem we try to solve here.} 

\begin{figure}[t]
    \centering
    \centering
    \includegraphics[width=0.5\columnwidth]{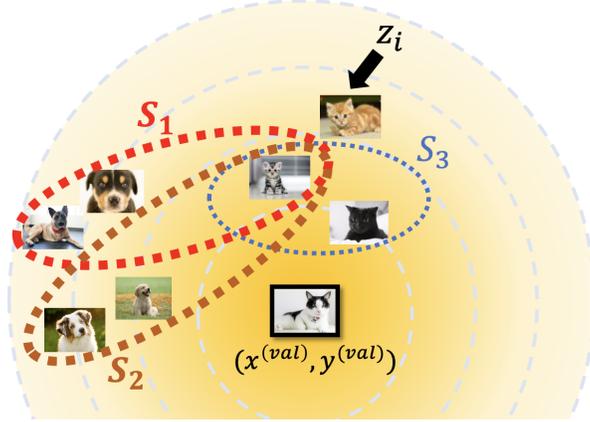}
    \caption{
    Illustration of the subsets targeted in the counting problem. 
    When $K=3$, both $S_1$ and $S_2$ have a utility of 0 as both of them contain 2 dogs and 1 cat. 
    Adding $z_i$ to $S_1$ and $S_2$ alters the $3$ nearest neighbors to the query image $\xval$, which now contains 1 dog and 2 cats, raising the utility to 1. 
    In contrast, $S_3$'s utility remains unchanged with the addition of $z_i$ since it solely contains cat images. To compute WKNN-Shapley of $z_i$, we count the subsets $S$ where adding $z_i$ changes its utility, as seen with $S_1$ and $S_2$.
    }
    \label{fig:counting-problem}
\end{figure}

\subsubsection{Dynamic Programming Solution for Computing $\Gil$}
\label{sec:dynamic-programming}

The multiple, intricate conditions wrapped within $\Gil$'s definition can pose a formidable challenge for direct and efficient counting. 
The main rationale behind our solution is to break down the complex counting problems into smaller, more manageable subproblems, thereby making them amenable to algorithmic solutions like dynamic programming. 
Before delving into the specifics of the algorithm, we introduce an intermediary quantity, $\Fi$, that becomes the building block of our dynamic programming formulation.


\begin{definition}
\label{def:Fi-binary}
Let $\Fi\left[m, \ell, s\right]$ denote the count of subsets $S \subseteq D \setminus \{z_i\}$ of size $\ell$ that satisfy \textbf{(1)} $\condKNN$, as well as the following conditions: 
\textbf{(2)} Within $S$, the data point $x_m$ is the $\min(\ell, K)$-th closest to the query example $\xval$, 
\textbf{(3)} $\sum_{j=1}^{\min(\ell, K-1)} \wtil_{\alpha_{\xval}^{(S, j)}} = s$. 
\end{definition}

We can relate this auxiliary quantity to our desired $\Gil$ as follows:

\begin{theorem}[Relation between $\Gil$ and $\Fi$] 
\label{thm:Gil-and-Fi}
For $y_i = \yval$, we can compute $\Gil$ from $\Fi$ as follows:
\begin{equation}
\begin{aligned}
\Gil = 
\begin{cases} 
\sum_{m \in [N] \setminus i} \sum_{s \in [-\wtil_i, 0)} \Fi\left[m, \ell, s\right] & \text{for } \ell \le K-1, \\
\sum_{m \in [N] \setminus i} \sum_{s \in [-\wtil_i, -\wtil_m)} \Fi\left[m, \ell, s\right] & \text{for } \ell \ge K.
\end{cases}
\end{aligned}
\end{equation}
For $y_i \ne \yval$, we have:
\begin{equation}
\begin{aligned}
\Gil = 
\begin{cases} 
\sum_{m \in [N] \setminus i} \sum_{s \in [0, -\wtil_i)} \Fi\left[m, \ell, s\right] & \text{for } \ell \le K-1, \\
\sum_{m \in [N] \setminus i} \sum_{s \in [-\wtil_m, -\wtil_i)} \Fi\left[m, \ell, s\right] & \text{for } \ell \ge K.
\end{cases}
\end{aligned}
\end{equation}
\end{theorem}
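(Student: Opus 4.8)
The plan is to prove each of the four identities by a disjoint-partition (double-counting) argument: I will exhibit the quantities $\Fi[m,\ell,s]$ as the blocks of a partition of the family of subsets enumerated by $\Gil$, so that $\Gil$ is recovered by summing $\Fi[m,\ell,s]$ over the admissible pairs $(m,s)$.

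First I would fix the index $i$ and a size $\ell \ge 1$ (the degenerate case $\ell = 0$ concerns only $S = \emptyset$ and can be checked directly or absorbed into the base case of the recursion). Take any $S \subseteq D \setminus \{z_i\}$ of size $\ell$ satisfying $\condKNN$. Since $|S| = \ell \ge \min(\ell, K)$ and the training points are sorted by distance to $\xval$, the set $S$ has a unique $\min(\ell,K)$-th nearest neighbor; write $m \in [N]\setminus i$ for its index, so $\alpha_{\xval}^{(S,\min(\ell,K))} = m$. The partial signed sum $s := \sum_{j=1}^{\min(\ell,K-1)} \wtil_{\alpha_{\xval}^{(S,j)}}$ is likewise determined by $S$. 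This assigns to each such $S$ a unique pair $(m,s)$, and by Definition \ref{def:Fi-binary} the subsets mapped to $(m,s)$ are precisely those counted by $\Fi[m,\ell,s]$. Hence $\Gil = \sum_m \sum_s \Fi[m,\ell,s]$, where the sum is restricted to pairs $(m,s)$ whose subsets additionally obey $\condNegToPos$ (when $y_i = \yval$) or $\condPosToNeg$ (when $y_i \ne \yval$). The remaining work is to show that this extra requirement is exactly the range restriction on $(m,s)$ written in the statement.

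Next I would translate the marginal-contribution condition into a constraint on $(m,s)$, handling the two regimes separately. For $\ell \le K-1$ we have $\min(\ell,K-1) = \ell$, so $s = \sum_{z_j \in S}\wtil_j$ is the full signed sum; the relevant branch of $\condNegToPos$ / $\condPosToNeg$ then reads $s \in [-\wtil_i, 0)$ (resp. $s \in [0, -\wtil_i)$), a condition on $s$ alone, leaving $m$ unconstrained and matching the first line of each case. For $\ell \ge K$ we have $\min(\ell,K)=K$, and condition (2) of $\Fi$ forces $\alpha_{\xval}^{(S,K)} = m$, i.e. $\wtil_{\alpha_{\xval}^{(S,K)}} = \wtil_m$, while $s = \sum_{j=1}^{K-1}\wtil_{\alpha_{\xval}^{(S,j)}}$. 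Substituting these into the $|S|\ge K$ branch of $\condNegToPos$ / $\condPosToNeg$, and using the sign convention $\wtil_i = w_i$ if $y_i = \yval$ and $\wtil_i = -w_i$ otherwise, the interval becomes $s \in [-\wtil_i, -\wtil_m)$ (resp. $s \in [-\wtil_m, -\wtil_i)$), which is exactly the $s$-range in the second line of each case. Summing the block sizes over the admissible $(m,s)$ then yields the four stated formulas.

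The hard part will be the bookkeeping in the $\ell \ge K$ regime: I must argue that when $z_i$ enters the top $K$ of $S \cup \{z_i\}$ it displaces precisely the old $K$-th neighbor $\alpha_{\xval}^{(S,K)} = m$, so that the post-insertion top-$K$ signed sum is $s + \wtil_i$ and the threshold for a sign flip turns into the $m$-dependent endpoint $-\wtil_m$. The interval computation itself is already supplied by Theorem \ref{thm:marginal-contri}; the care needed here is purely in identifying $\wtil_{\alpha_{\xval}^{(S,K)}}$ with $\wtil_m$ and in tracking the sign convention that converts the $\pm w_i$ thresholds into $\mp \wtil_i$, so that the ranges align with Definition \ref{def:Fi-binary}.
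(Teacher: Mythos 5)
Your proof is correct and is exactly the argument the paper intends: the paper gives no separate proof of this theorem, treating it as immediate from Definitions \ref{def:Gil} and \ref{def:Fi-binary} combined with Theorem \ref{thm:marginal-contri}, and your partition of the subsets counted by $\Gil$ into the blocks $\Fi[m,\ell,s]$ indexed by the $\min(\ell,K)$-th neighbor $m$ and the partial signed sum $s$ makes that step explicit. One small caution on the last step: for $y_i \ne \yval$ and $\ell \ge K$, the endpoint $-w_i$ written in the paper's $\condPosToNeg$ equals $\wtil_i$ rather than $-\wtil_i$, so your interval $[-\wtil_m, -\wtil_i)$ follows not from a literal substitution into Theorem \ref{thm:marginal-contri} but from re-deriving the sign-flip threshold directly from the utility function (which confirms the range stated in Theorem \ref{thm:Gil-and-Fi} and indicates a sign typo in that branch of Theorem \ref{thm:marginal-contri}).
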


The auxiliary quantity $\Fi$ thus serves as a pivot, allowing us to explore the search space of possible subsets $S$ more systematically. We next exploit the computational advantage of $\Fi$. Specifically, $\Fi$ can be conveniently computed with (recursive) formulas, which further enables us to compute $\Gil$ with reduced computational demand. 

\begin{theorem}[simplified version]
\label{thm:Fi-recursive}
For $\ell \le K-1$, $\Fi[m, \ell, s]$ can be computed from $\Fi[t, \ell, \cdot]$ with $t \le m-1$. 
For $\ell \ge K$, $\Fi[m, \ell, s]$ can be computed from $\Fi[t, K-1, \cdot]$ with $t \le m-1$.
\end{theorem}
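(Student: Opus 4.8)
\textbf{Proof proposal for Theorem \ref{thm:Fi-recursive}.}

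The plan is to establish the two recursive relations by conditioning on the identity of the $\min(\ell, K)$-th nearest neighbor to $\xval$, which by the ordering convention is the data point with the largest index among those included in the relevant ``active'' portion of $S$. First I would treat the case $\ell \le K-1$. Here, by Definition \ref{def:Fi-binary}, the subset $S$ has size $\ell \le K-1$, so \emph{all} of $S$ lies within the $K$ nearest neighbors (recall $z_i$ occupies one slot), and $x_m$ is the farthest point of $S$ from $\xval$, i.e.\ the point with the maximal index in $S$. The idea is to peel off $x_m$: any $S$ counted by $\Fi[m, \ell, s]$ decomposes uniquely into $\{x_m\}$ together with a subset $S' \subseteq \{z_1, \ldots, z_{m-1}\} \setminus \{z_i\}$ of size $\ell - 1$ whose own farthest point $x_t$ has index $t \le m-1$. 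Since $\wtil_m$ is added to the signed-weight sum, $S'$ must have signed-weight sum $s - \wtil_m$ (using that for $\ell-1 \le K-2 \le K-1$ the running sum in condition (3) includes every element). This yields a sum over $t \le m-1$ of $\Fi[t, \ell-1, s - \wtil_m]$, establishing that $\Fi[m, \ell, \cdot]$ is computable from $\Fi[t, \ell-1, \cdot]$ with $t \le m-1$; I would then note the base case $\ell = 0$ (or $\ell=1$) is immediate by direct counting.

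Next I would handle $\ell \ge K$. The key structural fact here is that when $|S| = \ell \ge K$, the utility and the marginal-contribution conditions depend \emph{only} on the $K$ nearest neighbors of $\xval$ in $S$; the remaining $\ell - K$ points sit strictly farther away and are unconstrained except that they must be drawn from the points farther than the $K$-th neighbor. So I would fix the $K$-th nearest neighbor to be $x_m$ and the first $K-1$ neighbors to form a configuration counted (after removing the constraint that $x_m$ is exactly $K$-th) by $\Fi[t, K-1, \cdot]$ for the appropriate farthest-index $t$, then multiply by the number of ways to choose the remaining $\ell - K$ points from those with index exceeding $m$. This decouples into the $K$-nearest-neighbor structure (captured by $\Fi[\cdot, K-1, \cdot]$) times a binomial factor $\binom{N - m}{\ell - K}$ counting the ``filler'' points, and summing over $t \le m-1$ gives the claimed dependence on $\Fi[t, K-1, \cdot]$.

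The main obstacle I anticipate is bookkeeping the $\condKNN$ condition consistently across the two regimes and correctly aligning the running signed-weight sum $s = \sum_{j=1}^{\min(\ell, K-1)} \wtil_{\alpha_{\xval}^{(S,j)}}$ with the peeling step. Specifically, when $\ell = K$ the point $x_m$ is the $K$-th neighbor and is therefore \emph{excluded} from the sum defining $s$ (which runs only to $K-1$), whereas for $\ell \le K-1$ the farthest point $x_m$ \emph{is} included in $s$; getting this boundary right is where the recursion's two branches genuinely differ and is the delicate part. A secondary subtlety is verifying that the decomposition ``$S = \{x_m\} \cup S'$'' is a genuine bijection — that every valid $S'$ with farthest index $t \le m-1$ extends to a valid $S$ and vice versa, including that $z_i \notin S'$ and that appending $x_m$ preserves condition (2) — so I would state this bijection explicitly and check both directions. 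Since the statement is labeled ``simplified version,'' I would keep the exposition at the level of these structural relations and defer the exact recursive formulas with their precise summation ranges to the appendix.
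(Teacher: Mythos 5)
Your proposal is correct and follows essentially the same route as the paper's proof: the case $\ell \le K-1$ is handled by noting $\condKNN$ is automatic and partitioning/peeling on the $(\ell-1)$-th closest point, and the case $\ell \ge K$ by fixing the first $K-1$ neighbors (a configuration counted by $\Fi[t,K-1,\cdot]$ with $t\le m-1$) and multiplying by $\binom{N-m}{\ell-K}$ for the farther filler points. Your explicit attention to the boundary between the two regimes (whether $x_m$ enters the running sum $s$) and to the vanishing of the count when $z_i$ would be pushed out of the $K$ nearest neighbors is exactly the bookkeeping the paper's terse proof leaves implicit; as a minor note, your use of $\wtil_m$ rather than $w_m$ in the shift $s-\wtil_m$ is the consistent choice given Definition \ref{def:Fi-binary}.
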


Leveraging the results from Theorem \ref{thm:Fi-recursive}, a direct method for calculating $\Gil$ for all $\ell \ge 1$ is as follows: we first use a recursive formula to compute $\Fi[\cdot, \ell, \cdot]$ for $\ell \le K-1$, and then use an explicit formula to compute $\Fi[\cdot, \ell, \cdot]$ for $\ell \ge K$. 
With $\Fi$ being computed, we apply Theorem \ref{thm:Gil-and-Fi} to compute $\Gil$. This direct approach renders an $O(N^3)$ runtime, as it necessitates computing $\Fi[m, \ell, \cdot]$ for each of $i, m,$ and $\ell$ in the range of $1, \ldots, N$.

\textbf{Further Improvement of Efficiency Through Short-cut Formula.} 
While the above direct approach offers a clear path, there exists a more optimized algorithm to expedite computational efficiency by circumventing the explicit calculations of $\Fi[\cdot, \ell, \cdot]$ for $\ell \ge K$. Specifically, we discover a short-cut formula that allows us to directly calculate the summation \add{$\sum_{\ell=K}^{N-1} \frac{\Gil}{{N-1 \choose \ell}}$ from Theorem \ref{thm:shapley-counting}} once we obtain $\Fi[\cdot, K-1, \cdot]$. 

\begin{theorem}
\label{thm:wknn-shapley-expression}
For a weighted, hard-label KNN binary classifier using the utility function given by (\ref{eq:util-binary}), the Shapley value of data point $z_i$ can be expressed as:
\begin{equation}
\begin{aligned}
    \phi_{z_i} 
    = 
    \sign(w_i) \left[
    \frac{1}{N} \sum_{\ell=0}^{K-1} \frac{\Gil}{{N-1 \choose \ell}} + \sum_{m = \max(i+1, K+1)}^N \frac{\Rim}{m {m-1 \choose K}} \right]
\end{aligned}
\nonumber
\end{equation}
where 
\begin{equation}
\begin{aligned}
\Rim := 
\begin{cases}
    \sum_{t=1}^{m-1} \sum_{s \in [-\wtil_i, -\wtil_m)} \Fi[t, K-1, s] & \text{for } y_i = \yval \\
    \sum_{t=1}^{m-1} \sum_{s \in [-\wtil_m, -\wtil_i)} \Fi[t, K-1, s] & \text{for } y_i \ne  \yval 
\end{cases}
\end{aligned}
\nonumber
\end{equation}
\end{theorem}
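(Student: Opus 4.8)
The plan is to start from the counting representation of the Shapley value in Theorem~\ref{thm:shapley-counting},
\begin{equation}
\phi_{z_i} = \frac{2\ind[y_i = \yval]-1}{N}\sum_{\ell=0}^{N-1}{N-1\choose\ell}^{-1}\Gil,
\nonumber
\end{equation}
and split the sum over $\ell$ into the ``small'' range $0\le\ell\le K-1$ and the ``large'' range $K\le\ell\le N-1$. The small range is already in the target form (it becomes the first bracketed term), so the whole task reduces to showing that the tail equals
\begin{equation}
\frac{1}{N}\sum_{\ell=K}^{N-1}\frac{\Gil}{{N-1\choose\ell}} = \sum_{m=\max(i+1,K+1)}^N\frac{\Rim}{m{m-1\choose K}},
\nonumber
\end{equation}
after which factoring out the prefactor $2\ind[y_i=\yval]-1$ (the sign factor $\sign(w_i)$ in the statement) yields the claim.

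Next I would unfold $\Gil$ via Theorem~\ref{thm:Gil-and-Fi}: for $\ell\ge K$ and $y_i=\yval$, $\Gil = \sum_{m}\sum_{s\in[-\wtil_i,-\wtil_m)}\Fi[m,\ell,s]$. The structural fact driving everything (the $\ell\ge K$ case of Theorem~\ref{thm:Fi-recursive}) is that, once $x_m$ is fixed as the $K$-th nearest neighbor in $S$, a size-$\ell$ subset decomposes into its $K-1$ closest members (which must lie among $\{z_1,\dots,z_{m-1}\}\setminus\{z_i\}$ with signed-weight sum $s$) together with $\ell-K$ freely chosen points farther than $z_m$. Since $\condKNN$ forces $z_i$ into the top-$K$, one must have $i<m$, so the farther points are exactly $\{z_{m+1},\dots,z_N\}$, giving
\begin{equation}
\Fi[m,\ell,s] = \left(\sum_{t=1}^{m-1}\Fi[t,K-1,s]\right){N-m\choose\ell-K}.
\nonumber
\end{equation}
Requiring $K-1$ closer points forces $m\ge K+1$, which with $i<m$ yields the lower limit $\max(i+1,K+1)$. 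Swapping the order of summation to bring $m$ outside and recognizing $\Rim=\sum_{t=1}^{m-1}\sum_{s\in[-\wtil_i,-\wtil_m)}\Fi[t,K-1,s]$ collapses the tail to $\sum_m\Rim\sum_{\ell=K}^{N-1}{N-m\choose\ell-K}/{N-1\choose\ell}$; the $y_i\ne\yval$ case is identical with the interval $[-\wtil_m,-\wtil_i)$.

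The crux is then the purely combinatorial identity
\begin{equation}
\frac{1}{N}\sum_{\ell=K}^{N-1}\frac{{N-m\choose\ell-K}}{{N-1\choose\ell}} = \frac{1}{m{m-1\choose K}}.
\nonumber
\end{equation}
I would establish it through the integral (Beta-function) representation of inverse binomial coefficients, $\frac{1}{{N-1\choose\ell}} = N\int_0^1 x^\ell(1-x)^{N-1-\ell}\,dx$. Substituting and reindexing $j=\ell-K$ turns the left side into $\int_0^1 x^K\sum_j{N-m\choose j}x^j(1-x)^{(N-1-K)-j}\,dx$; because $m\ge K+1$ no nonzero binomial term is truncated, so the binomial theorem collapses the inner sum to $(1-x)^{m-1-K}$, leaving $\int_0^1 x^K(1-x)^{m-1-K}\,dx = \frac{K!\,(m-1-K)!}{m!} = \frac{1}{m{m-1\choose K}}$.

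Assembling the evaluated tail with the retained small-$\ell$ term gives the stated expression. I expect the main obstacle to be the second and third steps: justifying the factorization of $\Fi[m,\ell,s]$ with the correct index ranges --- in particular checking that $\condKNN$ is automatic for size-$(K-1)$ subsets (so $\sum_{t<m}\Fi[t,K-1,s]$ truly counts all admissible closer configurations) and that $i<m$ makes the farther-point count exactly ${N-m\choose\ell-K}$ --- and then the inverse-binomial identity, where the Beta-integral route is cleanest but one must verify the summation limits so no terms are lost.
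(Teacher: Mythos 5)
Your proposal follows the paper's proof essentially step for step: split the sum from Theorem~\ref{thm:shapley-counting} at $\ell=K$, expand $\Gil$ for $\ell\ge K$ via the factorization $\Fi[m,\ell,s]=\bigl(\sum_{t=1}^{m-1}\Fi[t,K-1,s]\bigr){N-m\choose \ell-K}$ with the lower limit $\max(i+1,K+1)$, swap the order of summation to isolate $\Rim$, and evaluate $\sum_{\ell=K}^{N-1}{N-m\choose \ell-K}/{N-1\choose \ell}$ in closed form. The only difference is that the paper asserts the value of that inner sum (as $\frac{N}{m}{m-1\choose K}^{-1}$) without justification, whereas you derive it correctly via the Beta-integral representation of inverse binomial coefficients --- a filled-in detail rather than a divergence.
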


Crucially, the $\Rim$ quantity in the above expression can be efficiently calculated using a clever caching technique. Based on the above findings, we can eliminate a factor of $N$ in the final time complexity, thereby obtaining a \emph{quadratic-time} algorithm to compute the exact WKNN-Shapley. The comprehensive pseudocode for the full algorithm can be found in Appendix \ref{appendix:pseudocode}.

\begin{theorem}
\label{thm:runtime-exact}
Algorithm \ref{alg:efficient-wknn} (in Appendix \ref{appendix:pseudocode}) computes the exact WKNN-Shapley $\phi_{z_i}$ for all $i = 1, \ldots, N$ and achieves a total runtime of $O(\nwspace K^2 N^2)$.
\end{theorem}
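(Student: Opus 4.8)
The plan is to establish the bound by accounting, for each fixed $i$, the cost of the three computational stages the algorithm performs---constructing the table $\Fi$, extracting $\Gil$ for $\ell \le K-1$, and accumulating the quantities $\Rim$---and then multiplying by the $N$ choices of $i$. The first ingredient I would pin down is the size of the domain of the partial-sum index $s$. Since each signed weight $\wtil_j$ lives in a discretized space of $O(\nwspace)$ values and $s$ is a sum of at most $K-1$ such weights, $s$ ranges over only $O(\nwspace K)$ distinct values. Consequently, for a fixed $i$ the table $\Fi[m, \ell, s]$ has indices $m \in [N]$, $\ell \in \{0, \ldots, K-1\}$, and $s$ over $O(\nwspace K)$ values, for a total of $O(\nwspace K^2 N)$ entries.

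Next I would argue that each entry of $\Fi$ is filled in amortized $O(1)$ time. The recursion of Theorem \ref{thm:Fi-recursive} expresses $\Fi[m, \ell, \cdot]$ using entries $\Fi[t, \cdot, \cdot]$ ranging over all $t \le m-1$. Taken literally, this sum over $t$ would inject an extra factor of $N$; the key observation is that it can be replaced by a running prefix sum $\Fihat$ over the index $t$, maintained incrementally as $m$ increases, so that each new entry is obtained by a single lookup shifted by $\wtil_m$ in the $s$-coordinate plus an $O(1)$ prefix-sum update. Filling the whole table for a fixed $i$ therefore costs $O(\nwspace K^2 N)$. Converting $\Fi$ into $\Gil$ via Theorem \ref{thm:Gil-and-Fi} requires, for each $\ell \le K-1$, summing $\Fi[m, \ell, s]$ over $m$ and over an $s$-window; this is at most $O(\nwspace K^2 N)$ and thus does not dominate.

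The accumulation of $\Rim$ is the step I would treat most carefully, since $\Rim = \sum_{t=1}^{m-1}\sum_{s} \Fi[t, K-1, s]$ again contains a sum over $t \le m-1$, and the $s$-window $[-\wtil_i, -\wtil_m)$ (or its mirror when $y_i \ne \yval$) itself moves with $m$. Here I would reuse the prefix sum $\Fihat[m-1, K-1, \cdot]$ over $t$, updated in $O(\nwspace K)$ time per increment of $m$, and evaluate the windowed sum over $s$ in $O(\nwspace K)$ time (optionally via a second prefix sum over $s$ so the moving window needs no re-scan). This yields $O(\nwspace K)$ per value of $m$, hence $O(\nwspace K N)$ per $i$ for all $\Rim$, which is again dominated by the table cost. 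Assembling $\phi_{z_i}$ through the closed form of Theorem \ref{thm:wknn-shapley-expression} then costs only $O(N + K)$ once $\Gil$ and $\Rim$ are in hand, assuming the binomial inverses are precomputed in $O(N)$.

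Putting the pieces together, the per-$i$ cost is dominated by the $O(\nwspace K^2 N)$ table construction, and summing over the $N$ choices of $i$ gives the claimed $O(\nwspace K^2 N^2)$ total; the one-time sorting of $D$ by distance to $\xval$ contributes only $O(N \log N)$ and is absorbed. The main obstacle in making this rigorous is verifying that no stage hides a further factor of $N$: concretely, I must confirm that the prefix-sum caching faithfully implements the recursion of Theorem \ref{thm:Fi-recursive} and the definition of $\Rim$ while supporting $O(1)$ (respectively $O(\nwspace K)$) incremental updates as $m$ advances, and that the moving $s$-windows are evaluated without rescanning. This bookkeeping---rather than any deep combinatorial fact---is where the quadratic bound is genuinely earned.
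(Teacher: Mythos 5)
Your proposal is correct and follows essentially the same route as the paper's proof: the paper simply tallies the cost of each for-loop in the runtime-optimized pseudocode (computing $\Fi$ in $O(K N \cdot K\nwspace)$, $\Rim$ in $O(N \cdot K\nwspace)$, and $\Gil$ in $O(K N \cdot K\nwspace)$ per data point, then multiplies by the outer loop over $i$), relying on exactly the running prefix-sum caches ($F_0$ and $R_0$ in Algorithm \ref{alg:efficient-wknn}) that you describe. Your write-up is in fact more explicit than the paper's about why the sums over $t \le m-1$ do not introduce an extra factor of $N$, which is the only nontrivial point.
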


\begin{remark}[Runtime Dependency with $K$ and $\nwspace$]
\add{While the time complexity in Theorem \ref{thm:runtime-exact} also depends on $K$ and $\nwspace$, these variables can be effectively treated as constants in our context. In our ablation study, we found that the error caused by weights discretization reduces quickly as the number of bits $b$ for discretization grows. Hence, across all experiments in Section \ref{sec:eval}, we set the number of bits for discretization as $b = 3$ and therefore $\nwspace = 2^b = 8$. Additionally, the selection of $K$ in KNN-Shapley literature commonly stabilizes around values of $5$ or $10$, irrespective of the dataset size \cite{jia2019efficient, wang2023threshold}. This stability arises because, in the context of KNN classifiers, increasing $K$ can easily result in underfitting even when $N$ is large. Throughout our experiments, we fix $K$ at $5$. A detailed ablation study examining different choices of $K$ and $\nwspace$ is available in Appendix \ref{appendix:eval}.}
\end{remark}

\begin{remark}[Novelty in the derivation of WKNN-Shapley]
Our derivation of WKNN-Shapley starts similarly to \cite{jia2019efficient} by examining the marginal contribution, but \textbf{the methodologies significantly differ afterward.} 
Unweighted KNN-Shapley benefits from the simplicity of its utility function, allowing for a relatively straightforward derivation. Specifically, \cite{jia2019efficient} plugs unweighted KNN's utility function into the formula of the difference of the Shapley values between two data points, and then simplifies the expression using known equalities in combinatorial analysis. 
In contrast, the same approach does not apply to WKNN-Shapley due to the complexity introduced by weights. Therefore, we develop a novel dynamic programming solution, which marks a substantial departure from the techniques used in \cite{jia2019efficient}. We would like to clarify that there is \textbf{no correlation between the ``recursion'' in \cite{jia2019efficient} and our paper.} 
In \cite{jia2019efficient}, each $\phi_{z_{j}}$ is recursively computed from $\phi_{z_{j+1}}$. On the contrary, the computation of different data points' WKNN-Shapley scores $\phi_{z_{j}}$ is \emph{independent} of each other. In our method, recursion is a fundamental component of dynamic programming to solve complex counting problems. 
\end{remark}

\subsection{Deterministic Approximation for Weighted KNN-Shapley}
\label{sec:deterministic-approx}

While the algorithm introduced in Section \ref{sec:exact-shapley} for calculating the exact WKNN-Shapley achieves $O(N^2)$ runtime, a huge improvement from the original $O(N^K)$ algorithm from \cite{jia2019efficient}, there remains room for further improving the efficiency if we only require an approximation of the Shapley value. Contrary to the prevalent use of Monte Carlo techniques in existing literature, in this section, we develop a \emph{deterministic} approximation algorithm for WKNN-Shapley. 

\textbf{Intuition.} 
From Theorem \ref{thm:Fi-recursive}, we know that in order to compute $\Fi[m, \ell, \cdot]$ with $\ell \le K-1$, we only need to know $\Fi[t, \ell-1, \cdot]$ with $t \le m-1$. Moreover, observe that the building blocks for $\Gil$ (or $\Rim$), $\sum_{s \in [-\wtil_i, 0)} \Fi[t, \ell, s]$ (or $\sum_{s \in [-\wtil_i, -\wtil_m)} \Fi[t, K-1, s]$), can be quite small as it only takes the summation over a small range of the weight space. 
Hence, we can use $\Fihat[m, \cdot, \cdot] = 0$ as an approximation for $\Fi[m, \cdot, \cdot]$ for all $m \ge \mstar+1$ with some prespecified threshold $\mstar$. Similarly, we can use $\Rimhat = 0$ as an approximation for $\Rim$ for all $m \ge \mstar+1$. 
The resultant approximation for the Shapley value $\phi_{z_i}$ is stated as follows:

\begin{definition}
\label{def:approxmstar}
We define the approximation $\widehat \phi_{z_i}^{(\mstar)}$ as 
\begin{equation}
\begin{aligned}
\widehat \phi_{z_i}^{(\mstar)} := 
\sign(w_i) \left[
\frac{1}{N} \sum_{\ell=0}^{K-1} \frac{\Giltilmstar}{{N-1 \choose \ell}} + 
\sum_{m = \max(i+1, K+1)}^{\mstar} \frac{\Rim}{m {m-1 \choose K}} \right]
\end{aligned}
\nonumber
\end{equation}
where 
\begin{equation}
\begin{aligned}
\Giltilmstar := 
\begin{cases}
\sum_{m=1}^{\mstar} \sum_{s \in [-\wtil_i, 0)} \Fi\left[m, \ell, s\right] & \text{for } y_i = \yval \\
\sum_{m=1}^{\mstar} \sum_{s \in [0, -\wtil_i)} \Fi\left[m, \ell, s\right] & \text{for } y_i \ne \yval
\end{cases}
\end{aligned}
\nonumber
\end{equation}
\end{definition}

To calculate $\widehat \phi_{z_i}^{(\mstar)}$, we only need to compute $\Fi[m, \cdot, \cdot]$ and $\Rim$ for $m$ from $1$ to $\mstar$ instead of $N$, thereby reducing the runtime of Algorithm \ref{alg:efficient-wknn} to $O(N \mstar)$ with minimal modification to the exact algorithm's implementation. 

\begin{theorem}
\label{thm:runtime-approx}
\add{Algorithm \ref{alg:efficient-wknn-approx} (in Appendix \ref{appendix:pseudocode-approx})} computes the approximated WKNN-Shapley $\widehat \phi_{z_i}^{(\mstar)}$ for all $z_i \in D$ and achieves a total runtime of $O(\nwspace K^2 N \mstar)$.
\end{theorem}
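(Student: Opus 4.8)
The plan is to observe that Algorithm \ref{alg:efficient-wknn-approx} is obtained from the exact algorithm (Algorithm \ref{alg:efficient-wknn}) by a single structural change: the loop that increments the ``pivot'' index $m$ is halted at $\mstar$ rather than run all the way to $N$. Consequently the proof splits into two independent pieces — a \emph{correctness} check that the truncated computation indeed outputs the quantity $\widehat \phi_{z_i}^{(\mstar)}$ of Definition \ref{def:approxmstar}, and a \emph{runtime} count that tracks how the truncation shrinks one of the two factors of $N$ in Theorem \ref{thm:runtime-exact} down to $\mstar$. Since the statement concerns only the runtime, I would dispatch correctness briefly and spend the bulk of the argument on the count.

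For correctness I would invoke the recursive structure of Theorem \ref{thm:Fi-recursive}: for every $\ell \le K-1$ the entry $\Fi[m,\ell,s]$ is assembled solely from entries $\Fi[t,\ell-1,\cdot]$ with $t \le m-1$, and for $\ell \ge K$ the contributions to $\Rim$ likewise draw only on $\Fi[t,K-1,\cdot]$ with $t \le m-1$. Hence the sub-table $\{\Fi[m,\ell,\cdot] : m \le \mstar\}$ is \emph{self-contained}: no retained entry ever references a pivot index exceeding $\mstar$. Setting $\Fihat[m,\cdot,\cdot]=0$ and $\Rimhat=0$ for $m \ge \mstar+1$ therefore leaves every retained entry unchanged, and the truncated accumulators are exactly the $\Giltilmstar$ (for $\ell \le K-1$) and the partial sum $\sum_{m=\max(i+1,K+1)}^{\mstar} \Rim / (m \binom{m-1}{K})$ appearing in Definition \ref{def:approxmstar}. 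This certifies that the truncated algorithm outputs $\widehat \phi_{z_i}^{(\mstar)}$.

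For the runtime I would decompose the work as in Theorem \ref{thm:runtime-exact} and re-tally the affected loop. The computation is nested as an outer loop over the target point $i \in [N]$ and, for each $i$, a loop over the pivot $m$; the only change from the exact algorithm is that this inner loop now ranges over $m \in \{1,\dots,\mstar\}$ instead of $\{1,\dots,N\}$. For a fixed $(i,m)$, filling in $\Fi[m,\ell,s]$ across all $\ell \in \{0,\dots,K-1\}$ and all admissible $s$ costs $O(\nwspace K^2)$: a factor $K$ for the range of $\ell$, and a factor $O(\nwspace K)$ for the size of the signed-weight-sum domain, since $s$ is a sum of at most $K-1$ weights drawn from the discretized space $\wspace$ of size $\nwspace$. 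Exactly as in the exact algorithm, the running prefix-sum caches over $t \le m-1$ (which also furnish $\Rim$ through the short-cut formula of Theorem \ref{thm:wknn-shapley-expression}) make each entry $O(1)$ amortized, so the per-$(i,m)$ cost remains $O(\nwspace K^2)$. Multiplying the $N$ choices of $i$, the $\mstar$ choices of $m$, and the $O(\nwspace K^2)$ inner cost yields the claimed $O(\nwspace K^2 N \mstar)$; the one-time $O(N \log N)$ sort by distance is dominated.

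The main obstacle I anticipate is not the gross loop count but verifying that the $O(1)$-amortized caching genuinely survives truncation. In the exact algorithm the efficiency of both the $\Fi$ recursion and the $\Rim$ short-cut rests on incrementally maintained caches that are updated as $m$ sweeps upward; I must confirm that halting this sweep at $\mstar$ still produces correct cache contents for every $m \le \mstar$ — which it does, because the cache at step $m$ depends only on steps $t < m$ — and does not force any compensating recomputation. Once this incremental-maintenance invariant is established the count is routine, and the only remaining subtlety worth flagging is the implicit assumption $\mstar \ge K$; otherwise the $\Rim$ sum is empty and the bound degenerates to the cost of the $\ell \le K-1$ phase alone, which still obeys $O(\nwspace K^2 N \mstar)$.
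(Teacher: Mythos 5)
Your proposal is correct and takes essentially the same route as the paper, whose proof is precisely this loop count: for each $z_i$, the truncated sweep over $m \le \mstar$ costs $O(K \mstar |\wspace_{(K)}|) = O(\nwspace K^2 \mstar)$ across the $\Fi$, $\Rim$, and $\Gil$ subroutines, and multiplying by the $N$ choices of $i$ gives $O(\nwspace K^2 N \mstar)$. The paper simply asserts that the truncated algorithm outputs $\widehat \phi_{z_i}^{(\mstar)}$ without argument, so your observation that the sub-table $\{\Fi[m,\ell,\cdot] : m \le \mstar\}$ is self-contained under the recursion of Theorem \ref{thm:Fi-recursive} is a useful (and correct) addition rather than a departure.
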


In particular, when $\mstar = \sqrt{N}$, we can achieve the runtime of $O(N^{1.5})$. 
\add{The selection of $\mstar$ is discussed in Remark \ref{remark:mstar-selection}.} 
In the following, we derive the error bound and point out two nice properties of this approximation. 

\begin{theorem}
\label{thm:error-bound}
For any $z_i \in D$, the approximated Shapley value $\phihatmstar$ \textbf{(1)} shares the same sign as $\phi_{z_i}$, \textbf{(2)} ensures $\left| \phihatmstar \right| \le \left| \phi_{z_i} \right|$, and \textbf{(3)} has the approximation error bounded by $\left| \phihatmstar - \phi_{z_i} \right| \le \eps(\mstar)$ where 
\begin{equation}
\begin{aligned}
\eps(\mstar) := \sum_{m = \mstar+1}^N \left( \frac{1}{m-K} - \frac{1}{m} \right) + \sum_{\ell=1}^{K-1} \frac{ {N \choose \ell} - {\mstar \choose \ell} }{ N {N-1 \choose \ell} } = O\left( K / \mstar \right)
\end{aligned} 
\nonumber
\end{equation}
\end{theorem}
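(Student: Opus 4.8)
The plan is to observe that $\phihatmstar$ is obtained from the exact expression of Theorem~\ref{thm:wknn-shapley-expression} purely by \emph{discarding nonnegative terms}. Concretely, the truncated count $\Giltilmstar$ retains only those summands of $\Gil$ whose farthest-neighbor index $m$ satisfies $m \le \mstar$, while the entire tail $\sum_{m > \mstar} \frac{\Rim}{m \binom{m-1}{K}}$ is dropped. Since each $\Fi[\cdot,\cdot,\cdot]$ is a count, and hence nonnegative, so are $\Gil$ and $\Rim$. I would therefore write $\phi_{z_i} = \sign(\wtil_i)\, B$ and $\phihatmstar = \sign(\wtil_i)\, \widehat B$, where $B, \widehat B \ge 0$ and $\widehat B \le B$ by termwise truncation. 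Parts (1) and (2) are then immediate: both quantities equal the \emph{same} sign factor times a nonnegative real, so they cannot have opposite signs, and $\left|\phihatmstar\right| = \widehat B \le B = \left|\phi_{z_i}\right|$.

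For part (3), the same decomposition gives
\begin{equation}
\left| \phihatmstar - \phi_{z_i} \right| = B - \widehat B = \frac{1}{N} \sum_{\ell=1}^{K-1} \frac{\Gil - \Giltilmstar}{\binom{N-1}{\ell}} + \sum_{m=\mstar+1}^N \frac{\Rim}{m \binom{m-1}{K}},
\nonumber
\end{equation}
where the $\ell=0$ term cancels because it corresponds to the empty set, which carries no farthest index to truncate and so appears identically in both expressions. I then bound the two pieces by elementary counting. For the first, $\Gil - \Giltilmstar$ counts the size-$\ell$ subsets meeting the weight window whose largest index exceeds $\mstar$; dropping the weight window, this is at most the number of size-$\ell$ subsets of $[N]$ with maximum index $>\mstar$, namely $\binom{N}{\ell} - \binom{\mstar}{\ell}$, which reproduces the second sum of $\eps(\mstar)$.

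The crux is the tail bound on $\Rim$. Here I would use that $\sum_{t=1}^{m-1} \sum_{\text{all } s} \Fi[t, K-1, s]$ enumerates size-$(K-1)$ subsets all of whose indices lie below $m$ (via condition (2) of Definition~\ref{def:Fi-binary}), and is thus at most $\binom{m-1}{K-1}$; restricting $s$ to the prescribed window only shrinks this, so $\Rim \le \binom{m-1}{K-1}$. The key algebraic step is the identity $\frac{\binom{m-1}{K-1}}{m \binom{m-1}{K}} = \frac{K}{m(m-K)} = \frac{1}{m-K} - \frac{1}{m}$, which converts the tail into exactly $\sum_{m=\mstar+1}^N \left(\frac{1}{m-K} - \frac{1}{m}\right)$, the first sum of $\eps(\mstar)$. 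To extract the $O(K/\mstar)$ rate I would telescope this sum to $\sum_{j=\mstar-K+1}^{\mstar} \frac{1}{j} - \sum_{j=N-K+1}^{N} \frac{1}{j} = O(K/\mstar)$, and bound the second sum using $\binom{N}{\ell}/\binom{N-1}{\ell} = N/(N-\ell)$, so that each term is at most $\frac{1}{N-\ell}$ and the whole sum is $O(K/N) = O(K/\mstar)$.

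The main obstacle I anticipate is not any individual inequality but pinning down precisely what the dropped counts $\Gil - \Giltilmstar$ and $\Rim$ enumerate in terms of the index-and-weight windows of Definition~\ref{def:Fi-binary}: one must verify that replacing the exact weight windows by ``all $s$'' is a valid upper bound and that the index truncation at $\mstar$ matches exactly the subsets whose maximum index exceeds $\mstar$. Once these two combinatorial identifications are secured, the binomial manipulation and the asymptotic estimate are routine.
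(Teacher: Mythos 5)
Your proposal is correct and follows essentially the same route as the paper's proof: the same decomposition of the error into the truncated $\Gil$ piece and the dropped $\Rim$ tail, the same counting bounds obtained by discarding the weight window (yielding $\binom{N}{\ell}-\binom{\mstar}{\ell}$ and $\Rim \le \binom{m-1}{K-1}$), and the same identity $\binom{m-1}{K-1}\big/\bigl(m\binom{m-1}{K}\bigr) = \frac{1}{m-K}-\frac{1}{m}$. Your explicit telescoping justification of the $O(K/\mstar)$ rate is a detail the paper leaves implicit, but the argument is otherwise the same.
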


Leveraging the error bound $\eps(\mstar)$ alongside the additional nice properties of $\phihatmstar$ stated in Theorem \ref{thm:error-bound}, we can obtain a deterministic interval within which $\phi_{z_i}$ always resides. 
Specifically, when $y_i = \yval$, we have $\phi_{z_i} \in \left[\phihatmstar, \phihatmstar + \eps(\mstar) \right]$, and when $y_i \ne \yval$, we have $\phi_{z_i} \in \left[\phihatmstar - \eps(\mstar), \phihatmstar \right]$. 
Unlike the commonly used Monte Carlo method for approximating the Shapley value, which only offers a high-probability interval and allows for a failure possibility that the exact value might fall outside of it, our deterministic approximation ensures that the exact value is always within the corresponding interval.



\textbf{The approximated WKNN-Shapley preserves the fairness axioms.} 
The Shapley value's axiomatic properties, particularly the \emph{symmetry} and \emph{null player} axioms, are of great importance for ensuring fairness when attributing value to individual players \add{(see Appendix \ref{appendix:related-works} for the formal definitions of the two axioms)}. 
These fundamental axioms have fostered widespread adoption of the Shapley value. 
An ideal approximation of the Shapley value, therefore, should preserve at least the symmetry and null player axioms to ensure that the principal motivations for employing the Shapley value—fairness and equity—are not diminished. The prevalent Monte Carlo-based approximation techniques give randomized results and necessarily muddy the clarity of fairness axioms. In contrast, our deterministic approximation preserves both important axioms. 

\begin{theorem}
\label{thm:approx-shapley-property}
The approximated Shapley value $\{ \phihatmstar \}_{z_i \in D}$ satisfies symmetry and null player axiom. 
\end{theorem}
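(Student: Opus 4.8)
The plan is to handle the two axioms separately, because the null-player property falls out almost immediately from the already-established Theorem~\ref{thm:error-bound}, whereas symmetry genuinely requires a combinatorial bijection argument.

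For the \emph{null-player} axiom, suppose $z_i$ is a null player, i.e. $\U(S \cup \{z_i\}) = \U(S)$ for every $S \subseteq D \setminus \{z_i\}$. Since the quantity in Theorem~\ref{thm:shapley-counting} is exactly the Shapley value of Definition~\ref{def:shapley-value}, and the Shapley value satisfies the null-player axiom, we have $\phi_{z_i} = 0$; concretely, by Theorem~\ref{thm:marginal-contri} no subset can simultaneously satisfy $\condKNN$ and $\condNegToPos$ (or $\condPosToNeg$), so every $\Gil = 0$. I would then invoke part~(2) of Theorem~\ref{thm:error-bound}, namely $\left|\phihatmstar\right| \le \left|\phi_{z_i}\right| = 0$, which forces $\phihatmstar = 0$. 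The same conclusion is also directly visible from Definition~\ref{def:approxmstar}, since $\Giltilmstar$ and $\Rim$ are counts of precisely the subsets that a null player never produces.

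For \emph{symmetry} the error bound alone is not enough: for interchangeable $z_i,z_j$ it yields only $\left|\phihat_{z_i}^{(\mstar)} - \phihat_{z_j}^{(\mstar)}\right| \le 2\eps(\mstar)$, not the exact equality required. Instead I would argue structurally via the standard index-swap bijection $\sigma$, sending $S \subseteq D \setminus \{z_i\}$ to $(S \setminus \{z_j\}) \cup \{z_i\}$ when $z_j \in S$ and to $S$ otherwise. Interchangeability gives $\U(S \cup \{z_i\}) - \U(S) = \U(\sigma(S) \cup \{z_j\}) - \U(\sigma(S))$ and $|\sigma(S)| = |S|$. First I would record that interchangeability forces the prefactors to agree, $\sign(w_i) = \sign(w_j)$, seen by evaluating at $S = \emptyset$ where $\U(\{z_i\}) = \ind[y_i = \yval]$, so that $y_i = \yval \iff y_j = \yval$. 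Then I would show that $\sigma$ carries the subset family defining $\Giltilmstar$ (and the family behind $\Rim$) bijectively onto the corresponding family for $z_j$: the condition $\condKNN$ and the signed-weight window in $\condNegToPos/\condPosToNeg$ are preserved because $\sigma$ preserves the distance ordering and the multiset of signed weights. Matching the counts, $\Giltilmstar = \widetilde{\texttt{G}}_{j,\ell}^{(\mstar)}$ and $\sum_m \tfrac{\Rim}{m\binom{m-1}{K}} = \sum_m \tfrac{\texttt{R}_{j,m}}{m\binom{m-1}{K}}$, together with the equal prefactors and the identical weights in Definition~\ref{def:approxmstar}, then gives $\phihat_{z_i}^{(\mstar)} = \phihat_{z_j}^{(\mstar)}$.

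I expect the main obstacle to be the last verification, because the approximation is truncated by the \emph{index} $\mstar$ of the pivotal point, and the index-swap can nominally shift indices within the block of points sharing $z_i$ and $z_j$'s distance. The key observation to push through is that interchangeability constrains the configuration so that $z_i$ and $z_j$, and every point lying between them in the sorted order, form a common block with identical $(\text{distance}, \wtil)$ profile (otherwise a distinguishing subset would already break symmetry, as a small tie-breaking example shows); on such a block $\sigma$ is merely a relabelling that leaves the pivotal element's distance, and hence the event ``pivotal index $\le \mstar$'', genuinely invariant rather than only approximately so. Carefully reindexing within this identical block so that positions match exactly is where the real bookkeeping care is needed, and is the crux of the symmetry half of the proof.
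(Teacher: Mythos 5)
Your treatment of the null player axiom is correct and matches the paper's: the paper argues that $\phi_{z_i}=0$ forces every (non-negative) $\Gil$ and $\Rim$ to vanish, hence every truncated count vanishes, which is exactly the content of your appeal to $\left|\phihatmstar\right|\le\left|\phi_{z_i}\right|=0$. For symmetry, your high-level route is also the same as the paper's: the paper rewrites $\Giltilmstar$ and $\Rim$ as sums of marginal contributions over subsets whose pivotal ($\min(\ell,K)$-th closest) element has index at most $\mstar$ (resp.\ $m-1$) and asserts the two sums agree; making that assertion rigorous requires precisely the swap bijection you describe. You deserve credit for noticing the one step the paper's proof silently skips, namely that the truncation is by the \emph{index} of the pivotal element and the swap does not preserve that index.

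However, the structural claim you use to close that gap---that interchangeability forces $z_i$, $z_j$, and every point between them to share an identical $(\text{distance},\wtil)$ profile---is false, and the gap it leaves is real. Take $K=3$, $N=5$, labels $(y_1,\dots,y_5)=(\yval,\bar{\yval},\yval,\yval,\yval)$ and monotone weights giving $\wtil=(1,-1,\tfrac23,\tfrac23,\tfrac23)$. One checks directly that $\U(T\cup\{z_3\})=\U(T\cup\{z_5\})$ for all $T\subseteq\{z_1,z_2,z_4\}$, so $z_3$ and $z_5$ are interchangeable, yet they sit at distinct distances with a non-identical point $z_4$ between them, so your block lemma fails. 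Worse, the failure is not cosmetic: with $\mstar=4$ and $\ell=2$ the window is $[-\tfrac23,0)$, and the truncated count for $z_3$ ranges over $2$-subsets of $\{z_1,z_2,z_4\}$ (only $\{z_2,z_4\}$ qualifies, count $1$) while that for $z_5$ ranges over $2$-subsets of $\{z_1,z_2,z_3,z_4\}$ (both $\{z_2,z_3\}$ and $\{z_2,z_4\}$ qualify, count $2$); the subset $\{z_2,z_5\}$ that would restore the balance for $z_3$ is exactly the one excluded by the index cutoff. So $\Giltilmstar$ and $\widetilde{\texttt{G}}_{j,\ell}^{(\mstar)}$ genuinely differ here, and no reindexing within a ``common block'' can repair this, because the obstruction is the absolute index appearing in the cutoff, not a relabelling ambiguity. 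Your symmetry argument therefore does not go through as proposed (and the same example shows the paper's own one-line justification of $\Giltilmstar=\widetilde{\texttt{G}}_{j,\ell}^{(\mstar)}$ needs more care): any correct proof must either find a truncation-invariant reformulation of the counts or impose an additional hypothesis on the interchangeable pair.
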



\begin{remark}[\textbf{Selection of $\mstar$}]
\label{remark:mstar-selection}
\add{Ideally, we would like to pick the smallest $\mstar$ such that $\eps(\mstar)$ is significantly smaller than $|\phi_{z_i}|$ for a significant portion of $z_i$s. However, determining a universally applicable heuristic for setting $\mstar$ is challenging due to the varying magnitude of $\phi_{z_i}$ across different datasets, which are difficult to anticipate. For example, in a case where all but one data point are ``null players'', that single data point will possess a value of $\U(D)$, while all others will be valued at 0. On the other hand, if all data points are identical, each will receive a value of $\U(D)/N$. Therefore, we suggest to select $\mstar$ in an adaptive way. 
Specifically, for each $\mstar \in \{K+1, \ldots, N\}$, we calculate $(\phihatmstar)_{i \in [N]}$, halting the computation when the magnitude of $\eps(\mstar)$ is substantially smaller than (e.g., $< 10\%$) the magnitude of $\phihatmstar$ for a majority of $z_i$s. 
This approach does not increase the overall runtime since $\phihat_{z_i}^{(\mstar+1)}$ can be easily computed from $\phihatmstar$ and the additionally computed $\Fi[\mstar+1, \cdot, \cdot]$. 
Details of this approach are in Appendix \ref{appendix:how-to-select-mstar}. 
Furthermore, we highlight that 
our deterministic approximation algorithm maintains the fairness properties of exact WKNN-Shapley. 
Hence, in practice, we can potentially use a smaller $\mstar$ and still get satisfactory performance
in discerning data quality. 
Throughout our experiments in Section \ref{sec:eval}, we find that setting $\mstar = \sqrt{N}$ consistently works well across all benchmark datasets.}



\end{remark}

\section{Numerical Experiments}
\label{sec:eval}

We systematically evaluate the performance of our WKNN-Shapley computation and approximation algorithms. 
Our experiments aim to demonstrate the following assertions: 
\add{\textbf{(1)} Our exact and deterministic approximation algorithm for WKNN-Shapley significantly improves computational efficiency compared to the baseline $O(N^K)$ algorithm and Monte Carlo approximation, respectively. 
\textbf{(2)} Compared to unweighted KNN-Shapley, WKNN-Shapley \textbf{(2-1)} achieves a better performance in discerning data quality in several important downstream tasks including mislabeled/noisy data detection and data selection, and \textbf{(2-2)} demonstrates more stable performance against different choices of $K$s. \textbf{(3)} The approximated WKNN-Shapley, while being more efficient, consistently achieves performance comparable to the exact WKNN-Shapley on most of the benchmark datasets.
}





\subsection{Runtime Comparison}
\label{sec:eval-runtime}

We empirically assess the computational efficiency of our exact and deterministic approximation algorithms for WKNN-Shapley, comparing them to the $O(N^K)$ exact algorithm and the Monte Carlo approximation presented in \cite{jia2019efficient}. 
We examine various training data sizes $N$ and compare the execution clock time of different algorithms at each $N$. 
In data size regimes where the baseline algorithms from \cite{jia2019efficient} are infeasible to execute ($>10$ hours), we fit a polynomial curve to smaller data size regimes and plot the predicted extrapolation for larger sizes. 

Figure \ref{fig:runtime-nb3} shows that the exact algorithm from \cite{jia2019efficient} requires $\ge 10^3$ hours to run even for $N=100$, rendering it impractical for actual use. 
In contrast, our exact algorithm for computing WKNN-Shapley 
achieves a significantly better computational efficiency (e.g., \textbf{almost $\mathbf{10^6}$ times faster at $\mathbf{N = 10^{5}}$}). 
Our deterministic approximation algorithm is compared to the Monte Carlo approximation from \cite{jia2019efficient}. 
For a fair comparison, both algorithms are aligned for the same theoretical error bounds. 
Note that the error bound for the Monte Carlo algorithm is a high-probability bound, subject to a small failure probability. In contrast, our error bound always holds, offering a more robust guarantee compared to the Monte Carlo technique. 
Nonetheless, from Figure \ref{fig:runtime-nb3} we can see that our deterministic approximation algorithm not only provides a stronger approximation guarantee but also achieves significantly greater efficiency (e.g., also \textbf{almost $\mathbf{10^6}$ times faster at $\mathbf{N = 10^{5}}$}). 
This showcases the remarkable improvements of our techniques. 


\begin{figure}[t]
    \centering
    \centering
    \includegraphics[width=0.5\columnwidth]{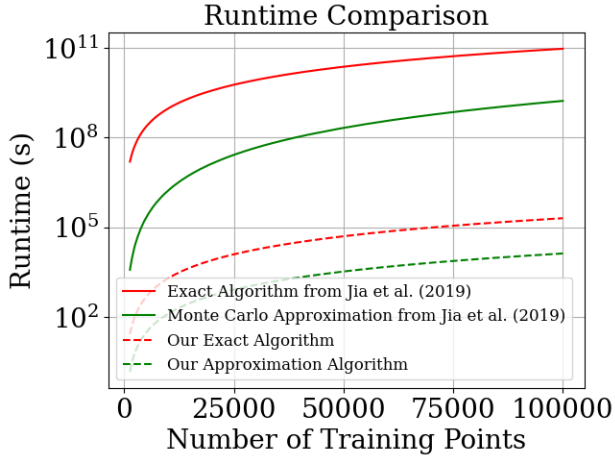}
    \caption{
    Runtime comparison between our exact and approximation algorithms for WKNN-Shapley in Section \ref{sec:shapley-for-binary}, and those from \cite{jia2019efficient}, across varying training data sizes $N$. 
    We set $K = 5$ and the weights are discretized to 3-bit here. 
    \add{In Appendix \ref{appendix:eval}, we provide additional experiments on different $K$s and $b$s. 
    }
    For our deterministic approximation algorithm, we set $\mstar = \sqrt{N}$ (so that the time complexity is $O(N^{1.5})$). 
    For the Monte Carlo approximation from \cite{jia2019efficient}, we align the error bounds to be the same as ours for fair comparison; we set the failure probability for Monte Carlo method as $\delta = 0.1$. 
    The plot shows the average runtime based on 5 independent runs. 
    }
    \label{fig:runtime-nb3}
\end{figure}

\subsection{Discerning Data Quality}
\label{sec:eval-application}

Due to the superior computational efficiency of our newly developed algorithms, WKNN-Shapley has become a practical data valuation technique for actual use. In this section, we evaluate its effectiveness in discerning data quality for common real-world applications. 
\textbf{Tasks:} we consider three applications that are commonly used for evaluating the performance of data valuation techniques in the prior works \cite{ghorbani2019data, kwon2022beta, wang2023data}: mislabeled data detection, noisy data detection, and data selection. 
\add{Due to space constraints, we only present the results for mislabeled data detection here, and defer the results for the other two tasks to Appendix \ref{appendix:eval}. Mislabeled data usually detrimentally impacts model performance. Hence, a reasonable data valuation technique should assign low values to these data points. In our experiments for mislabeled data detection, we randomly select 10\% of the data points to flip their labels.}
\textbf{Baselines \& Settings \& Hyperparameters:}
We evaluate the performance of both the exact and approximated WKNN-Shapley. We use $\ell_2$ distance and the popular RBF kernel $w_i = \exp(-\norm{x_i - \xval})$ to determine the weights of training points. 
\add{We discretize the weights to 3 bits, as we find this level of precision offers a balance between good performance and computational efficiency, with a weight space size, $\nwspace$, of merely $2^3 = 8$. In Appendix \ref{appendix:eval}, we conduct ablation studies on the choice of the number of bits for discretization. For approximated WKNN-Shapley, we set $\mstar = \sqrt{N}$.} 
Our primary baseline is the unweighted, soft-label KNN-Shapley from \cite{jia2019efficient}. 
\add{Since our WKNN-Shapley corresponds to hard-label KNN, we also include unweighted, hard-label WKNN-Shapley in comparison for completeness. Note that it can be computed by simply setting the weights of all data points as a constant.}


\textbf{Results.} 
We use AUROC as the performance metric on mislabeled data detection tasks. 
\textbf{Unweighted vs weighted KNN-Shapley:} 
Table \ref{tb:mislabel-detection} shows the AUROC scores across the 13 benchmark datasets we experimented on when $K=5$. 
Notably, both exact and approximated WKNN-Shapley markedly outperform the unweighted KNN-Shapley (either soft-label or hard-label) across most datasets. This can likely be attributed to WKNN-Shapley’s ability to more accurately differentiate between bad and good data based on the proximity to the queried example. In Appendix \ref{appendix:eval-comparison}, we present a qualitative study highlighting why WKNN-Shapley outperforms unweighted KNN-Shapley in discerning data quality. 
\textbf{Exact vs Approximated WKNN-Shapley:} 
From Table \ref{tb:mislabel-detection}, we can see an encouraging result that the approximated WKNN-Shapley achieves performance comparable to \add{(and sometimes even slightly better than)} the exact WKNN-Shapley across the majority of datasets. This is likely attributable to its favored property in preserving the fairness properties of its exact counterpart. 
\textbf{Robustness to the choice of $K$:} 
In Figure \ref{fig:ablation-varyK}, we show that, compared to unweighted KNN-Shapley, WKNN-Shapley maintains notably stable performance across various choices of $K$, particularly for larger values. 
This is because those benign data points—though within the $K$ nearest neighbors of the query example and possessing different labels—may not receive a very low value due to their likely distant positioning from the query example. Conversely, unweighted KNN-Shapley tends to assign these benign points lower values.

\begin{table}[t]
\centering
\setlength\abovecaptionskip{2pt}
\setlength\belowcaptionskip{-10pt}
\resizebox{0.8\columnwidth}{!}{\begin{tabular}{@{}ccccc@{}}
\toprule
& \textbf{\begin{tabular}[c]{@{}c@{}}Unweighted \\ KNN-Shapley \\ (Soft-label) \end{tabular}} & \textbf{\begin{tabular}[c]{@{}c@{}}Unweighted \\ KNN-Shapley \\ (Hard-label, \add{this work})\end{tabular}} & \textbf{\begin{tabular}[c]{@{}c@{}}Exact\\ WKNN-Shapley\\ \add{(this work)}\end{tabular}} & \textbf{\begin{tabular}[c]{@{}c@{}}Approximated\\ WKNN-Shapley\\ \add{(this work)}\end{tabular}} \\ \midrule
\textbf{2DPlanes}   & 0.849                                                                                      & 0.8                                                                                        & \textbf{0.884}                                                                              & 0.831                                                                                     \\
\textbf{CPU}        & 0.867                                                                                      & 0.929                                                                                      & \textbf{0.956}                                                                              & \textbf{0.956}                                                                                     \\
\textbf{Phoneme}    & 0.707                                                                                      & 0.724                                                                                      & 0.773                                                                              & \textbf{0.778}                                                                                     \\
\textbf{Fraud}      & 0.556                                                                                      & 0.547                                                                                      & \textbf{0.751}                                                                              & 0.596                                                                                     \\
\textbf{Creditcard} & 0.698                                                                                      & 0.676                                                                                      & \textbf{0.842}                                                                              & 0.747                                                                                     \\
\textbf{Vehicle}    & 0.689                                                                                      & 0.724                                                                                      & 0.8                                                                                & \textbf{0.813}                                                                                     \\
\textbf{Click}      & 0.627                                                                                      & 0.6                                                                                        & \textbf{0.751}                                                                              & 0.693                                                                                     \\
\textbf{Wind}       & 0.836                                                                                      & 0.849                                                                                      & 0.858                                                                              & \textbf{0.88}                                                                                      \\
\textbf{Pol}        & 0.907                                                                                      & 0.862                                                                                      & \textbf{1}                                                                                  & 0.991                                                                                     \\
\textbf{MNIST}      & 0.724                                                                                      & 0.471                                                                                      & 0.831                                                                              & \textbf{0.836}                                                                                     \\
\textbf{CIFAR10}    & 0.684                                                                                      & \textbf{0.76}                                                                                       & \textbf{0.76}                                                                               & 0.756                                                                                     \\ 
\textbf{AGNews}     & 0.953                                                                                            & 0.978                                                                                     & \textbf{0.991}                                                                               & 0.988                                                                                      \\
\textbf{DBPedia}    & 0.968                                                                                            & 0.902                                                                                     & \textbf{1}                                                                                   & \textbf{1}                                                                                          \\ \bottomrule
\end{tabular}}
\caption{AUROC scores of different variants of KNN-Shapley for mislabeled data detection on benchmark datasets. The higher, the better. 
}
\label{tb:mislabel-detection}
\end{table}

\begin{figure}[t]
    \centering
    \setlength\abovecaptionskip{0pt}
    \setlength\belowcaptionskip{-10pt}
    \centering
    \includegraphics[width=0.8\columnwidth]{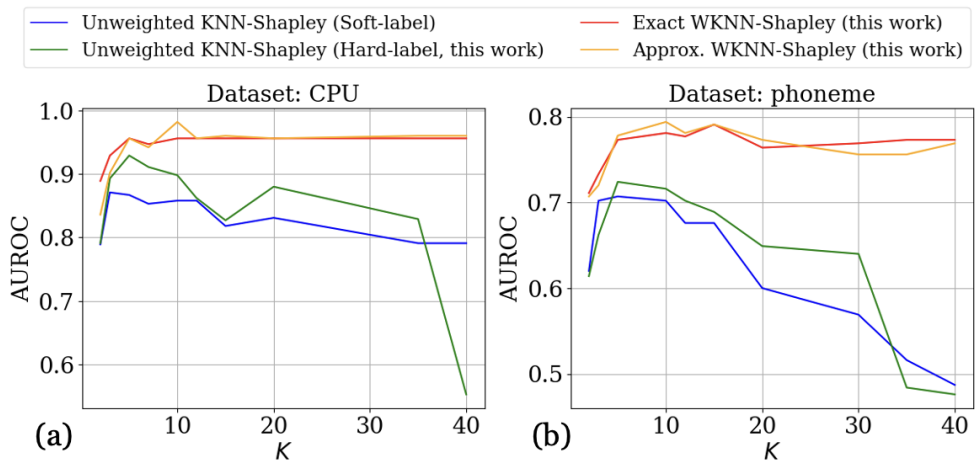}
    \caption{
    AUROC scores of different variants of KNN-Shapley for mislabeled data detection with different $K$s. The higher the curve is, the better the method is. 
    }
    \label{fig:ablation-varyK}
\end{figure}

\vspace{-2mm}
\section{Conclusion}
\label{sec:conclusion}

\vspace{-2mm}
In this study, we addressed WKNN-Shapley computation and approximation when using the accuracy of hard-label KNN with discretized weights as the utility function. 
Future work should explore polynomial time computation of exact Data Shapley for other learning algorithms. It is especially important to think about whether we can make some modifications to more complicated learning algorithms, e.g., neural networks, so that their exact Data Shapley can be computed efficiently. 

\section*{Acknowledgments}
This work was supported in part by the National Science Foundation under grants CNS-2131938, CNS-1553437, CNS-1704105, IIS-2312794, IIS-2313130, OAC-2239622, the ARL’s Army Artificial Intelligence Innovation Institute (A2I2), the Office of Naval Research Young Investigator Award, the Army Research Office Young Investigator Prize, Schmidt DataX award, Princeton E-ffiliates Award, \add{Amazon-Virginia Tech Initiative in Efficient and Robust Machine Learning, the Commonwealth Cyber Initiative,} and a Princeton's Gordon Y. S. Wu Fellowship. We are grateful to anonymous reviewers at AISTATS for their valuable feedback.

\newpage

\bibliographystyle{apalike}
\bibliography{ref}


\newpage
\onecolumn

\inappendixtrue
\appendix

\section{Extended Related Works}
\label{appendix:related-works}
\subsection{Data Shapley} 

\emph{Data Shapley} is one of the first principled approaches to data valuation that became increasingly popular \cite{ghorbani2019data, jia2019towards}. 
Data Shapley is based on the \emph{Shapley value}, a famous solution concept from game theory literature which is usually justified as the \emph{unique} value notion satisfying the following four axioms: 
\begin{enumerate}[label=\textbf{(\arabic*)}]
    \item \textbf{Null player:} if $\U(S \cup \{z_i\})=\U(S)$ for all $S \subseteq D \setminus \{z_i\}$, then $\phi_{z_i}(\U)=0$. 
    \item \textbf{Symmetry:} if $\U(S \cup \{z_i\}) = \U(S \cup \{z_j\})$ for all $S \subseteq D \setminus \{z_i, z_j\}$, then $\phi_{z_i}(\U)=\phi_{z_j}(\U)$. 
    \item \textbf{Linearity:} For utility functions $\U_1, \U_2$ and any $\alpha_1, \alpha_2 \in \R$, $\phi_{z_i}(\alpha_{1} \U_{1}+\alpha_{2} \U_{2}) = \alpha_{1} \phi_{z_i}(\U_{1}) + \alpha_{2} \phi_{z_i}(\U_{2})$. 
    \item \textbf{Efficiency:} for every $\U, \sum_{z_i \in D} \phi_{z_i}(\U)=\U(D)$.
\end{enumerate}

Since its introduction, numerous variants of Data Shapley have been developed \cite{jia2019efficient, ghorbani2020distributional, wang2020principled, bian2021energy, kwon2022beta, lin2022measuring, wu2022davinz, karlavs2022data, wang2023data, wang2023notegroup,liu20232d, wang2023threshold}, reflecting its effectiveness as a principled approach for quantifying data point contributions to ML model training. 
However, not all axioms mentioned above are considered necessary for a reasonable data valuation technique by the community. 
For example, \cite{kwon2022beta} argue that \textbf{(4) Efficiency} is not necessarily required for data valuation, and \cite{yan2020ifyoulike} argue that \textbf{(3) Linearity} is mainly a technical requirement and does not have a natural interpretation in the context of machine learning. 
That said, the \textbf{(1) Null player} and \textbf{(2) Symmetry} axioms are often seen as the ``fairness axioms'' and are deemed fundamental for any sound data valuation technique. Therefore, a reasonable Shapley approximation should uphold these two axioms to ensure the Shapley value's core advantages—fairness and equity—are not diminished. 
The deterministic approximation algorithm we develop in Section \ref{sec:deterministic-approx} meets these criteria.


\subsection{KNN-Shapley} 
The computation of the Shapley value is notoriously resource-intensive. 
To the best of our knowledge, unweighted KNN stands as the \emph{only} frequently deployed ML model where the exact Data Shapley can be efficiently computed. 
Due to its exceptional computational efficiency coupled with its capability to discern data quality, KNN-Shapley has become one of the most popular and practical data valuation techniques. 
For instance, \cite{ghorbani2022data} extends KNN-Shapley to active learning, \cite{shim2021online} applies it in a continual learning setting. 
Additionally, studies such as \cite{liang2020beyond,liang2021herald} have leveraged KNN-Shapley to eliminate ambiguous samples in NLP tasks, and \cite{courtnage2021shapley} has endorsed its use in semi-supervised learning data valuation. 
\cite{belaid2023optimizing} extends the analysis of KNN-Shapley to the calculation to the Shapley interaction index. 
KNN-Shapley has also shown its practicality in real-world scenarios. For example, \cite{pandl2021trustworthy} shows that KNN-Shapley is the \emph{only} practical data valuation technique for valuing large amounts of healthcare data. 
A concurrent work \cite{wang2023threshold} considers a simple variant of unweighted KNN termed \emph{Threshold KNN}, and develops an alternative of KNN-Shapley that can be easily incorporated with differential privacy. 
Both \cite{wang2023threshold} and our work demonstrate the importance of proper adjustments to the underlying KNN's configuration (i.e., the utility function) in developing new data valuation techniques with desired properties (computational efficiency, privacy compatibility, etc.). 

All the studies mentioned above focus on \emph{unweighted} KNN. 
On the other hand, \emph{weighted} KNN incorporates more information about the underlying dataset. 
Consequently, the Data Shapley score derived from weighted KNN might offer a better assessment of individual data point quality (as demonstrated in our experiments).

Finally, we note that some alternative data valuation techniques are as efficient as KNN-Shapley \cite{just2022lava, kwon2023data}. However, these methods lack a formal theoretical justification as the Shapley value-based approaches.


\newpage

\section{Additional Background of KNN-Shapley}
\label{appendix:knn-background}
\textbf{Notation Review.} 
Recall that we use $\alpha_{\xval}^{(S, j)}$ denotes the index (among $D$) of $j$th closest data point in $S$ to $\xval$.

\subsection{Unweighted KNN-Shapley}
\label{appendix:unweighted-knnsv}

The unweighted KNN-Shapley was originally proposed in \cite{jia2019efficient} and was later refined in \cite{wang2023noteknn}. 
Specifically, \cite{wang2023noteknn} considers the utility function for unweighted, soft-label KNN on a validation point $\zval$: 
\begin{align}
\U(S; \zval) &:= \frac{ \sum_{j=1}^{\min(K, |S|)} \ind[y_{\alpha_{\xval}^{(S, j)}} = \yval] }{\min(|S|, K)}
\label{eq:util-unweightedknn}
\end{align}
which is a special case of the utility function for weighted KNN in (\ref{eq:util-weighted-softlabel}). 
It can be interpreted as the probability of a soft-label KNN classifier in predicting the correct label for a validation point $\zval = (x^{(\test)}, y^{(\test)}) \in \Dval$. 
When $|S|=0$, $\U_{\zval}(S)$ is set to the accuracy by random guessing (i.e., 
$\U(\emptyset) = 1/\texttt{\#class}$). 
Here is the main result of \cite{wang2023noteknn}:

\begin{theorem}[KNN-Shapley \cite{wang2023noteknn}]
\label{thm:knn-shapley-full}
Consider the utility function in (\ref{eq:util-unweightedknn}). 
Given a validation data point $\zval = (x^{(\test)}, y^{(\test)})$ and a distance metric $d(\cdot, \cdot)$, if we sort the training set $D = \{z_i = (x_i, y_i)\}_{i=1}^N$ according to $d(x_i, x^{(\test)})$ in ascending order, then the Shapley value of each data point $\phi_{z_i}$ corresponding to utility function $\U_{\zval}$ can be computed recursively as follows:
\begin{align*}
    \phi_{z_N}  &= \frac{\ind[N \ge 2]}{N} \left( \ind[y_N = y^{(\test)}] - \frac{ \sum_{i=1}^{N-1} \ind[y_i = y^{(\test)}] }{N-1} \right) \left( \sum_{j=1}^{ \min(K, N) - 1 } \frac{1}{j+1} \right) + \frac{1}{N} \left( \ind[y_N = y^{(\test)}] - \frac{1}{C} \right) \\
    \phi_{z_i}  &= \phi_{z_{i+1}} + 
    \frac{ \ind[y_i = y^{(\test)}] - \ind[y_{i+1} = y^{(\test)}] }{N-1} 
    \left[
    \sum_{j=1}^{\min(K, N)} \frac{1}{j} 
    + \frac{\ind[N \ge K]}{K} \left( \frac{ \min(i, K) \cdot (N-1) }{i} - K
    \right)
    \right]
\end{align*}
where $C$ denotes the number of classes for the classification task. 
\end{theorem}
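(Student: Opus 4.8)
\textbf{Proof proposal for Theorem~\ref{thm:knn-shapley-full}.}
The plan is to directly exploit the combinatorial structure of the unweighted soft-label utility function in~(\ref{eq:util-unweightedknn}) together with Definition~\ref{def:shapley-value}, and to establish the recursion by comparing the Shapley values of two \emph{adjacent} training points in the distance-sorted order. First I would fix the validation point $\zval$ and adopt the sorted convention so that $d(x_i, \xval) \le d(x_{i+1}, \xval)$, which makes the neighbor index $\alpha_{\xval}^{(S, j)}$ easy to control. The key observation is that for the soft-label utility, the marginal contribution $\U(S \cup \{z_i\}) - \U(S)$ takes a clean form whenever $z_i$ enters the top-$K$ neighborhood of $S \cup \{z_i\}$: for $|S| \ge K$ it equals $\frac{1}{K}\left(\ind[y_i = \yval] - \ind[y_{\alpha_{\xval}^{(S,K)}} = \yval]\right)$, and for $|S| \le K-1$ it equals $\frac{1}{|S|+1}\ind[y_i=\yval] - \frac{1}{|S|(|S|+1)}\sum_{j \le |S|}\ind[y_{\alpha_{\xval}^{(S,j)}}=\yval]$, with the $|S|=0$ case handled by the random-guess convention $\U(\emptyset) = 1/C$.

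The central device is to compute the \emph{difference} $\phi_{z_i} - \phi_{z_{i+1}}$ rather than each Shapley value in isolation, because $z_i$ and $z_{i+1}$ are consecutive in distance and differ only in their label indicator. Plugging the utility into Definition~\ref{def:shapley-value} and subtracting, most terms cancel: the subsets $S \subseteq D \setminus \{z_i, z_{i+1}\}$ contribute identically to both Shapley values except through the label indicators $\ind[y_i = \yval]$ versus $\ind[y_{i+1} = \yval]$, so the difference factors as $\bigl(\ind[y_i=\yval] - \ind[y_{i+1}=\yval]\bigr)$ times a weighted count of subsets in which swapping $z_i$ for $z_{i+1}$ actually changes the top-$K$ membership. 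The main work is then to evaluate this weighted count. I would group the relevant subsets $S$ by their size $k-1$ and by how many of them place $z_i$ (equivalently $z_{i+1}$) inside the $K$ nearest neighbors; because of the sorted order, whether $z_i$ is a top-$K$ neighbor depends only on how many of the $i-1$ closer points appear in $S$, which reduces each count to a product of binomial coefficients. Summing these binomial sums against the Shapley weights $\frac{1}{N}\binom{N-1}{k-1}^{-1}$ and applying standard hockey-stick / Vandermonde-type identities should collapse the bracketed factor to the closed form $\frac{1}{N-1}\bigl[\sum_{j=1}^{\min(K,N)}\frac{1}{j} + \frac{\ind[N\ge K]}{K}\bigl(\frac{\min(i,K)(N-1)}{i} - K\bigr)\bigr]$.

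To close the recursion I would separately pin down the base case $\phi_{z_N}$, the value of the single farthest point. For this point the analysis simplifies because $z_N$ is a top-$K$ neighbor of $S \cup \{z_N\}$ only when $|S| \le K-1$, so its Shapley value splits into a ``small $S$'' part (giving the $\sum_{j=1}^{\min(K,N)-1}\frac{1}{j+1}$ harmonic term against the empirical fraction of correctly-labeled points among $z_1,\ldots,z_{N-1}$) and the singleton $|S|=0$ term $\frac{1}{N}(\ind[y_N=\yval] - \frac{1}{C})$ coming from the random-guess convention. Once $\phi_{z_N}$ is established and the difference formula is proven, the recursion for $\phi_{z_i}$ follows immediately by telescoping.

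I expect the main obstacle to be the careful bookkeeping in the weighted subset count for the difference formula: one must correctly separate the regimes $i \le K$ and $i > K$, handle the boundary where $|S|$ crosses $K$ (where the marginal-contribution expression changes form), and then verify that the resulting binomial sums telescope to exactly the stated $\frac{\min(i,K)(N-1)}{i} - K$ correction term. This is the step where an off-by-one in the neighbor-counting or a mismatched binomial identity would derail the computation, so it is where I would spend the most care; everything else is a fairly mechanical substitution into Definition~\ref{def:shapley-value}.
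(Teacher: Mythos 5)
Your proposed route is correct and is essentially the derivation this theorem rests on: the paper itself states Theorem \ref{thm:knn-shapley-full} as a cited background result without reproducing a proof, but it explicitly describes the underlying technique (plugging the unweighted utility into the formula for the \emph{difference} of Shapley values between two adjacent points in the sorted order, then collapsing the resulting binomial sums with standard combinatorial identities), which is exactly your plan, including the correct marginal-contribution expressions in the $|S| \ge K$ and $|S| \le K-1$ regimes, the factorization through $\ind[y_i = \yval] - \ind[y_{i+1} = \yval]$, and the separate treatment of the base case $\phi_{z_N}$ with the $\U(\emptyset) = 1/C$ convention. The only work you have deferred is the mechanical verification of the binomial identities yielding the $\frac{\min(i,K)(N-1)}{i} - K$ correction, which you correctly flag as the error-prone step; no conceptual gap remains.
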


The computation of all unweighted KNN-Shapley $(\phi_{z_1}, \ldots, \phi_{z_N})$ can be achieved in $O(N \log N)$ runtime in total, as the runtime is dominated by the sorting data points in $D$.

\subsection{Baseline Algorithm for Computing and Approximating WKNN-Shapley}
\label{appendix:baseline}

\paragraph{Exact Computation Algorithm.}
\cite{jia2019efficient} shows that the Data Shapley for weighted KNN can be computed exactly with a runtime of $O(N^K)$. 
The high-level idea, as described in Section \ref{sec:baseline-algorithm} in the maintext, is that we only need to consider evaluating $\U(S)$ for those $S$ where the addition of the target data point $z_i$ may change the prediction of KNN. Moreover, there are at most $O(N^K)$ such $S$s. 
For completeness, we state the specific expression for computing the exact WKNN-Shapley from \cite{jia2019efficient}. 
We note that the original theorem statement in \cite{jia2019efficient} has minor errors and we also fix it here. 

\begin{theorem}[\cite{jia2019efficient}]
\label{thm:WKNN-from-Jia}
Consider the utility function in (\ref{eq:util-weighted-softlabel}). 
Let $B_k(i)=\{S:|S|=k, z_i \notin S, S \subseteq D\}$. 
Let $r(\cdot)$ be a function that maps the set of training data to their ranks of similarity to $\xval$. Then, the Shapley value $\phi_{z_i}$ of each training point $z_i$ can be calculated recursively as follows:
\begin{align}
&\phi_{z_N} = \frac{1}{N}\sum_{k=0}^{K-1} \frac{1}{{N-1\choose k}} \!\!\sum_{S\in B_k(z_N)}\!\!
  \left[\U(S\cup \{z_N\}) - \U(S)\right]\\
&\phi_{z_{i}} = \phi_{z_{i+1}} + \frac{1}{N-1}\sum_{k=0}^{N-2} \frac{1}{{N-2\choose k}} \sum_{S\in D_{i,k}} A_{i,k} \left[\U(S\cup \{z_i\}) - \U(S\cup \{z_{i+1}\}) \right]
\end{align}
where 
\begin{align*}
D_{i,k} = 
\begin{cases}
B_k(z_i) \cap B_k(z_{i+1}) & 0\leq k\leq K-2 \\
B_{K-1}(z_i)\cap B_{K-1}(z_{i+1}) & K-1\leq k\leq N-2
\end{cases}
\end{align*}
and 
\begin{align*}
A_{i,k} = 
\begin{cases}
    1 & 0\leq k\leq K-2 \\
    {N - \max \left(i+1, \alpha_{\xval}^{(S, |S|)} \right) \choose k-K+1} & K-1\leq k\leq N-2
\end{cases}
\end{align*}
\end{theorem}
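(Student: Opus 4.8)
The plan is to prove the base case $\phi_{z_N}$ and the recursion for $\phi_{z_i}$ separately, in both cases exploiting the defining feature of the utility in (\ref{eq:util-weighted-softlabel}): $\U(S)$ depends only on the $K$ nearest neighbors of $\xval$ in $S$. I would first rewrite the Shapley value of Definition \ref{def:shapley-value} by indexing on the subset size $k=|S|$, so that $\phi_{z_i} = \frac{1}{N}\sum_{k=0}^{N-1}\binom{N-1}{k}^{-1}\sum_{S\subseteq D_{-z_i},\,|S|=k}[\U(S\cup\{z_i\})-\U(S)]$. The core of the argument is to form the pairwise difference $\phi_{z_i}-\phi_{z_{i+1}}$ and split each inner sum according to whether the \emph{other} index is present in $S$. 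The terms over subsets containing exactly one of $z_i,z_{i+1}$ are paired via the bijection $T\cup\{z_{i+1}\}\leftrightarrow T\cup\{z_i\}$ with $T\subseteq D\setminus\{z_i,z_{i+1}\}$; after this pairing the common $\U(T\cup\{z_i,z_{i+1}\})$ contributions cancel, leaving only $\U(T\cup\{z_i\})-\U(T\cup\{z_{i+1}\})$. Collecting all surviving terms shows that $\phi_{z_i}-\phi_{z_{i+1}}$ is a weighted sum of $g(S):=\U(S\cup\{z_i\})-\U(S\cup\{z_{i+1}\})$ over $S\subseteq D\setminus\{z_i,z_{i+1}\}$, in which a subset of size $k$ receives the combined weight $\binom{N-1}{k}^{-1}+\binom{N-1}{k+1}^{-1}$.

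Next I would apply the elementary identity $\binom{N-1}{k}^{-1}+\binom{N-1}{k+1}^{-1}=\frac{N}{N-1}\binom{N-2}{k}^{-1}$ (proved by writing both inverse coefficients in terms of $\binom{N-2}{k}^{-1}$ and summing $(N-1-k)+(k+1)=N$), which converts the combined weight into $\frac{1}{N-1}\binom{N-2}{k}^{-1}$ after multiplying by the prefactor $\frac{1}{N}$. This produces exactly the claimed recursive form once $g(S)$ is expanded. For the base case, I would use that $z_N$ is the farthest training point: adding $z_N$ to any $S$ with $|S|\ge K$ never displaces a member of the current $K$ nearest neighbors, so $\U(S\cup\{z_N\})=\U(S)$ and the marginal contribution vanishes. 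Only the terms $k\le K-1$ survive, giving the stated truncated sum over $B_k(z_N)$.

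The remaining and most delicate step is the KNN-specific combinatorial reduction of the sum of $g(S)$ into the factors $D_{i,k}$ and $A_{i,k}$. For $0\le k\le K-2$, after inserting either point the set has at most $K-1$ elements, so both $z_i$ and $z_{i+1}$ sit among the active neighbors and every size-$k$ subset avoiding both contributes directly, whence $D_{i,k}=B_k(z_i)\cap B_k(z_{i+1})$ and $A_{i,k}=1$. For $k\ge K-1$, $g(S)$ is nonzero only when the $K-1$ nearest neighbors of $S$ are all strictly closer than $z_i$ while the remaining points of $S$ are all farther than $z_{i+1}$; in that case $g(S)$ depends solely on this size-$(K-1)$ ``core'' $S'$ together with $z_i,z_{i+1}$. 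I would therefore group subsets by their core and count completions: the $k-(K-1)$ completing points must all have rank exceeding $\max\!\big(i+1,\alpha_{\xval}^{(S,\,|S|)}\big)$, yielding $\binom{N-\max(i+1,\alpha_{\xval}^{(S,|S|)})}{k-K+1}$ admissible choices and hence $A_{i,k}$, with the core ranging over $B_{K-1}(z_i)\cap B_{K-1}(z_{i+1})$.

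I expect this core-reduction bookkeeping to be the main obstacle: one must correctly pin down the rank threshold $\max(i+1,\alpha_{\xval}^{(S,|S|)})$, verify that the $K$-nearest-neighbor structure is genuinely preserved under the substitution $z_i\leftrightarrow z_{i+1}$ so that only the core and the swapped point enter $g(S)$, and handle the boundary cases of small $|S|$ and ties in the distances $d(x_j,\xval)$. Getting precisely this threshold right is also where the ``minor errors'' noted after the statement presumably reside, so careful verification of the completion count is essential.
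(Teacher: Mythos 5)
The paper does not actually prove this theorem---it defers entirely to Appendix E.2 of \cite{jia2019efficient}---so your reconstruction has to be judged on its own terms rather than against an in-paper argument. The skeleton is right and is the standard one: rewriting Definition \ref{def:shapley-value} by subset size, forming $\phi_{z_i}-\phi_{z_{i+1}}$, pairing $T\cup\{z_i\}$ with $T\cup\{z_{i+1}\}$ so that the joint terms $\U(T\cup\{z_i,z_{i+1}\})$ cancel, and merging the two weights via $\binom{N-1}{k}^{-1}+\binom{N-1}{k+1}^{-1}=\frac{N}{N-1}\binom{N-2}{k}^{-1}$ is exactly how the $\frac{1}{N-1}\binom{N-2}{k}^{-1}$ prefactor arises; the base case for $z_N$ (the farthest point never displaces an existing neighbor once $|S|\ge K$, so only $k\le K-1$ survives) is also correct.

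There is one substantive slip in the core-reduction step. For $k\ge K-1$ you assert that $g(S)\neq 0$ \emph{only when} the $K-1$ nearest neighbors of $S$ are all strictly closer than $z_i$ and the remaining points are all farther than $z_{i+1}$. That is not the right necessary condition. The correct one is that at most $K-1$ points of $S$ precede $z_i$ in the distance ranking, i.e., $z_i$ (equivalently $z_{i+1}$, since they are adjacent in rank and $S$ avoids both) falls inside the $K$ nearest neighbors after insertion; the size-$(K-1)$ core may perfectly well contain points ranked after $z_{i+1}$. For instance with $K=3$, $i=2$, $S=\{z_1,z_5,z_6\}$, the core is $\{z_1,z_5\}$ and yet $z_2$ still enters the $3$-NN set, so $g(S)$ need not vanish. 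This is precisely why the threshold in $A_{i,k}$ is $\max\bigl(i+1,\alpha_{\xval}^{(S,|S|)}\bigr)$ rather than $i+1$ alone: when the core's farthest element lies beyond $z_{i+1}$ the completion points must clear the core, and when the core lies entirely before $z_i$ they must clear $z_{i+1}$. Your final completion count is the correct one and silently contradicts your stated condition; carried through literally, that condition would restrict the cores to subsets of $\{z_1,\dots,z_{i-1}\}$ and undercount. Fix the characterization (and note that, under the constraint that the completions clear the core, the two cases of the $\max$ are exactly the two positions of the core's farthest element relative to $z_{i+1}$) and the argument goes through.
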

\begin{proof}
The proof proceeds by standard combinatorial analysis and we refer readers to Appendix E.2. in \cite{jia2019efficient} for details. We note that $\alpha_{\xval}^{(S, |S|)}$ means the index of the farthest data point to $\xval$ in $S$.  
\end{proof}

\begin{remark}
While \cite{jia2019efficient} state the above results for soft-label weighted KNN, this algorithm is in fact fairly general and also applies to hard-label weighted KNN. 
\end{remark}

\paragraph{Monte Carlo Algorithm.} 
The Monte Carlo approximation algorithm for WKNN-Shapley from \cite{jia2019efficient} is a simple adaptation of the famous \emph{permutation sampling} algorithm \cite{maleki2015addressing}. We refer the reader to Section 5 of \cite{jia2019efficient} for the detailed description and error analysis.



\newpage

\section{Additional Discussion of Weights Discretization}
\label{appendix:error-disc}

\paragraph{Value Rank Preservation.} We show that the Shapley value computed based on the discrete weights has the same ranking order compared with the Shapley value computed on the continuous weights (it might create ties but will not reverse the original order).

Here, we consider the binary classification setting since our approach for computing WKNN-Shapley for multi-class classification (in Appendix \ref{appendix:multiclass}) makes a reduction to the binary classification setting. 
We make a very mild assumption that if $d(x_i, \xval) \le d(x_j, \xval)$, then $w_i \ge w_j$. That is, the closer the training point is to the query, the higher the weight the training point will have. 
This is natural for any reasonable weighting scheme for KNN. 

\newcommand{\Udisc}{\U^{\texttt{disc}}}

\begin{lemma}
Under the assumption of weight function stated above, for any weight assignment function and any two data points $z_i, z_j$, 
if $\phi_{z_i}(\U) \ge \phi_{z_j}(\U)$, then we have $\phi_{z_i}(\Udisc) \ge \phi_{z_j}(\Udisc)$ where $\Udisc$ refers to the new utility function after weights discretization. 
\end{lemma}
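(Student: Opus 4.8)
The plan is to reduce the rank-preservation claim to two structural facts about hard-label WKNN-Shapley, each of which depends on the weights only through their \emph{ordering}, which the discretization map $\round(\cdot)$ preserves. First I would record the order-preservation: the distances $d(x_i,\xval)$ are untouched, and since $\round$ is monotone nondecreasing, $w_i\ge w_j \Rightarrow \round(w_i)\ge\round(w_j)$; hence the standing assumption ``$d(x_i,\xval)\le d(x_j,\xval)\Rightarrow w_i\ge w_j$'' continues to hold for $\Udisc$. This is what lets me run the \emph{same} monotonicity argument for both utilities.

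The first structural fact is the \emph{sign rule}: by Theorem \ref{thm:shapley-counting}, $\phi_{z_i}=\frac{2\ind[y_i=\yval]-1}{N}\sum_\ell {N-1\choose\ell}^{-1}\Gil$ with every $\Gil\ge 0$, so $\phi_{z_i}\ge 0$ when $y_i=\yval$ and $\phi_{z_i}\le 0$ when $y_i\ne\yval$, for both $\U$ and $\Udisc$. The second is \emph{within-label monotonicity}: if $z_a,z_b$ share a label and $d_a\le d_b$, then for every $S\subseteq D\setminus\{z_a,z_b\}$ we have $\U(S\cup z_a)\ge \U(S\cup z_b)$ when the common label is $\yval$ (and $\le$ otherwise). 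I would prove this by a case analysis on whether $z_a$ and/or $z_b$ fall within the $K$ nearest neighbors: the two top-$K$ sets differ in at most one slot, so the comparison of signed-weight sums reduces either to $w_a\ge w_b$, or to $w_a\ge\wtil_{s_K}$ for the displaced neighbor $s_K$ (which is farther than $z_a$, hence $\wtil_{s_K}\le w_{s_K}\le w_a$). Feeding this pointwise dominance into the standard identity $\phi_{z_a}-\phi_{z_b}=\frac{1}{N-1}\sum_k{N-2\choose k}^{-1}\sum_{|S|=k}[\U(S\cup z_a)-\U(S\cup z_b)]$ yields $\phi_{z_a}\ge\phi_{z_b}$ (resp. $\le$), and the identical computation with $\Udisc$ gives the analogous inequality since its weights obey the same order.

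With these in hand I would finish by a case analysis on the labels of $z_i,z_j$. If the labels differ the sign rule settles it: when $y_i=\yval,\,y_j\ne\yval$ we always have $\phi_{z_i}(\Udisc)\ge 0\ge\phi_{z_j}(\Udisc)$. If the labels agree, within-label monotonicity makes both the continuous and the discretized values order the pair the same way as their distances; combined with the hypothesis $\phi_{z_i}(\U)\ge\phi_{z_j}(\U)$ this forces $\phi_{z_i}(\Udisc)\ge\phi_{z_j}(\Udisc)$ whenever $d_i<d_j$, while the case $d_i=d_j$ is immediate from the symmetry axiom (equal distance forces $w_i=w_j$ and $\round(w_i)=\round(w_j)$, so $z_i,z_j$ are interchangeable under both utilities).

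The main obstacle is the boundary case of \emph{exact ties}: a same-label pair with $d_i>d_j$ but $\phi_{z_i}(\U)=\phi_{z_j}(\U)$, or an opposite-label pair with $\phi_{z_i}(\U)=\phi_{z_j}(\U)=0$. Here the lemma demands that a continuous tie survive discretization, since otherwise discrete monotonicity would split the pair strictly by distance and contradict the required direction. Because $\round(\cdot)$ does not commute with summation, individual utilities $\U(S)$ can flip, so tie-preservation is not automatic. My plan for this step is to observe that a continuous tie forces \emph{every} summand of the difference identity to vanish, i.e.\ $\U(S\cup z_i)=\U(S\cup z_j)$ for all $S$ (the summands are nonnegative), and then to argue that this interchangeability condition is governed only by the labels and the distance ordering of the available points, not by precise weight magnitudes, so it is preserved by the order-preserving discretization. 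Making this reduction rigorous for general $K$, and likewise establishing that the null-player (zero-value) property is discretization-invariant, is the delicate technical heart of the argument.
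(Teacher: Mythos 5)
Your main line is the same as the paper's: opposite labels are handled by the sign of the value (which depends only on $\ind[y=\yval]$, not on weight magnitudes), and equal labels are handled by pointwise within-label monotonicity in distance, which survives discretization because rounding preserves the ordering of the weights. Your justification of the monotonicity step (the two top-$K$ sets differ in at most one slot, so the comparison reduces either to $w_a\ge w_b$ or to $w_a\ge \wtil_{s_K}$ for the displaced $K$-th neighbor) is exactly the right argument and is in fact spelled out more carefully than in the paper, which only gives the one-line implication ``if $\U(S\cup\{z_j\})-\U(S)=1$ then $\U(S\cup\{z_i\})-\U(S)=1$.''

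Where you go beyond the paper is the tie case, and there your plan has a concrete flaw. You correctly observe that a continuous tie between same-label points forces $\U(S\cup\{z_i\})=\U(S\cup\{z_j\})$ for every $S$ (the summands all have one sign), but the follow-up claim that this interchangeability ``is governed only by the labels and the distance ordering, not by precise weight magnitudes'' is false for $K\ge 2$: whether $\ind\left[\sigma+\wtil_i\ge 0\right]=\ind\left[\sigma+\wtil_j\ge 0\right]$ depends on where the partial sum $\sigma$ of the other $K-1$ signed weights falls relative to $-\wtil_i$ and $-\wtil_j$, and rounding does not commute with summation, so a pair of equal indicators can split after discretization even though every pairwise weight comparison is preserved (e.g., $\sigma=0.45-0.55$, $\wtil_i=0.45$, $\wtil_j=0.56$ gives $1=1$ continuously but $0\ne 1$ after one-bit rounding). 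So the reduction you call the ``delicate technical heart'' cannot be completed by the route you sketch. To be fair, the paper's own proof does not confront this either: it asserts that $\phi_{z_i}(\U)\ge\phi_{z_j}(\U)$ \emph{implies} $d(x_i,\xval)\le d(x_j,\xval)$, which is only the contrapositive of strict monotonicity and silently assumes no ties. Your non-tie argument is complete and matches the paper; the tie case is a genuine loose end in both treatments, and you deserve credit for flagging it, but your proposed fix for it does not work as stated.
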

\begin{proof}
The following cases will result in $\phi_{z_i}(\U) \ge \phi_{z_j}(\U)$: 

\textbf{Case 1:} if $y_i = \yval$, $y_j \ne \yval$, it is easy to see that for any $S \subseteq D$ we have $\U(S \cup z_i) \ge \U(S)$ while $\U(S \cup z_j) \le \U(S)$, and hence we have $\phi_{z_i} \ge 0$ while $\phi_{z_i} \le 0$, regardless of the specific magnitude of the weights. Hence, we still have $\phi_{z_i}(\Udisc) \ge 0$ and $\phi_{z_j}(\Udisc) \le 0$ after weights discretization. 

\textbf{Case 2:} if $y_i = y_j = \yval$, then $\phi_{z_i}(\U) \ge \phi_{z_j}(\U)$ implies that $d(x_i, \xval) \le d(x_j, \xval)$ and $w_i \ge w_j$. This is because for any $S \subseteq D \setminus \{z_i, z_j\}$, if $\U(S \cup \{z_j\}) - \U(S) = 1$, we must also have $\U(S \cup \{z_i\}) - \U(S) = 1$. Since weight discretization does not change the rank order of weights $w_i$ and $w_j$, we still have $\phi_{z_i}(\Udisc) \ge \phi_{z_j}(\Udisc)$. 

\textbf{Case 3:} if $y_i = y_j \ne \yval$, then $\phi_{z_i}(\U) \ge \phi_{z_j}(\U)$ implies that $d(x_i, \xval) \ge d(x_j, \xval)$ and $w_i \le w_j$. This is because for any $S \subseteq D \setminus \{z_i, z_j\}$, if $\U(S \cup \{z_i\}) - \U(S) = -1$, we must also have $\U(S \cup \{z_j\}) - \U(S) = 1$. Since weight discretization does not change the rank order of weights $w_i$ and $w_j$, we still have $\phi_{z_i}(\Udisc) \ge \phi_{z_j}(\Udisc)$. 
\end{proof}

\paragraph{Value Deviation and Performance on Downstream Tasks.} 
It is difficult to analytically derive or upper bound the deviation of WKNN-Shapley score caused by weight discretization. Therefore, in Appendix \ref{sec:eval-discretization}, we empirically investigate such value deviation. We also evaluate the impact of weight discretization on WKNN-Shapley's performance on downstream tasks such as mislabeled/noisy data detection. Overall, we conclude that the impact from weight discretization is very small even when the number of discretization bits $b$ is as small as $3$.

\newpage

\section{Additional Details for Exact and Approximation Algorithm for WKNN-Shapley}
\label{appendix:method-details}

We state the full version of Theorem \ref{thm:Fi-recursive} here, and defer the proof to Appendix \ref{appendix:proof}. 

\begin{theorem}[Full version of Theorem \ref{thm:Fi-recursive}]
\label{thm:Fi-recursive-full}
When $K > 1$,\footnote{Since \cite{jia2019efficient} has shown that weighted KNN-Shapley can be computed in $O(N^K)$ time complexity, we focus on the setting where $K > 1$.} for $\ell = 1$ we have
$
\Fi[m, 1, s] = 
\begin{cases} 
1 & s = w_m \\
0 & s \ne w_m
\end{cases}
$. 
We can then compute $\Fi[m, \ell, s]$ for $\ell \ge 2$ with the following relations: 

If $\ell \le K-1$, we have
\begin{align}
    \Fi[m, \ell, s] = \sum_{t=1}^{m-1} \Fi[t, \ell-1, s - w_m]
\label{eq:recursive}
\end{align}
and if $\ell \ge K$, we have 
\begin{align}
    \Fi[m, \ell, s] = 
    \begin{cases}
        0 & m < i \\
        \sum_{t=1}^{m-1} \Fi[t, K-1, s] {N-m \choose \ell-K} & m > i
    \end{cases}
\label{eq:explicit}
\end{align}
Note that we set $\Fi[i, \cdot, \cdot] = 0$ for mathematical convenience. 
\end{theorem}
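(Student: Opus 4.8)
The plan is to establish each of the three regimes by decomposing a counted subset $S$ around its ``pivot'' point $z_m$, using the sorted ordering (smaller index means closer to $\xval$, with ties broken by index) so that the constraint ``index $< m$'' is exactly ``closer to $\xval$ than $z_m$.'' For the base case $\ell = 1$, condition (2) in Definition~\ref{def:Fi-binary} forces $S = \{z_m\}$ (the lone point must be the closest in $S$), the condition $\condKNN$ is vacuous since $|S \cup \{z_i\}| = 2 \le K$, and condition (3) reduces — as $\min(1, K-1) = 1$ when $K > 1$ — to $\wtil_m = s$. Together with the convention $\Fi[i, \cdot, \cdot] = 0$, this yields the indicator formula.

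For $2 \le \ell \le K-1$, the point $z_m$ is the farthest element of $S$ (since $\min(\ell, K) = \ell$), and I would set up the bijection $S \leftrightarrow S' := S \setminus \{z_m\}$, where $S'$ is a size-$(\ell-1)$ subset of $\{z_1, \ldots, z_{m-1}\} \setminus \{z_i\}$; its own farthest element $z_t$ satisfies $t \le m-1$, so grouping by $t$ enumerates each $S'$ exactly once and counts it in $\Fi[t, \ell-1, \cdot]$ (the summand with $t = i$ vanishing by convention). Since $|S \cup \{z_i\}| \le K$, $\condKNN$ is again vacuous, and because $\min(\ell, K-1) = \ell$ the sum in condition (3) over $S$ equals that over $S'$ plus the signed weight $\wtil_m$ of $z_m$, producing the shift $s \mapsto s - \wtil_m$ and hence the recursion.

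The case $\ell \ge K$ is where I expect the main difficulty, since $\condKNN$ becomes a genuine (non-vacuous) constraint. Here $z_m$ is the $K$-th closest point of $S$, so $S = S_{\mathrm{close}} \sqcup \{z_m\} \sqcup S_{\mathrm{far}}$ with $|S_{\mathrm{close}}| = K-1$ (all indices $< m$) and $|S_{\mathrm{far}}| = \ell - K$ (all indices $> m$). The key step is the dichotomy on the sign of $m - i$: I would show $\condKNN$ holds for \emph{every} such $S$ when $m > i$ and for \emph{no} such $S$ when $m < i$ (with $m = i$ excluded by convention). Indeed, the number of points of $S$ strictly closer to $\xval$ than $z_i$ is at most $K-1$ if $m > i$ (only some of the $K-1$ close points qualify, so $z_i$ lands within the top $K$), but at least $K$ if $m < i$ (all $K-1$ close points \emph{and} $z_m$ qualify, pushing $z_i$ out of the top $K$). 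Given $m > i$, I count the two blocks independently: $S_{\mathrm{far}}$ is any $(\ell-K)$-subset of $\{z_{m+1}, \ldots, z_N\}$, giving $\binom{N-m}{\ell-K}$, while $S_{\mathrm{close}}$ ranges over size-$(K-1)$ subsets of $\{z_1, \ldots, z_{m-1}\} \setminus \{z_i\}$ whose $K-1$ signed weights sum to $s$ — and grouping these by their farthest element $z_t$ (noting $\condKNN$ is vacuous at size $K-1$) gives exactly $\sum_{t=1}^{m-1} \Fi[t, K-1, s]$. Since condition (3) for $S$ depends only on $S_{\mathrm{close}}$, the product of the two counts yields the explicit formula.

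The delicate points are all concentrated in this last case: justifying the sign dichotomy rigorously via the count of how many points of $S$ precede $z_i$; verifying that the three blocks $S_{\mathrm{close}}$, $\{z_m\}$, $S_{\mathrm{far}}$ may be chosen independently with no overlap or double-counting and that the tracked sum $s$ is determined by $S_{\mathrm{close}}$ alone; and keeping the index ranges exact under the tie-breaking convention and the exclusion of $z_i$ through $\Fi[i, \cdot, \cdot] = 0$.
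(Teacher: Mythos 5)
Your proposal is correct and follows essentially the same route as the paper's (much terser) proof: the base case forces $S=\{z_m\}$, the $\ell\le K-1$ case partitions by the next-farthest point with the weight-shift, and the $\ell\ge K$ case splits $S$ into the $K-1$ closer points, the pivot $z_m$, and a freely chosen far block counted by $\binom{N-m}{\ell-K}$, with the $m$-versus-$i$ dichotomy resolving $\condKNN$. Your use of $\wtil_m$ rather than $w_m$ in the base case and the shift is in fact the version consistent with Definition~\ref{def:Fi-binary} and the pseudocode, so that is not a defect of your argument.
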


In the following, we present the pseudocode for our exact computation and approximation algorithm for WKNN-Shapley. 
For the clarity of presentation, we first show a reader-friendly version but inefficient version of the pseudo-code for the exact computation algorithm from Section \ref{sec:exact-shapley} in Appendix \ref{appendix:pseudocode-readable}. 
We then show the pseudo-code that optimizes the runtime (but less readable) in Appendix \ref{appendix:pseudocode}. 
We further show the pseudo-code for our deterministic approximation algorithm in Appendix \ref{appendix:pseudocode-approx}. 

\textbf{Notation.} 
Recall that we use $\wspace$ to denote the discretized space of $[0, 1]$, where we create $2^b$ equally spaced points within the interval when we use $b$ bits for discretization. We denote $\nwspace := |\wspace| = 2^b$ the size of the weight space. Furthermore, we use $\wspace_{(K)}$ to denote the discretized space of $[0, K]$ (where we create $K 2^b$ equally spaced points within the interval).

\newpage

\subsection{Reader-friendly Pseudo-code}
\label{appendix:pseudocode-readable}

Here, we show a reader-friendly version of the pseudo-code for our algorithms for computing WKNN-Shapley for binary classification setting. 
The runtime-optimized version of the pseudo-code is shown in Appendix \ref{appendix:pseudocode}. 

\inappendixfalse
\centerline{
\scalebox{0.8}{
\begin{minipage}[t]{1\linewidth}
\begin{algorithm}[H]
\ifinappendix
    \caption{Weighted KNN-Shapley for binary classification}
    \label{alg:efficient-wknn}
\else
    \caption{Weighted KNN-Shapley for binary classification (reader-friendly version)}
    \label{alg:efficient-wknn-readable}
\fi
\begin{algorithmic}[1]

\State \textbf{Input:} 
\begin{itemize}
    \item $K$ -- hyperparameter of weighted KNN algorithm. 
    \item $\zval = (\xval, \yval)$ -- the validation point. 
    \item $D = \{z_i = (x_i, y_i)\}_{i=1}^N$ -- sorted training set where $d(x_i, \xval) \le d(x_j, \xval)$ for any $i \le j$.
\end{itemize}

\State

\State Compute the weight $w_i$ for $i \in \{1, \ldots, N\}$. 

\State $\widetilde{w}_{j} = (2 \ind[\yval = y_{j}] - 1) w_{j}$ for $i \in \{1, \ldots, N\}$.

\State

\For{$i \in \{1, \ldots, N\}$}

\State

\State \cmt{Initialize $\Fi$}
\State Initialize $\Fi[m, \ell, s] = 0$ for $m \in \{1, \ldots, N \}, \ell \in \{1, \ldots, K-1\}, s \in \wspace_{(K)}$. 

\For{$m \in \{ 1, \ldots, N \} \setminus \{i\}$}
\State $\Fi[m, 1, \wtil_m] = 1$ \texttt{\textcolor{blue}{(Theorem \ref{thm:Fi-recursive-full})}}
\EndFor

\State 

\ifinappendix
    \State \cmt{Compute $\Fi$ \textcolor{blue}{(Runtime-optimized version)}}
    \For{$\ell \in \{2, \ldots, K-1\}$}
    \State $F_0[:] = \sum_{t=1}^{\ell-1} \Fi[t, \ell-1, :]$
    \For{$m \in \{ \ell, \ldots, N \} \setminus \{i\}$}
    \For{$s \in \wspace_{(K)}$}
    \State $\Fi[m, \ell, s] = F_0[s - w_m]$
    \EndFor
    \EndFor
    \EndFor
\else
    \State \cmt{Compute $\Fi$ \textcolor{blue}{(Runtime-optimized version is in Appendix \ref{appendix:pseudocode})}}
    
    \For{$\ell \in \{2, \ldots, K-1\}$}
    \For{$m \in \{ \ell, \ldots, N \} \setminus \{i\}$}
    \For{$s \in \wspace_{(K)}$}
    \State $\Fi[m, \ell, s]
    = \sum_{t=1}^{m-1} \Fi[t, \ell-1, s - \wtil_m]$ 
    \texttt{\textcolor{blue}{(Equation (\ref{eq:recursive}) in Theorem \ref{thm:Fi-recursive-full})}}
    \EndFor
    \EndFor
    \EndFor
\fi

\State

\ifinappendix
    \State \cmt{Compute $\Rim$ \textcolor{blue}{(Runtime-optimized version)}}
    \For{$s \in \wspace_{(K)}$}
    \State 
    $R_0[s] = \sum_{t=1, t \ne i}^{ \max(i+1, K+1)-1 } \Fi[t, K-1, s]$. 
    \EndFor
    \For{$m \in \{ \max(i+1, K+1), \ldots, N \}$}
    \State 
    $\Rim = 
    \begin{cases}
    \sum_{s \in [-\wtil_i, -\wtil_m)} R_0[s] & \text{for } y_i = \yval \\
    \sum_{s \in [-\wtil_m, -\wtil_i)} R_0[s] & \text{for } y_i \ne \yval 
    \end{cases}
    $
    \State $R_0 = R_0 + \Fi[m, K-1, :]$
    \EndFor
\else
    \State \cmt{Compute $\Rim$ \textcolor{blue}{ (Runtime-optimized version is in Appendix \ref{appendix:pseudocode})}}
    
    \For{$m \in \{ \max(i+1, K+1), \ldots, N \}$}
    \State 
    $\Rim = 
    \begin{cases}
    \sum_{t=1}^{m-1} \sum_{s \in [-\wtil_i, -\wtil_m)} \Fi[t, K-1, s] & \text{for } y_i = \yval \\
    \sum_{t=1}^{m-1} \sum_{s \in [-\wtil_m, -\wtil_i)} \Fi[t, K-1, s] & \text{for } y_i \ne  \yval 
    \end{cases}
    $ \textcolor{blue}{ \texttt{(Theorem \ref{thm:wknn-shapley-expression})} }
    \EndFor
\fi

\State

\State \cmt{Compute $\Gil$}

\State $\Gizero = \ind[w_i < 0]$.\footnote{Recall that we define $\U(S) = \ind \left[ \sum_{j=1}^{\min(K, |S|)} \widetilde{w}_{(j)} \ge 0 \right]$, hence $\U(\{z_i\}) - \U(\emptyset) \in \{-1, 0\}$ and is equal to $-1$ if and only if $w_i < 0$.}

\For{$\ell \in \{1, \ldots, K-1\}$}
\State
$\Gil = 
\begin{cases}
\sum_{m \in [N] \setminus i} \sum_{s \in [-\wtil_i, 0)} \Fi\left[m, \ell, s\right] & \text{for } y_i = \yval \\
\sum_{m \in [N] \setminus i} \sum_{s \in [0, -\wtil_i)} \Fi\left[m, \ell, s\right] & \text{for } y_i \ne \yval 
\end{cases}
$ \textcolor{blue}{ \texttt{(Theorem \ref{thm:Gil-and-Fi})} }
\EndFor

\State 

\State \cmt{Compute the Shapley value for $z_i$}

\State $\phi_{z_i} = \sign(w_i) \left[
\frac{1}{N} \sum_{\ell=0}^{K-1} \frac{\Gil}{{N-1 \choose \ell}} + 
\sum_{m = \max(i+1, K+1)}^{N} \frac{\Rim}{ m {m-1 \choose K} } 
\right]$.\footnote{
$\sign(w) = 
\begin{cases}
    1 & w>0 \\
    0 & w=0 \\
    -1 & w<0    
\end{cases}$.
} \textcolor{blue}{ \texttt{(Theorem \ref{thm:wknn-shapley-expression})} }

\EndFor
\end{algorithmic}
\end{algorithm}
\end{minipage}
}
}

\newpage

\subsection{Detailed Pseudo-code used in Implementation}
\label{appendix:pseudocode}

Here, we show the runtime-optimized version of the pseudo-code for our algorithms for computing WKNN-Shapley for binary classification setting. 
Specifically, the for-loops for computing $\Fi$ and $\Rim$ can be optimized for efficiency. 

\inappendixtrue

\newpage

\subsection{Detailed Pseudo-code for approximation algorithm}
\label{appendix:pseudocode-approx}

Here, we show the pseudo-code for our deterministic approximation algorithms for WKNN-Shapley from Section \ref{sec:deterministic-approx}. We highlight its difference with the exact computation algorithm. 

\centerline{
\scalebox{0.8}{
\begin{minipage}[t]{1\linewidth}
\begin{algorithm}[H]
\ifinappendix
    \caption{Approximation of Weighted KNN-Shapley for binary classification}
    \label{alg:efficient-wknn-approx}
\else
    \caption{\red{Approximation of} Weighted KNN-Shapley for binary classification (reader-friendly version)}
    \label{alg:efficient-wknn-readable-approx}
\fi
\begin{algorithmic}[1]

\State \textbf{Input:} 
\begin{itemize}
    \item $K$ -- hyperparameter of weighted KNN algorithm. 
    \item $\zval = (\xval, \yval)$ -- the validation point. 
    \item $D = \{z_i = (x_i, y_i)\}_{i=1}^N$ -- sorted training set where $d(x_i, \xval) \le d(x_j, \xval)$ for any $i \le j$.
    \item \red{$\mstar$ -- hyperparameter for SV approximation (Section \ref{sec:deterministic-approx}). 
    $\mstar = N$ for exact SV calculation.} 
\end{itemize}

\State

\State Compute the weight $w_i = \weightfunc(x_i)$ for $i \in \{1, \ldots, N\}$. 

\State $\widetilde{w}_{j} = (2 \ind[\yval = y_{j}] - 1) w_{j}$ for $i \in \{1, \ldots, N\}$.

\State

\For{$i \in \{1, \ldots, N\}$}

\State

\State \cmt{Initialize $\Fi$}
\State Initialize $\Fi[m, \ell, s] = 0$ for $m \in \{1, \ldots, \red{\mstar} \}, \ell \in \{1, \ldots, K-1\}, s \in \wspace_{(K)}$. 

\For{$m \in \{ 1, \ldots, \red{\mstar} \} \setminus \{i\}$}
\State $\Fi[m, 1, \wtil_m] = 1$ \texttt{\textcolor{blue}{(Theorem \ref{thm:Fi-recursive-full})}}
\EndFor

\State 

\ifinappendix
    \State \cmt{Compute $\Fi$ \textcolor{blue}{(Runtime-optimized version)}}
    \For{$\ell \in \{2, \ldots, K-1\}$}
    \State $F_0[:] = \sum_{t=1}^{\ell-1} \Fi[t, \ell-1, :]$
    \For{$m \in \{ \ell, \ldots, \red{\mstar} \} \setminus \{i\}$}
    \For{$s \in \wspace_{(K)}$}
    \State $\Fi[m, \ell, s] = F_0[s - w_m]$
    \EndFor
    \EndFor
    \EndFor
\else
    \State \cmt{Compute $\Fi$ \textcolor{blue}{(Runtime-optimized version in Appendix \ref{appendix:pseudocode})}}
    
    \For{$\ell \in \{2, \ldots, K-1\}$}
    \For{$m \in \{ \ell, \ldots, \red{\mstar} \} \setminus \{i\}$}
    \For{$s \in \wspace_{(K)}$}
    \State $\Fi[m, \ell, s]
    = \sum_{t=1}^{m-1} \Fi[t, \ell-1, s - \wtil_m]$
    \EndFor
    \EndFor
    \EndFor
\fi

\State

\ifinappendix
    \State \cmt{Compute $\Rim$ \textcolor{blue}{(Runtime-optimized version)}}
    \For{$s \in \wspace_{(K)}$}
    \State 
    $R_0[s] = \sum_{t=1, t \ne i}^{ \max(i+1, K+1)-1 } \Fi[t, K-1, s]$. 
    \EndFor
    \For{$m \in \{ \max(i+1, K+1), \ldots, \red{\mstar} \}$}
    \State 
    $\Rim = 
    \begin{cases}
    \sum_{s \in [-\wtil_i, -\wtil_m)} R_0[s] & \text{for } y_i = \yval \\
    \sum_{s \in [-\wtil_m, -\wtil_i)} R_0[s] & \text{for } y_i \ne \yval 
    \end{cases}
    $
    \State $R_0 = R_0 + \Fi[m, K-1, :]$
    \EndFor
\else
    \State \cmt{Compute $\Rim$ \textcolor{blue}{(Runtime-optimized version in Appendix \ref{appendix:pseudocode})}}
    
    \For{$m \in \{ \max(i+1, K+1), \ldots, \red{\mstar} \}$}
    \State 
    $\Rim = 
    \begin{cases}
    \sum_{t=1}^{m-1} \sum_{s \in [-\wtil_i, -\wtil_m)} \Fi[t, K-1, s] & \text{for } y_i = \yval \\
    \sum_{t=1}^{m-1} \sum_{s \in [-\wtil_m, -\wtil_i)} \Fi[t, K-1, s] & \text{for } y_i \ne  \yval 
    \end{cases}
    $
    \EndFor
\fi

\State

\State \cmt{Compute $\Gil$}

\State $\widetilde{\texttt{G}}_{i,0}^{(\mstar)} = \ind[w_i < 0]$.\footnote{Recall that we define $\U(S) = \ind \left[ \sum_{j=1}^{\min(K, |S|)} \widetilde{w}_{(j)} \ge 0 \right]$, hence $\U(\{z_i\}) - \U(\emptyset) \in \{-1, 0\}$ and is equal to $-1$ if and only if $w_i < 0$.}

\For{$\ell \in \{1, \ldots, K-1\}$}
\State
$\Giltil^{(\mstar)} = 
\begin{cases}
\sum_{m \in [\red{\mstar}] \setminus i} \sum_{s \in [-\wtil_i, 0)} \Fi\left[m, \ell, s\right] & \text{for } y_i = \yval \\
\sum_{m \in [\red{\mstar}] \setminus i} \sum_{s \in [0, -\wtil_i)} \Fi\left[m, \ell, s\right] & \text{for } y_i \ne \yval 
\end{cases}
$ \texttt{\textcolor{blue}{(Definition \ref{def:approxmstar})}}
\EndFor

\State 

\State \cmt{Compute the Shapley value for $z_i$}

\State $\phihatmstar = \sign(w_i) \left[
\frac{1}{N} \sum_{\ell=0}^{K-1} \frac{\Giltil^{(\mstar)}}{{N-1 \choose \ell}} + 
\sum_{m = \max(i+1, K+1)}^{\red{\mstar}} \frac{\Rim}{ m {m-1 \choose K} } 
\right]$.\footnote{
$\sign(w) = 
\begin{cases}
    1 & w>0 \\
    0 & w=0 \\
    -1 & w<0    
\end{cases}$. 
} \texttt{\textcolor{blue}{(Definition \ref{def:approxmstar})}}

\EndFor
\end{algorithmic}
\end{algorithm}
\end{minipage}
}
}

\newpage

\subsection{Expanded Discussion for the Selection of $\mstar$ for Deterministic Approximation Algorithm}
\label{appendix:how-to-select-mstar}

\textbf{Determining a universally applicable heuristic for setting $\mstar$ is challenging.} 
As we mentioned in Remark \ref{remark:mstar-selection}, ideally, we would like to pick the smallest $\mstar$ such that the error bound $\eps(\mstar)$ from Theorem \ref{thm:error-bound} is significantly smaller than $|\phi_{z_i}|$ for a significant portion of $z_i$s. However, determining a universally applicable heuristic for setting $\mstar$ is challenging due to the varying magnitude of $\phi_{z_i}$ across different datasets, which are difficult to anticipate. For example, in the case where all but one data point are ``null players'', that single data point will possess a value of $\U(D)$, while all others will be valued at 0. On the other hand, if all data points are identical, each will receive a value of $\U(D)/N$. 

\textbf{We can select $\mstar$ in an adaptive way (not used in our experiment).} 
As the magnitude of WKNN-Shapley can vary significantly depending on the specific dataset, it is more reasonable to select $\mstar$ in an adaptive way. 
Specifically, we can compute $(\phihatmstar)_{i \in [N]}$ for each of $\mstar = K+1, \ldots,$. We can keep increment $\mstar$ until the magnitude of $\eps(\mstar)$ is substantially smaller than (e.g., $< 10\%$) the magnitude of $\phihatmstar$ for a majority of $z_i$s. 
We can easily modify Algorithm \ref{alg:efficient-wknn-approx} so that this approach does not increase the overall runtime since $\phihatmstar$ can be easily computed from $\phihat_{z_i}^{(\mstar-1)}$ and the additionally computed $\Fi[\mstar, \cdot, \cdot]$. 
That is, from Definition \ref{def:approxmstar}, we can easily see that when $y_i = \yval$, 
\begin{align}
\phihatmstar
= 
\phihat_{z_i}^{(\mstar-1)} + 
\left(
\frac{1}{N} \sum_{\ell=0}^{K-1} \frac{ \sum_{s \in [-\wtil_i, 0)} \Fi\left[\mstar, \ell, s\right] }{{N-1 \choose \ell}} + 
\frac{ \texttt{R}_{i, \mstar} }{ \mstar {\mstar-1 \choose K} } \right)
\end{align}
and when $y_i \ne \yval$, 
\begin{align}
\phihatmstar
= 
\phihat_{z_i}^{(\mstar-1)} - 
\left( 
\frac{1}{N} \sum_{\ell=0}^{K-1} \frac{ \sum_{s \in [0, -\wtil_i)} \Fi\left[\mstar, \ell, s\right] }{{N-1 \choose \ell}} + 
\frac{ \texttt{R}_{i, \mstar} }{ \mstar {\mstar-1 \choose K} }
\right)
\end{align}

\textbf{In our experiment, we find $\mstar = \sqrt{N}$ works well across all benchmark datasets (ablation study in Appendix \ref{appendix:eval-varyEPS}).} 
However, in our experiment, we find that the performance of approximated WKNN-Shapley on the downstream tasks such as mislabeled and noisy data detection, are relatively stable across a wide range of choices of $\mstar$ (see Appendix \ref{appendix:eval-varyEPS}). 
This is likely because of the nice property of our approximated WKNN-Shapley in preserving the fairness properties of the exact WKNN-Shapley (Theorem \ref{thm:approx-shapley-property}). 
Hence, in our experiment, we do not follow the adaptive process of selecting $\mstar$, but adopt the simple rule of $\mstar = \sqrt{N}$, which already works very well. 

\newpage

\section{Extension to Multi-class Classification Setting}
\label{appendix:multiclass}
In Section \ref{sec:shapley-for-binary}, we have developed efficient solutions for computing and approximating the Data Shapley of weighted hard-label KNN \emph{binary} classifiers. Directly adapting the same techniques from the binary to multi-class classification setting, however, can significantly increase the overall time complexity (see Appendix \ref{sec:multi-class-exact} for details). Hence, in Appendix \ref{sec:multiclass-heuristic}, we introduce an alternative utility function to measure the performance of WKNN classifiers. This new utility function provides more detailed confidence information about the KNN classifier, as opposed to the original utility function in (\ref{eq:util}), which only provides basic zero-one correctness. More importantly, computing the Data Shapley in terms of the new utility function can be conveniently reduced to the WKNN-Shapley computation for binary classifiers without significantly increasing time complexity, a benefit leveraged from the linearity axiom of the Shapley value. 

\textbf{Notation Review.} 
Recall that we use $\wspace$ to denote the discretized space of $[0, 1]$, where we create $2^b$ equally spaced points within the interval when we use $b$ bits for discretization. We denote $\nwspace := |\wspace| = 2^b$ the size of the weight space. Furthermore, we use $\wspace_{(K)}$ to denote the discretized space of $[0, K]$ (where we create $K 2^b$ equally spaced points within the interval). 
We use $\NBK(S)$ to denote the set of data points that is within the $K$-nearest neighbors of $\xval$ among $S$. 
We use $\alpha_{\xval}^{(S, j)}$ denotes the index (among $D$) of $j$th closest data point in $S$ to $\xval$.




\subsection{Direct Extension from Binary Classification Setting can be Inefficient}
\label{sec:multi-class-exact}


We first discuss a simple, direct extension of our exact WKNN-Shapley algorithm from binary to multi-class classification setting. 
In Algorithm \ref{alg:efficient-wknn-readable}, the main idea is to maintain a record of $\Fi[m, \ell, s]$ for a singular scalar value $s$ which represents the summation of ``signed weights'' $\wtil_j$. In order to extend this approach to the multi-class setting, it is natural to enhance this scalar representation to a ``histogram'' depiction, $\Fi[m, \ell, \s]$, where $\s$ is the vector sum of weights for each data point, and the weights are in the form of one-hot encoding. 
That is, in the multi-class setting, $\Fi$ is augmented to record the number of subsets such that the sum of weights of the data points in the one-hot encoding is equal to the histogram $\s$ (subject to the conditions analog to those in Definition \ref{def:Fi-binary}). 
Denote $\e_y \in [C]$ the one-hot encoding of the label, with 1 on $y$th entry and 0 otherwise. 
The augmented $\Fi$ is defined as follows:

\begin{definition}
\label{def:Fi-multiclass}
Let $\Fi\left[m, \ell, \s \right]$ denote the count of subsets $S \subseteq D \setminus \{z_i\}$ of size $\ell$ that satisfy the conditions below: \textbf{(1)} $\condKNN$, \textbf{(2)} Within $S$, the data point $x_m$ is the $\min(\ell, K)$-th closest to the query example $\xval$, and \textbf{(3)} $\sum_{j=1}^{\min(\ell, K-1)} w_{\alpha_{\xval}^{(S, j)}} \e_{y_{ \alpha_{\xval}^{(S, j)}}} = \s$. 
\end{definition}

The results in the maintext (Theorem \ref{thm:Gil-and-Fi}, \ref{thm:Fi-recursive}, and \ref{thm:wknn-shapley-expression}) can be easily adapted to this more generalized definition of $\Fi$. 
While this direct extension can compute the exact Data Shapley for the utility function in (\ref{eq:util}), it has a time complexity of $O(K^{1+C} N^2 \nwspace^C)$ as we need to record $\Fi[m, \ell, \s]$ for all possible histograms $\s \in \wspace_{(K)}^C$, where $\wspace_{(K)}^C$ denotes the product space of $\wspace_{(K)}$. This is manageable for datasets with a modest size of class space. However, for datasets with a large class space, this complexity can render the runtime prohibitively large. 


\subsection{Alternative Utility Function that Enables More Efficient Computation of WKNN-Shapley}
\label{sec:multiclass-heuristic}

Due to the above-mentioned computational bottleneck, we introduce an alternative utility function for weighted KNN classifiers, which not only reflects the KNN classifiers' performance but also paves the way for a more efficient Data Shapley computation analogous to that of the binary setting. 
This is another instantiation of the rationale of developing advanced data valuation techniques with proper adjustment to the utility function. 


\textbf{Alternative Utility Function for Weighted Hard-Label KNN Classifiers.} 
For a class $c \in [C] \setminus \{\yval\}$, we denote 
\begin{align}
\U^{(c)}(S; \zval) 
:=
\ind\left[ 
\sum_{j=1}^{\min(K, |S^{(c)}|)} w_{\alpha_{\xval}^{(S^{(c)}, j)}} \ind \left[ y_{\alpha_{\xval}^{(S^{(c)}, j)}} = \yval \right] \ge \sum_{j=1}^{\min(K, |S^{(c)}|)} w_{\alpha_{\xval}^{(S^{(c)}, j)}} \ind \left[y_{\alpha_{\xval}^{(S^{(c)}, j)}} = c \right]
\right]
\end{align}
where 
$
S^{(c)} 
:= \{(x, y) \in S: y \in \{\yval, c\} \}
$ is the subset of $S$ whose labels are either $\yval$ and $c$. 
We propose an alternative utility function as follows:
\begin{align}
\Ualter(S; \zval) 
:= 
\frac{1}{C-1}
\sum_{c \in [C] \setminus \yval} 
\U^{(c)}(S^{(c)}; \zval)
\label{eq:util-multiclass}
\end{align}
Note that for binary classifiers, the new utility function $\Ualter$ reduces to the original $\U$. 
\textbf{Interpretation of the Alternative Utility Function:} 
The alternative utility function, $\Ualter$, captures a more fine-grained view of the classifier's performance. Instead of just deciding based on whether a prediction is correct as the original utility function in (\ref{eq:util}), it reduces the multi-class classification game as multiple binary-class classification games, and assesses how many times the correct class, $\yval$, are correctly being predicted in those subgames. 
Hence, $\Ualter$ provides insight into not just the correctness, but also the relative confidence of a prediction with respect to other classes. 

\textbf{Data Shapley for $\Ualter$.}
The linearity axiom of the Shapley value provides that 
\begin{align*}
\phi_{z_i}( \Ualter ) = 
\frac{1}{C-1} \sum_{c \in [C] \setminus \yval} \phi_{z_i}(\U^{(c)})
\end{align*}
Denote the subset $D_{\yval, c} \subseteq D$ such that $D_{\yval, c}:= \{(x, y) \in D: y \in \{\yval, c\} \}$. 
Observe that $\phi_{z_i}(\U^{(c)}) = 0$ for all $z_i \notin D_{\yval, c}$, and $\phi_{z_i}(\U^{(c)})$ for $z_i \in D_{\yval, c}$ can be easily computed with WKNN-Shapley for binary classification setting (Algorithm \ref{alg:efficient-wknn}). Hence, we can first compute $\phi_{z_i}(\U^{(c)})$ for each $c \in [C] \setminus \yval$ individually, and then aggregate these values. 

\begin{algorithm}[H]
\caption{Weighted KNN-Shapley for multi-class classification}
\begin{algorithmic}[1]

\State \textbf{Input:} 
\begin{itemize}
    \item $K$ -- hyperparameter of weighted KNN algorithm. 
    \item $\zval = (\xval, \yval)$ -- the validation point. 
    \item $D = \{z_i = (x_i, y_i)\}_{i=1}^N$ -- sorted training set where $d(x_i, \xval) \le d(x_j, \xval)$ for any $i \le j$.
    \item $C$ -- number of classes
\end{itemize}

\For{$c \in [C] \setminus \{ \yval \}$}
\State $\phi_{z_i}(\U^{(c)}) = 0$ for $z_i \notin D_{\yval, c}$.  
\State Compute $\phi_{z_i}(\U^{(c)})$ for $z_i \in D_{\yval, c}$ by executing Algorithm \ref{alg:efficient-wknn} on $D_{\yval, c}$. 
\EndFor

\State \textbf{Return:} $\phi_{z_i}( \Ualter ) = 
\frac{1}{C-1} \sum_{c \in [C] \setminus \yval} \phi_{z_i}(\U^{(c)})$ for $z_i \in D$. 

\end{algorithmic}
\end{algorithm}

\begin{remark}
We can also replace the exact computation of $\phi_{z_i}(\U^{(c)})$ in line 4 by our deterministic approximation algorithm (Algorithm \ref{alg:efficient-wknn-approx}) and obtain a deterministic approximation for $\phi_{z_i}( \Ualter )$. 
The error bound is simply the average of error bounds for computing $\phi_{z_i}(\U^{(c)})$ for each of $c \in [C] \setminus \{\yval\}$. 
\end{remark}

While this might imply an inevitable factor of $C$ in the computational complexity, note that the ``effective dataset'' for $\U^{(c)}$ is the subset $D_{\yval, c} \subseteq D$ that comprises only data points labeled $\yval$ or $c$. 
As a result, the computational time to compute the Shapley value for $\U^{(c)}$ reduces to $O(K^2 |D_{\yval, c}|^2 \nwspace)$. This provides a huge runtime saving when the dataset is balanced. 

\begin{theorem}
For a class-balanced training dataset $D$ with $C$ classes, computing the exact WKNN-Shapley $\{\phi_{z_i}(\Ualter)\}_{z_i \in D}$ achieves runtime $O\left(\frac{K^2 N^2 \nwspace}{C}\right)$. 
\label{thm:runtime-for-multiclass-heuristic}
\end{theorem}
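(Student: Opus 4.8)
The plan is to bound the total runtime of the multi-class procedure by summing the costs of its $C-1$ invocations of the binary-classification algorithm (Algorithm \ref{alg:efficient-wknn}), one for each $c \in [C] \setminus \{\yval\}$, and then to verify that all remaining bookkeeping is of lower order. By the linearity decomposition $\phi_{z_i}(\Ualter) = \frac{1}{C-1} \sum_{c} \phi_{z_i}(\U^{(c)})$, the only substantive computation is the evaluation of each $\phi_{z_i}(\U^{(c)})$, which by construction is carried out by running Algorithm \ref{alg:efficient-wknn} on the effective dataset $D_{\yval, c}$.

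First I would invoke Theorem \ref{thm:runtime-exact}, which states that running the exact binary WKNN-Shapley algorithm on an input of size $n$ costs $O(\nwspace K^2 n^2)$. Instantiating this with $n = |D_{\yval, c}|$ shows that processing class $c$ costs $O(\nwspace K^2 |D_{\yval, c}|^2)$.

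The crux is the second step: controlling $|D_{\yval, c}|$ under the balancedness hypothesis. Since $D_{\yval, c}$ consists precisely of the points labeled $\yval$ together with those labeled $c$, and each of the $C$ classes contributes $N/C$ points, we have $|D_{\yval, c}| = 2N/C = \Theta(N/C)$. Substituting yields a per-class cost of $O(\nwspace K^2 (N/C)^2)$, and summing over the $C-1$ classes gives
\begin{equation}
\sum_{c \in [C] \setminus \{\yval\}} O\!\left( \nwspace K^2 (N/C)^2 \right)
= O\!\left( (C-1)\, \nwspace K^2 N^2 / C^2 \right)
= O\!\left( \nwspace K^2 N^2 / C \right),
\nonumber
\end{equation}
which is the claimed bound.

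It then remains to confirm that preprocessing and aggregation are dominated. I would prepare the sorted subsets $D_{\yval, c}$ by sorting $D$ once ($O(N \log N)$) and then merging the fixed sorted sublist of $\yval$-labeled points with each sorted per-class sublist, at total cost $\sum_c O(|D_{\yval, c}|) = O(N)$; the final aggregation touches only the nonzero entries $\phi_{z_i}(\U^{(c)})$, of which there are $\sum_c |D_{\yval, c}| = O(N)$ in total. Both overheads are therefore of lower order. The main (though modest) obstacle is the balancedness accounting in the crux step: one must see that each call enjoys a $1/C^2$ saving relative to the naive $O(\nwspace K^2 N^2)$ bound, and that this saving survives summation over the $C-1$ classes to leave a net $1/C$ factor — without balancedness the subsets $D_{\yval, c}$ could each be $\Theta(N)$ and the improvement would vanish.
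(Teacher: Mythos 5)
Your proposal is correct and follows essentially the same route as the paper: instantiate the binary-case runtime bound with $|D_{\yval, c}| = 2N/C$ under balancedness, giving a per-class cost of $O(\nwspace K^2 N^2 / C^2)$, and sum over the $C-1$ classes to obtain $O(\nwspace K^2 N^2 / C)$. Your additional check that sorting, subset construction, and aggregation are lower-order overhead is a harmless extra that the paper omits.
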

\begin{proof}
When the dataset $D$ is balanced, we have $|D_{\yval, c}| = \frac{2N}{C}$. 
Hence, the runtime of computing $\{\phi_{z_i}(\U^{(c)})\}_{z_i \in D}$ is $O\left(\frac{K^2 N^2 \nwspace}{C^2}\right)$, and hence the total runtime is $O\left(\frac{K^2 N^2 \nwspace}{C}\right)$. 
\end{proof}

Remarkably, this methodology is even more efficient than its binary classification counterpart.

\newpage

\section{Missing Proofs}
\label{appendix:proof}
\textbf{Notation Review.} 
Recall that we use $\wspace$ to denote the discretized space of $[0, 1]$, where we create $2^b$ equally spaced points within the interval when we use $b$ bits for discretization. We denote $\nwspace := |\wspace| = 2^b$ the size of the weight space. Furthermore, we use $\wspace_{(K)}$ to denote the discretized space of $[0, K]$ (where we create $K 2^b$ equally spaced points within the interval). We use $\NBK(S)$ to denote the set of data points that is within the $K$-nearest neighbors of $\xval$ among $S$. We use $\alpha_{\xval}^{(S, j)}$ denotes the index (among $D$) of $j$th closest data point in $S$ to $\xval$. 

\begin{theorem}[Restate of Theorem \ref{thm:marginal-contri}]
For any data point $z_i \in D$ and any subset $S \subseteq D \setminus \{z_i\}$, the marginal contribution has the expression as follows:
\begin{align}
\U(S \cup \{z_i\}) - \U(S) =
\begin{cases} 
1 & \text{if~}y_i = \yval, \condKNN, \condNegToPos \\
-1 & \text{if~}y_i \ne \yval, \condKNN, \condPosToNeg \\
0 & \text{Otherwise}
\end{cases}
\end{align}
where 
\begin{align}
\condKNN &:= z_i~\text{is within}~K~\text{nearest neighbors of}~\xval~\text{among}~S \cup \{z_i\} \\
\condNegToPos &:= 
\begin{cases}
    \sum_{z_j \in S} \wtil_j \in [-\wtil_i, 0) & \text{if}~|S| \le K-1 \\
    \sum_{j=1}^{K-1} \wtil_{\alpha_{\xval}^{(S, j)}} \in \left[ -w_i,  -\wtil_{\alpha_{\xval}^{(S, K)}}\right) & \text{if}~|S| \ge K
\end{cases} \\
\condPosToNeg &:= 
\begin{cases}
    \sum_{z_j \in S} \wtil_j \in [0, -\wtil_i) & \text{if}~|S| \le K-1 \\
    \sum_{j=1}^{K-1} \wtil_{\alpha_{\xval}^{(S, j)}} \in \left[-\wtil_{\alpha_{\xval}^{(S, K)}}, -w_i\right) & \text{if}~|S| \ge K
\end{cases}
\end{align}
\end{theorem}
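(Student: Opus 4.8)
The plan is to analyze the indicator difference $\U(S \cup \{z_i\}) - \U(S)$ directly, by tracking how the set of $K$ nearest neighbors of $\xval$—and hence the effective signed-weight sum $\sum_{j} \wtil_{\alpha_{\xval}^{(S,j)}}$ inside the utility—changes when $z_i$ is inserted. First I would dispose of the case where $\condKNN$ fails: then $z_i$ lies outside the $K$ nearest neighbors of $\xval$ in $S \cup \{z_i\}$, which can only happen when $|S| \ge K$, and in that situation the top-$K$ multiset (and therefore the signed-weight sum feeding into the utility) is identical for $S$ and $S \cup \{z_i\}$. Hence $\U(S \cup \{z_i\}) = \U(S)$ and the marginal contribution is $0$, matching the ``Otherwise'' branch. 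This lets me assume $\condKNN$ for the remainder and reduces the problem to a careful threshold comparison.

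Next I would establish the structural fact that drives everything. When $\condKNN$ holds and $|S| \ge K$, inserting $z_i$ at its sorted rank $p \le K$ among $S \cup \{z_i\}$ shows that the $K$ nearest neighbors of $S \cup \{z_i\}$ are exactly $z_i$ together with the $K-1$ nearest neighbors of $S$; the former $K$-th neighbor $\alpha_{\xval}^{(S,K)}$ is pushed out. Writing $s' := \sum_{j=1}^{K-1} \wtil_{\alpha_{\xval}^{(S,j)}}$, this gives $\U(S) = \ind[s' + \wtil_{\alpha_{\xval}^{(S,K)}} \ge 0]$ and $\U(S \cup \{z_i\}) = \ind[s' + \wtil_i \ge 0]$, so the marginal contribution is controlled entirely by whether the threshold $0$ is crossed when $\wtil_{\alpha_{\xval}^{(S,K)}}$ is replaced by $\wtil_i$. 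For the complementary regime $|S| \le K-1$, $z_i$ is automatically among the at most $K$ neighbors, and the two utilities collapse to $\ind[\sum_{z_j \in S} \wtil_j \ge 0]$ and $\ind[\sum_{z_j \in S} \wtil_j + \wtil_i \ge 0]$; a direct comparison of these two indicators then yields the interval conditions stated for $|S| \le K-1$.

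The subtle point—and the main obstacle—is showing that the \emph{sign} of the marginal contribution is pinned down by the label of $z_i$, so that $\condNegToPos$ produces $+1$ exactly when $y_i = \yval$ and $\condPosToNeg$ produces $-1$ exactly when $y_i \ne \yval$, with no wrong-signed contributions. In the $|S| \ge K$ regime the contribution $\ind[s' + \wtil_i \ge 0] - \ind[s' + \wtil_{\alpha_{\xval}^{(S,K)}} \ge 0]$ can equal $+1$ only if $\wtil_i > \wtil_{\alpha_{\xval}^{(S,K)}}$ and $-1$ only if $\wtil_i < \wtil_{\alpha_{\xval}^{(S,K)}}$. To tie these inequalities to the label I would use the global sorting of $D$ together with the monotone weighting assumption (closer points receive larger weights, as for the RBF kernel used throughout): since $\condKNN$ forces $d(x_i, \xval) \le d(x_{\alpha_{\xval}^{(S,K)}}, \xval)$, we obtain $w_i \ge w_{\alpha_{\xval}^{(S,K)}}$, and a short case check on the labels of $z_i$ and $\alpha_{\xval}^{(S,K)}$ yields $y_i = \yval \Rightarrow \wtil_i \ge \wtil_{\alpha_{\xval}^{(S,K)}}$ and $y_i \ne \yval \Rightarrow \wtil_i \le \wtil_{\alpha_{\xval}^{(S,K)}}$. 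This rules out the wrong-signed outcomes. Finally, substituting $\wtil_i = w_i$ when $y_i = \yval$ and $\wtil_i = -w_i$ when $y_i \ne \yval$ rewrites the threshold-crossing inequalities as the stated intervals for $\condNegToPos$ and $\condPosToNeg$, and collecting the three possible outcomes across both size regimes gives the claimed case expression.
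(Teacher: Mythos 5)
Your proposal is correct and follows essentially the same route as the paper's proof: dispose of the case where $\condKNN$ fails, split on $|S| \le K-1$ versus $|S| \ge K$, observe that in the latter regime the inserted point displaces exactly the former $K$-th neighbor $z_{\alpha_{\xval}^{(S,K)}}$, and read off the marginal contribution by comparing the two indicator thresholds. You do, however, go one step further than the paper at the one place where extra care is genuinely needed: the paper's proof only asserts that the marginal contribution equals $+1$ (resp.\ $-1$) \emph{only if} $y_i = \yval$ (resp.\ $y_i \ne \yval$) and the interval condition holds, without explaining why a $+1$ contribution cannot occur with $y_i \ne \yval$ when $|S| \ge K$. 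As you observe, ruling this out amounts to showing $\wtil_i > \wtil_{\alpha_{\xval}^{(S,K)}}$ is impossible for $y_i \ne \yval$, which requires the monotone-weighting property ($d(x_i,\xval) \le d(x_j,\xval) \Rightarrow w_i \ge w_j$) together with the fact that $\condKNN$ forces $z_i$ to be at least as close as the displaced neighbor; the paper states this weight-monotonicity assumption only in Appendix \ref{appendix:error-disc}, for a different lemma, so making the dependence explicit here is a genuine improvement. One mismatch to flag: for the $-1$ case with $|S| \ge K$, your threshold computation yields the right endpoint $-\wtil_i = w_i$, i.e.\ the interval $\left[-\wtil_{\alpha_{\xval}^{(S,K)}},\, w_i\right)$, whereas the theorem writes $-w_i$; your version is the correct one (with $-w_i$ the interval is empty under monotone weights, and the downstream quantities in Theorem \ref{thm:Gil-and-Fi} and $\Rim$ indeed use $-\wtil_i$), so the printed endpoint appears to be a sign typo. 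Your closing claim that the substitution ``rewrites the threshold-crossing inequalities as the stated intervals'' is therefore not literally true for that endpoint and should be stated as a correction rather than a match.
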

\begin{proof}
First of all, we observe that if $z_i \notin \NBK(S \cup \{z_i\})$, i.e., if $z_i$ is not within the $K$ nearest neighbors of the queried example $\xval$ among the subset $S \cup \{z_i\}$, then the prediction of KNN classifier does not change, and hence we know that $\U(S \cup \{z_i\}) = \U(S)$. Hence $\condKNN$ is necessary for $\U(S \cup \{z_i\}) - \U(S)$ to be non-zero. 

If $z_i \in \NBK(S \cup \{z_i\})$, we divide into two cases: 
\circled{1} If $|S| \le K-1$ we know that adding $z_i$ will not exclude any other data point from the $K$ nearest neighbors of $\xval$. 
Hence $\U(S \cup \{z_i\}) - \U(S) = 1$ only if $y_i = \yval$ and $\sum_{z_j \in S} \wtil_j \in [-\wtil_i, 0)$, 
and $\U(S \cup \{z_i\}) - \U(S) = -1$ only if $y_i \ne \yval$ and $\sum_{z_j \in S} \wtil_j \in [0, -\wtil_i)$. 
\circled{2} If $|S| \ge K$ we know that adding $z_i$ will exclude the original $K$th nearest neighbors of $\xval$ among dataset $S$. 
Hence, $\U(S \cup \{z_i\}) - \U(S) = 1$ only if $y_i = \yval$ and $\sum_{j=1}^{K-1} \wtil_{\alpha_{\xval}(S, j)} \in \left[ -w_i,  -\wtil_{\alpha_{\xval}(S, K)}\right)$, and 
$\U(S \cup \{z_i\}) - \U(S) = -1$ only if $y_i \ne \yval$ and $\sum_{j=1}^{K-1} \wtil_{\alpha_{\xval}(S, j)} \in \left[-\wtil_{\alpha_{\xval}(S, K)}, -w_i\right)$. 
\end{proof}


\begin{theorem}[Full version of Theorem \ref{thm:Fi-recursive}]
\label{thm:Fi-recursive-full-withproof}
When $K > 1$,\footnote{Since \cite{jia2019efficient} has shown that weighted KNN-Shapley can be computed in $O(N^K)$ time complexity, we focus on the setting where $K > 1$.} for $\ell = 1$ we have
$
\Fi[m, 1, s] = 
\begin{cases} 
1 & s = w_m \\
0 & s \ne w_m
\end{cases}
$. 
We can then compute $\Fi[m, \ell, s]$ for $\ell \ge 2$ with the following relations: 

If $\ell \le K-1$, we have
\begin{align}
    \Fi[m, \ell, s] = \sum_{t=1}^{m-1} \Fi[t, \ell-1, s - w_m]
\label{eq:recursive-proof}
\end{align}
and if $\ell \ge K$, we have 
\begin{align}
    \Fi[m, \ell, s] = 
    \begin{cases}
        0 & m < i \\
        \sum_{t=1}^{m-1} \Fi[t, K-1, s] {N-m \choose \ell-K} & m > i
    \end{cases}
\label{eq:explicit-proof}
\end{align}
Note that we set $\Fi[i, \cdot, \cdot] = 0$ for mathematical convenience. 
\end{theorem}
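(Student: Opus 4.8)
The plan is to prove all three formulas by reading off $\Fi[m,\ell,s]$ directly as a subset count and analyzing conditions (1)--(3) of Definition~\ref{def:Fi-binary}, splitting into the regimes $\ell\le K-1$ and $\ell\ge K$ that govern the values $\min(\ell,K)$ and $\min(\ell,K-1)$. For the base case $\ell=1$, a size-one subset is $S=\{z_m\}$ with $m\ne i$: condition (2) holds automatically (the single element is trivially the closest), condition (3) forces the lone signed weight to equal $s$, i.e. $s=\wtil_m$, and since $K>1$ the pair $S\cup\{z_i\}$ has $2\le K$ elements so $\condKNN$ is automatic. This gives the stated indicator for $m\ne i$, with $\Fi[i,\cdot,\cdot]=0$ by convention.

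For $\ell\le K-1$ the key simplification is that $\condKNN$ is automatic, because $|S\cup\{z_i\}|=\ell+1\le K$ forces every point into $\NBK(S\cup\{z_i\})$. Since $\min(\ell,K)=\min(\ell,K-1)=\ell$ here, condition (2) says $z_m$ is the \emph{farthest} point of $S$ and condition (3) tracks the signed-weight sum of all of $S$. I would set up a bijection by peeling off $z_m$: deleting it yields $S'=S\setminus\{z_m\}$ of size $\ell-1$, all of whose points are closer to $\xval$ than $z_m$ (hence have index $\le m-1$), whose farthest point is some $z_t$ with $t\le m-1$, and whose signed-weight sum is $s-\wtil_m$; re-adjoining $z_m$ inverts this. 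Summing $\Fi[t,\ell-1,s-\wtil_m]$ over $t=1,\dots,m-1$ gives the recursion, where $\Fi[i,\cdot,\cdot]=0$ silently drops the forbidden index $t=i$ (and $z_i$ never appears inside any $S'$ counted by $\Fi$). This is applied for $m\ne i$, the value at $m=i$ being fixed to $0$.

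For $\ell\ge K$ I would first settle $\condKNN$. Now $\min(\ell,K)=K$, so condition (2) makes $z_m$ the $K$-th closest point of $S$; adjoining $z_i$ retains $z_i$ among the $K$ nearest neighbors precisely when $z_i$ is closer to $\xval$ than $z_m$, i.e. when $i<m$, which forces $\Fi[m,\ell,s]=0$ whenever $m<i$. For $m>i$ I would partition any admissible $S$ into the $K-1$ points closer than $z_m$, the point $z_m$ itself, and the $\ell-K$ remaining points, which must all be farther than $z_m$. Condition (3) (with $\min(\ell,K-1)=K-1$) constrains only the closer block, whose count is $\sum_{t=1}^{m-1}\Fi[t,K-1,s]$ --- here $\ell'=K-1$ renders $\condKNN$ vacuous, so this is exactly the number of $(K-1)$-subsets of $\{z_1,\dots,z_{m-1}\}$ with signed-weight sum $s$. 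The far block is an unconstrained choice of $\ell-K$ of the $N-m$ points beyond $z_m$, contributing $\binom{N-m}{\ell-K}$, and multiplying the two counts gives the explicit formula.

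The step I expect to be most delicate is the $\ell\ge K$ case. There one must argue cleanly that (i) the equivalence $\condKNN\Leftrightarrow i<m$ follows from the sorted distances and the fact that $z_m$ is the $K$-th neighbor, (ii) the far block never affects conditions (2) and (3), which see only the top $K$ and top $K-1$ neighbors, and (iii) $z_i$ is automatically excluded from both blocks --- from the closer block through the $\Fi[i,\cdot,\cdot]=0$ convention, and from the far block because $i<m$ places $z_i$ outside $\{z_{m+1},\dots,z_N\}$. By contrast, once the automatic-$\condKNN$ observation is in place, the base case and the $\ell\le K-1$ recursion are routine.
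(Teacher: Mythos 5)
Your proposal is correct and follows essentially the same route as the paper's proof: the singleton base case, peeling off the farthest point to partition by the $(\ell-1)$-th neighbor when $\ell\le K-1$, and for $\ell\ge K$ splitting $S$ into the $K-1$ closer points (counted by $\Fi[\cdot,K-1,s]$) and a free choice of $\binom{N-m}{\ell-K}$ farther points after observing $\condKNN\Leftrightarrow i<m$; you simply spell out the details the paper leaves as ``simple combinatorial analysis.'' Your use of $\wtil_m$ rather than $w_m$ in the base case and recursion is the version consistent with Definition~\ref{def:Fi-binary} (and with the reader-friendly pseudocode), so that discrepancy with the theorem statement is not a gap on your side.
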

\begin{proof}

\textbf{Base case of $\ell=1$:} by definition of $\Fi$, if $\ell=1$ the dataset that satisfy the conditions required for $\Fi[m, 1, \cdot]$ can only be the singleton $\{z_m\}$, and hence the base case is straightforward. 

\textbf{Case of $\ell \le K-1$:} the inclusion of $z_i$ in $\NBK(S \cup \{z_i\})$ is guaranteed for this range of $\ell$. The datasets that satisfy the conditions required for $\Fi[m, \ell, s]$ can be partitioned based on the $(\ell-1)$th closest point to $\xval$, which leads to the recursive relation in (\ref{eq:recursive-proof}). 

\textbf{Case of $\ell \ge K$:} 
Since $x_m$ is the $K$-th nearest data point to $\xval$ within $S$, we have $\Fi[m, \ell, s] = 0$ for any $m < i$ since $z_i$ is also required to be within $K$ nearest neighbor to $\xval$. 
When $m > i$, the datasets that satisfy the conditions required for $\Fi[m, \ell, s]$ can be partitioned based on the $(K-1)$th closest point to $\xval$, which leads to the relation in (\ref{eq:explicit-proof}) by simple combinatorial analysis. 
\end{proof}


\begin{theorem}[Restate of Theorem \ref{thm:wknn-shapley-expression}]
For a weighted, hard-label KNN binary classifier using the utility function given by (\ref{eq:util-binary}), the Shapley value of data point $z_i$ can be expressed as:
\begin{align}
    \phi_{z_i} 
    = 
    \sign(w_i) \left[
    \frac{1}{N} \sum_{\ell=0}^{K-1} \frac{\Gil}{{N-1 \choose \ell}} + \sum_{m = \max(i+1, K+1)}^N \frac{\Rim}{m {m-1 \choose K}} \right]
\end{align}
where 
\begin{align}
\Rim := 
\begin{cases}
    \sum_{t=1}^{m-1} \sum_{s \in [-\wtil_i, -\wtil_m)} \Fi[t, K-1, s] & \text{for } y_i = \yval \\
    \sum_{t=1}^{m-1} \sum_{s \in [-\wtil_m, -\wtil_i)} \Fi[t, K-1, s] & \text{for } y_i \ne  \yval 
\end{cases}
\end{align}

\end{theorem}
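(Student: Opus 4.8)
The plan is to start from Theorem~\ref{thm:shapley-counting}, which expresses $\phi_{z_i}$ as $\frac{2\ind[y_i=\yval]-1}{N}\sum_{\ell=0}^{N-1}\binom{N-1}{\ell}^{-1}\Gil$, and to split the sum over $\ell$ at the threshold $K$. The small-$\ell$ part ($0 \le \ell \le K-1$) is already in the desired form, so the entire burden falls on showing that $\frac{1}{N}\sum_{\ell=K}^{N-1}\binom{N-1}{\ell}^{-1}\Gil$ equals the short-cut term $\sum_{m=\max(i+1,K+1)}^{N}\frac{\Rim}{m\binom{m-1}{K}}$. I would also observe at the outset that $\sign(w_i)=2\ind[y_i=\yval]-1$ under the $\wtil$ convention (since $\wtil_i>0 \iff y_i=\yval$ for nonzero weights), so the prefactors match; the $\sign$ notation merely absorbs the $y_i=\yval$ versus $y_i\ne\yval$ case split into one expression.

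\textbf{Key steps.} First I would substitute the $\ell \ge K$ branch of Theorem~\ref{thm:Gil-and-Fi} into the tail sum. For $y_i=\yval$ this gives, for each $\ell\ge K$, $\Gil = \sum_{m}\sum_{s\in[-\wtil_i,-\wtil_m)}\Fi[m,\ell,s]$. Next I would invoke the explicit formula \eqref{eq:explicit-proof} from Theorem~\ref{thm:Fi-recursive-full}, namely $\Fi[m,\ell,s]=\big(\sum_{t=1}^{m-1}\Fi[t,K-1,s]\big)\binom{N-m}{\ell-K}$ for $m>i$ (and $0$ for $m<i$), to factor the dependence on $\ell$ out of the inner structure. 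The crucial observation is that the inner double sum $\sum_{t=1}^{m-1}\sum_{s\in[-\wtil_i,-\wtil_m)}\Fi[t,K-1,s]$ is exactly $\Rim$ and is independent of $\ell$. This lets me interchange the order of summation and pull $\Rim$ outside, reducing the tail to $\frac{1}{N}\sum_{m}\Rim\sum_{\ell=K}^{N-1}\binom{N-1}{\ell}^{-1}\binom{N-m}{\ell-K}$.

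\textbf{The binomial identity.} The heart of the argument is then a purely combinatorial identity: I must show $\frac{1}{N}\sum_{\ell=K}^{N-1}\binom{N-1}{\ell}^{-1}\binom{N-m}{\ell-K}=\frac{1}{m\binom{m-1}{K}}$. I would reindex with $j=\ell-K$ so the sum runs over $j=0,\dots,N-1-K$ (noting $\binom{N-m}{j}$ vanishes for $j>N-m$, which trims the range consistently with the lower limit $m\ge K+1$), and then evaluate $\sum_{j}\binom{N-1}{K+j}^{-1}\binom{N-m}{j}$ using a standard inverse-binomial summation identity (of the type $\sum_j \binom{n}{a+j}^{-1}\binom{b}{j}$ collapsing to a closed form, provable by the Beta-integral representation $\binom{n}{k}^{-1}=(n+1)\int_0^1 t^{k}(1-t)^{n-k}\,dt$ or by induction on $m$). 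I expect this identity to be the main obstacle, since getting the constant exactly right requires care with the $1/N$ factor and the index shifts; the Beta-integral route is the cleanest because it turns the sum $\sum_j\binom{N-m}{j}t^{j}(\cdots)$ into $(1+\cdots)^{N-m}$ via the binomial theorem and leaves a single tractable integral. Finally I would verify the lower summation bound $\max(i+1,K+1)$ is correct: the $K+1$ comes from needing $\ell\ge K$ with $m$ the $K$-th neighbor, and the $i+1$ comes from the $m<i$ vanishing in \eqref{eq:explicit-proof}. The $y_i\ne\yval$ case is identical after swapping the weight interval to $[-\wtil_m,-\wtil_i)$ and tracking the sign, completing the proof.
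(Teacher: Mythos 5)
Your proposal is correct and follows essentially the same route as the paper's proof: split the sum in Theorem \ref{thm:shapley-counting} at $\ell = K$, substitute the $\ell \ge K$ branch of Theorem \ref{thm:Gil-and-Fi} together with the explicit formula \eqref{eq:explicit-proof} to pull the $\ell$-independent quantity $\Rim$ out of the tail, and collapse $\sum_{\ell=K}^{N-1}\binom{N-1}{\ell}^{-1}\binom{N-m}{\ell-K}$ to $\frac{N}{m}\binom{m-1}{K}^{-1}$. The only difference is that the paper asserts this last combinatorial identity without proof, whereas you supply a valid route (the Beta-integral representation of inverse binomials) for establishing it.
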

\begin{proof}

We state the proof for the case where $y_i = \yval$, and the proof for the case where $y_i \ne \yval$ is nearly identical. Recall that
\begin{align*}
\Gil = 
\begin{cases} 
\sum_{m \in [N] \setminus i} \sum_{s \in [-\wtil_i, 0)} \Fi\left[m, \ell, s\right] & 
\ell \le K-1 \\
\sum_{m \in [N] \setminus i} \sum_{s \in [-\wtil_i, -\wtil_m)} \Fi\left[m, \ell, s\right] & \ell \ge K
\end{cases}
\end{align*}
if $y_i = \yval$. 

When $\ell \ge K$, we have 
\begin{align*}
\Gil 
&= \sum_{m \in [N] \setminus i} \sum_{s \in [-\wtil_i, -\wtil_m)} \Fi\left[m, \ell, s\right] \\
&= \sum_{m = \max(i+1, K+1)}^N \sum_{s \in [-\wtil_i, -\wtil_m)} \Fi\left[m, \ell, s\right] \\
&= \sum_{m = \max(i+1, K+1)}^N \sum_{s \in [-\wtil_i, -\wtil_m)} {N-m \choose \ell-K}
\sum_{t=1, t \ne i}^{m-1} \Fi[t, K-1, s] \\
&= \sum_{m = \max(i+1, K+1)}^N {N-m \choose \ell-K} 
\sum_{s \in [-\wtil_i, -\wtil_m)}
\sum_{t=1, t \ne i}^{m-1} \Fi[t, K-1, s] \\
&= \sum_{m = \max(i+1, K+1)}^N {N-m \choose \ell-K} \Rim 
\end{align*}
where $\Rim = \sum_{s \in [-\wtil_i, -\wtil_m)}
\sum_{t=1, t \ne i}^{m-1} \Fi[t, K-1, s]$. 

\begin{align*}
\sum_{\ell = K}^{N-1} \frac{\Gil}{{N-1 \choose \ell}}
&= 
\sum_{\ell = K}^{N-1} 
\sum_{m = \max(i+1, K+1)}^N 
\frac{{N-m \choose \ell-K} \Rim}{ {N-1 \choose \ell} } \\
&= 
\sum_{m = \max(i+1, K+1)}^N
\Rim 
\sum_{\ell = K}^{N-1} 
\frac{{N-m \choose \ell-K}}{ {N-1 \choose \ell} } \\
&= 
\sum_{m = \max(i+1, K+1)}^N
\Rim 
\left(
\sum_{\ell = K}^{N-1} 
\frac{ {m-1 \choose K} {N-m \choose \ell-K}}{ {N-1 \choose \ell} } \right)
{m-1 \choose K}^{-1} \\
&= 
\sum_{m = \max(i+1, K+1)}^N
\Rim \left( \frac{N}{m} \right) {m-1 \choose K}^{-1}
\end{align*}
\end{proof}

\begin{theorem}[Restate of Theorem \ref{thm:runtime-exact}]
Algorithm \ref{alg:efficient-wknn} (in Appendix \ref{appendix:pseudocode}) computes the exact Shapley value and achieves $O(K^2 N^2 \nwspace)$ time complexity.
\end{theorem}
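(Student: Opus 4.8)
The plan is to split the theorem into its two assertions—that Algorithm~\ref{alg:efficient-wknn} returns the exact value $\phi_{z_i}$ and that it does so in $O(K^2 N^2 \nwspace)$ time—and to dispatch correctness by reduction to the structural theorems already proved, reserving the real work for an amortized runtime count. For correctness, I would observe that the algorithm is a direct transcription of the chain Theorem~\ref{thm:marginal-contri} $\to$ Theorem~\ref{thm:shapley-counting} $\to$ Theorem~\ref{thm:Gil-and-Fi} $\to$ Theorem~\ref{thm:wknn-shapley-expression}: the base case and recursion for $\Fi$ match Theorem~\ref{thm:Fi-recursive-full}, the aggregation of $\Fi$ into $\Gil$ and $\Rim$ matches Theorems~\ref{thm:Gil-and-Fi} and \ref{thm:wknn-shapley-expression}, and the final line evaluates exactly the expression of Theorem~\ref{thm:wknn-shapley-expression}. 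The only nontrivial correctness point is that the runtime-optimized pseudocode replaces the explicit inner sums $\sum_{t=1}^{m-1}(\cdot)$ by running caches $F_0$ and $R_0$; I would verify by a short induction on $m$ that each cache, at the moment index $m$ is processed, equals $\sum_{t=1}^{m-1}\Fi[t,\cdot,\cdot]$, so the cached reads reproduce the summations in the theorems verbatim.

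For the runtime, I would fix $i$, bound one pass of the outer loop, and multiply by $N$. Recalling $|\wspace_{(K)}| = K\nwspace$, initializing $\Fi$ over all triples $(m,\ell,s)$ costs $N\cdot(K-1)\cdot K\nwspace = O(K^2 N \nwspace)$. Computing $\Fi[\cdot,\ell,\cdot]$ for $\ell \le K-1$ via the recursion of Theorem~\ref{thm:Fi-recursive-full} is the crucial step: by keeping $F_0 = \sum_{t=1}^{m-1}\Fi[t,\ell-1,\cdot]$ and incrementing it by $\Fi[m,\ell-1,\cdot]$ once per $m$, each entry $\Fi[m,\ell,s]=F_0[s-w_m]$ becomes an $O(1)$ lookup, so the cost over all $\ell,m,s$ is $O(K\cdot N\cdot K\nwspace)=O(K^2 N\nwspace)$. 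The $\Rim$ loop is analogous with cache $R_0$ and costs $O(N\cdot K\nwspace)$, while each $\Gil$ for $\ell \le K-1$ sums $\Fi$ over a range of at most $\nwspace$ values of $s$ across all $m$, costing $O(K N\nwspace)$ in total. Hence one outer iteration is $O(K^2 N\nwspace)$ and the full algorithm is $O(K^2 N^2\nwspace)$.

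The hard part will be making rigorous that the two optimizations each strip a factor of $N$ off the naive cost rather than merely rearranging it. The first is the amortized claim for $F_0$ (and $R_0$): I must argue that over the inner $m$-loop the running total is written once and read $K\nwspace$ times per step, so that no prefix sum $\sum_{t=1}^{m-1}$ is ever recomputed from scratch—turning an $O(m)$ evaluation per table cell into amortized $O(1)$ and thereby avoiding an extra $N$ factor. The second, more delicate point is the role of the short-cut formula of Theorem~\ref{thm:wknn-shapley-expression}: a literal implementation would compute $\Fi[\cdot,\ell,\cdot]$ and hence $\Gil$ for every $\ell$ from $K$ to $N-1$, reintroducing an $O(N)$ factor and an $O(K^2 N^3\nwspace)$ bound. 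The content I must exploit is that Theorem~\ref{thm:wknn-shapley-expression} collapses $\sum_{\ell=K}^{N-1}\Gil/{N-1 \choose \ell}$ into a single sum over $m$ of the cached $\Rim$, so the levels $\ell\ge K$ are never materialized and the per-$i$ cost stays at $O(K^2 N\nwspace)$.
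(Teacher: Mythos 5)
Your proposal is correct and follows essentially the same route as the paper: the paper's proof is a terse accounting of the three inner loops (computing $\Fi$, $\Rim$, and $\Gil$), each bounded using $|\wspace_{(K)}| = K\nwspace$ and the $F_0$/$R_0$ caching, multiplied by the outer loop over $i$, exactly as you lay out. Your additional care about correctness (induction on the caches) and about why the short-cut formula of Theorem~\ref{thm:wknn-shapley-expression} avoids materializing the levels $\ell \ge K$ is left implicit in the paper but is the same underlying argument.
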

\begin{proof}
It is easy to see that the for-loop for computing $\Fi$ for $\ell \le K$ requires a runtime of $O(KN |\wspace_{(K)}|)$. The for-loop for computing $\Rim$ requires a runtime of $O(N |\wspace_{(K)}|)$. The for-loop for computing $\Gil$ for $\ell \le K$ requires a runtime of $O(KN |\wspace_{(K)}|)$. 
All of these subroutines are included in the outside for-loop for computing $\phi_{z_i}$ for all $z_i \in D$. 
Hence, the overall runtime is $O(KN^2 |\wspace_{(K)}|) = O(K^2 N^2 \nwspace)$. 
\end{proof}

\begin{theorem}[Restate of Theorem \ref{thm:runtime-approx}]
Algorithm \ref{alg:efficient-wknn-approx} (in Appendix \ref{appendix:pseudocode-approx}) computes the approximated WKNN-Shapley $\widehat \phi_{z_i}^{(\mstar)}$ for all $z_i \in D$ and achieves a total runtime of $O(\nwspace K^2 N \mstar)$.
\end{theorem}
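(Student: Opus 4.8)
The plan is to mirror the runtime analysis of Theorem \ref{thm:runtime-exact}, since Algorithm \ref{alg:efficient-wknn-approx} is obtained from the exact Algorithm \ref{alg:efficient-wknn} by a single structural change: every inner loop over the rank index $m$ is truncated from $\{1, \ldots, N\}$ to $\{1, \ldots, \mstar\}$ (the red-highlighted modifications in the pseudo-code). First I would verify correctness, i.e. that the algorithm outputs exactly the quantity $\phihatmstar$ of Definition \ref{def:approxmstar}. This reduces to checking that truncating the $m$-loops is self-consistent: by the recursion in Theorem \ref{thm:Fi-recursive-full}, computing $\Fi[m, \ell, \cdot]$ for $\ell \le K-1$ only references $\Fi[t, \ell-1, \cdot]$ with $t \le m-1 < \mstar$, so capping $m$ at $\mstar$ still produces the correct values of $\Fi[m, \cdot, \cdot]$ for every $m \le \mstar$. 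The quantities $\Giltilmstar$ and $\Rim$ (for $m \le \mstar$) are then assembled precisely as in Definition \ref{def:approxmstar} and combined into $\phihatmstar$, so correctness is immediate.

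For the runtime, I would account for each subroutine inside the outer loop over $i \in \{1, \ldots, N\}$ exactly as in the proof of Theorem \ref{thm:runtime-exact}, but with the $m$-range shortened to $\mstar$. Computing $\Fi[\cdot, \ell, \cdot]$ for all $\ell \le K-1$ and $m \le \mstar$ costs $O(K \mstar |\wspace_{(K)}|)$; the loop producing $\Rim$ for $m \le \mstar$ costs $O(\mstar |\wspace_{(K)}|)$ thanks to the running-sum cache $R_0$; and aggregating $\Giltilmstar$ over $\ell \le K-1$ costs $O(K \mstar |\wspace_{(K)}|)$. The binomial factors ${N-1 \choose \ell}$ and ${m-1 \choose K}$ in the final combination depend on $N$ but are $O(1)$ per evaluation (or precomputed once), so they contribute no extra $N$ factor. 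Summing the subroutine costs and multiplying by the $N$ iterations of the outer loop gives $O(K N \mstar |\wspace_{(K)}|)$, and substituting $|\wspace_{(K)}| = K \nwspace$ yields the claimed $O(\nwspace K^2 N \mstar)$.

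Because the argument is a direct transcription of the exact-algorithm analysis with $N$ replaced by $\mstar$ in the inner loops, I do not expect a genuine obstacle. The only point requiring care is the self-consistency of the truncation, namely confirming that no subroutine needed for $m \le \mstar$ secretly requires information about ranks beyond $\mstar$; this follows from the ``smaller-index-only'' structure of the recursion in Theorem \ref{thm:Fi-recursive-full}. I would also flag that scalars such as $\frac{1}{N}$ and ${N-1 \choose \ell}^{-1}$ still reference the full $N$, but since these are applied multiplicatively after the truncated sums are formed, they leave the complexity unchanged.
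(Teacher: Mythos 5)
Your proposal is correct and matches the paper's own proof: the paper likewise charges $O(K \mstar |\wspace_{(K)}|)$ for the $\Fi$ loop, $O(\mstar |\wspace_{(K)}|)$ for the $\Rim$ loop, and $O(K \mstar |\wspace_{(K)}|)$ for the $\Gil$ aggregation, then multiplies by the outer loop over $i \in \{1,\ldots,N\}$ and uses $|\wspace_{(K)}| = K\nwspace$ to obtain $O(\nwspace K^2 N \mstar)$. Your added check that the truncation at $\mstar$ is self-consistent with the smaller-index-only recursion is a sensible (if implicit in the paper) correctness remark and does not change the argument.
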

\begin{proof}
It is easy to see that the for-loop for computing $\Fi$ for $\ell \le K$ requires a runtime of $O(K \mstar |\wspace_{(K)}|)$. The for-loop for computing $\Rim$ requires a runtime of $O(\mstar |\wspace_{(K)}|)$. The for-loop for computing $\Gil$ for $\ell \le K$ requires a runtime of $O(K\mstar |\wspace_{(K)}|)$. 
All of these subroutines are included in the outside for-loop for computing $\phi_{z_i}$ for all $z_i \in D$. 
Hence, the overall runtime is $O(KN \nwspace |\wspace_{(K)}|) = O(\nwspace K^2 N \mstar)$. 
\end{proof}

\begin{theorem}[Restate of Theorem \ref{thm:error-bound}]
For any $z_i \in D$, the approximated Shapley value $\phihatmstar$ \textbf{(1)} shares the same sign as $\phi_{z_i}$, \textbf{(2)} ensures $\left| \phihatmstar \right| \le \left| \phi_{z_i} \right|$, and \textbf{(3)} has the approximation error bounded by $\left| \phihatmstar - \phi_{z_i} \right| \le \eps(\mstar)$ where 
\begin{align*}
\eps(\mstar) := \sum_{m = \mstar+1}^N \left( \frac{1}{m-K} - \frac{1}{m} \right) + \sum_{\ell=1}^{K-1} \frac{ {N \choose \ell} - {\mstar \choose \ell} }{ N {N-1 \choose \ell} } = O\left( K / \mstar \right)
\end{align*}
\end{theorem}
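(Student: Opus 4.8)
The plan is to exploit the fact that the approximation in Definition \ref{def:approxmstar} is obtained from the exact expression in Theorem \ref{thm:wknn-shapley-expression} purely by \emph{truncating} two summations from $N$ down to $\mstar$: the inner $m$-sum defining $\Giltilmstar$ and the outer $m$-sum over the $\Rim$ terms. Writing $\phi_{z_i} = \sign(w_i)\, B$ and $\phihatmstar = \sign(w_i)\, \widehat B$ for the two bracketed quantities, I would first record that every ingredient — each $\Fi[m,\ell,s]$, and hence each $\Gil$, $\Giltilmstar$, and $\Rim$ — is a count of subsets and therefore non-negative, while all binomial coefficients in the denominators are strictly positive. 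Truncation thus only deletes non-negative terms, so $0 \le \widehat B \le B$. Parts \textbf{(1)} and \textbf{(2)} are then immediate: $\phihatmstar$ and $\phi_{z_i}$ are the same sign factor $\sign(w_i)$ times non-negative quantities (so they never have opposite signs), and $|\phihatmstar| = \widehat B \le B = |\phi_{z_i}|$, with both sides zero when $w_i = 0$.

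For part \textbf{(3)}, since $B \ge \widehat B$ the error equals $B - \widehat B$, which is precisely the sum of the deleted terms (the $\ell = 0$ term cancels because $\Gizero = \widetilde{\texttt{G}}_{i,0}^{(\mstar)}$):
\[
B - \widehat B = \frac{1}{N}\sum_{\ell=1}^{K-1}\frac{\Gil - \Giltilmstar}{{N-1 \choose \ell}} + \sum_{m=\mstar+1}^N \frac{\Rim}{m\,{m-1 \choose K}}.
\]
I would bound the two pieces separately, in each case first dropping the weight-range constraint on $s$, which can only enlarge the counts. For the first piece, $\Gil - \Giltilmstar$ counts size-$\ell$ subsets of $D \setminus \{z_i\}$ satisfying $\condKNN$ (automatic for $\ell \le K-1$) whose farthest member has index exceeding $\mstar$; dropping the $s$-constraint bounds it by the number of size-$\ell$ subsets of $D \setminus \{z_i\}$ that contain at least one index $> \mstar$, which is at most ${N-1 \choose \ell} - {\mstar-1 \choose \ell} \le {N \choose \ell} - {\mstar \choose \ell}$, producing the second sum in $\eps(\mstar)$. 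For the second piece, using Theorem \ref{thm:Fi-recursive-full} to evaluate $\sum_s \Fi[t,K-1,s]$ and summing over $t \le m-1$ gives $\Rim \le {m-1 \choose K-1}$; then the identity ${m-1 \choose K-1}/{m-1 \choose K} = K/(m-K)$ together with the partial-fraction identity $\tfrac{K}{m(m-K)} = \tfrac{1}{m-K} - \tfrac{1}{m}$ turns each term $\frac{\Rim}{m\,{m-1 \choose K}}$ into the bound $\frac{1}{m-K} - \frac{1}{m}$, producing the first sum in $\eps(\mstar)$. Adding the two pieces yields $|\phihatmstar - \phi_{z_i}| \le \eps(\mstar)$.

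For the rate $\eps(\mstar) = O(K/\mstar)$, I would estimate the two sums directly. The first is a partial telescope, $\sum_{m=\mstar+1}^N (\frac{1}{m-K} - \frac{1}{m}) = \sum_{j=\mstar+1-K}^{\mstar}\frac{1}{j} - \sum_{j=N-K+1}^N\frac{1}{j} \le \frac{K}{\mstar+1-K} = O(K/\mstar)$, and the second is lower order: simplifying ${N \choose \ell}/(N\,{N-1 \choose \ell}) = 1/(N-\ell)$ shows each of its $K-1$ terms is at most $1/(N-\ell)$, so the sum is $O(K/N) \le O(K/\mstar)$.

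The main obstacle I anticipate is pinning down the two combinatorial upper bounds so that the clean closed forms ${N \choose \ell} - {\mstar \choose \ell}$ and ${m-1 \choose K-1}$ emerge uniformly in $i$; the subtlety is handling the index exclusion $D \setminus \{z_i\}$ (the $\Fi[i,\cdot,\cdot] = 0$ convention and whether $i \le \mstar$ or $i \le m-1$), and verifying the worst case over $i$ still fits the stated bound. Once those bounds are in place, the two algebraic identities and the telescoping estimate are routine.
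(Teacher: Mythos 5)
Your proposal is correct and follows essentially the same route as the paper's proof: parts (1) and (2) from non-negativity of the counts under truncation, and part (3) by bounding the deleted terms via $\sum_{s}\Fi[m,\ell,s] \le {m-1 \choose \ell-1}$, the hockey-stick identity, the bound $\Rim \le {m-1 \choose K-1}$, and the partial-fraction identity $\frac{K}{m(m-K)} = \frac{1}{m-K}-\frac{1}{m}$. Your count of "size-$\ell$ subsets of $D\setminus\{z_i\}$ with an index exceeding $\mstar$" is a slightly tighter intermediate bound than the paper's, and your explicit telescoping estimate for the $O(K/\mstar)$ rate fills in a step the paper only asserts.
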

\begin{proof}
The property \textbf{(1)} follows from that the sign of both exact and approximated WKNN-Shapley only depends on whether $y_i = \yval$. 
The property \textbf{(2)} follows from the approximation algorithm only counts \emph{part} of the subproblems, and hence $\left| \phihatmstar \right| \le \left| \phi_{z_i} \right|$. We now prove property \textbf{(3)}. 

In the exact algorithm \ref{alg:efficient-wknn-readable}, we have
\begin{align*}
\phi_i = 
\underbrace{\frac{1}{N} \sum_{\ell=0}^{K-1} \frac{\Gil}{{N-1 \choose \ell}}}_{(A)}
+ 
\underbrace{
\sum_{m = \max(i+1, K+1)}^N \Rim \left(\frac{1}{m}\right) {m-1 \choose K}^{-1} }_{(B)}
\end{align*}
First of all, note that 
\begin{align*}
\sum_{s \in \wspace} \Fi\left[m, \ell, s\right]
= 
{m-1 - \ind[i < m] \choose \ell-1}
\le 
{m-1 \choose \ell-1}
\end{align*}
for any $\ell \le K$ since $\sum_{s \in \wspace} \Fi\left[m, \ell, s\right]$ is essentially the total number of subsets $S \subseteq D \setminus {z_i}$ of size $\ell$ where $z_m$ is the farthest data point to the query example $\xval$. 

Now, denote 
\begin{align*}
\Giltil := \sum_{m=1}^{\mstar} \sum_{s \in [-\wtil_i, 0)} \Fi\left[m, \ell, s\right]
\end{align*}
for $1 \le \ell \le K-1$. The gap between $\Gil$ and $\Giltil$ can be bounded as follows:
\begin{align*}
\left| \Giltil - \Gil \right|
&= 
\sum_{m = \mstar+1}^N \sum_{s \in [-\wtil_i, 0)} \Fi\left[m, \ell, s\right] \\
&\le 
\sum_{m = \mstar+1}^N \sum_{s \in \wspace} \Fi\left[m, \ell, s\right] \\
&\le \sum_{m = \mstar+1}^N {m-1 \choose \ell-1} \\
&= \sum_{m = \ell}^N {m-1 \choose \ell-1} - \sum_{m = \ell}^{\mstar} {m-1 \choose \ell-1} \\
&= {N \choose \ell} - {\mstar \choose \ell}
\end{align*}

Now we bound the error from taking the approximation $\Rimhat = 0$ for $m \ge \mstar+1$. Since we have 
\begin{align*}
\Rim 
&= 
\sum_{t=1}^{m-1} \sum_{s \in [-\wtil_i, -\wtil_m)} \Fi[t, K-1, s] \\
&\le \sum_{t=1}^{m-1} {t-1 \choose K-2} \\
&= {m-1 \choose K-1}
\end{align*}
Hence 
\begin{align*}
\sum_{m = \max(i+1, K+1, \mstar+1)}^N \Rim \left(\frac{1}{m}\right) {m-1 \choose K}^{-1} 
&\le 
\sum_{m = \max(i+1, K+1, \mstar+1)}^N 
{m-1 \choose K-1} \left(\frac{1}{m}\right) {m-1 \choose K}^{-1} \\
&\le 
\sum_{m = \mstar+1}^N 
{m-1 \choose K-1} \left(\frac{1}{m}\right) {m-1 \choose K}^{-1} \\
&= \sum_{m = \mstar+1}^N \frac{K}{m (m-K)} \\
&= \sum_{m = \mstar+1}^N \left( \frac{1}{m-K} - \frac{1}{m} \right)
\end{align*}

Hence, for any data point $z_i$, we have 
\begin{align*}
\left| \phihatmstar - \phi_i \right| 
&= 
\frac{1}{N} \sum_{\ell=0}^{K-1} \frac{\left| \Gil - \Giltil \right|}{{N-1 \choose \ell}}
+ 
\sum_{m = \max(i+1, K+1, \mstar+1)}^N \Rim \left(\frac{1}{m}\right) {m-1 \choose K}^{-1} \\
&\le 
\frac{1}{N} \sum_{\ell=1}^{K-1} \frac{{N \choose \ell} - {\mstar \choose \ell}}{ {N-1 \choose \ell} } 
+
\sum_{m = \mstar+1}^N \left( \frac{1}{m-K} - \frac{1}{m} \right)
\end{align*}
\end{proof}



\begin{theorem}[Restate of Theorem \ref{thm:approx-shapley-property}]
The approximated Shapley value $\{ \phihatmstar \}_{z_i \in D}$ satisfies symmetry and null player axiom. 
\end{theorem}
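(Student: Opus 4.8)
The plan is to prove the two axioms by separate routes. The null player property falls out of the bounds already established, whereas symmetry demands a combinatorial invariance of the truncated counts that is the genuinely delicate part.

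\textbf{Null player.} Suppose $\U(S\cup\{z_i\})=\U(S)$ for every $S\subseteq D\setminus\{z_i\}$. The exact Shapley value obeys the null player axiom, so $\phi_{z_i}=0$. Property \textbf{(2)} of Theorem~\ref{thm:error-bound} then gives $|\phihatmstar|\le|\phi_{z_i}|=0$, hence $\phihatmstar=0$. One may also read this off the counts directly: by Theorem~\ref{thm:marginal-contri} a null player satisfies neither $\condNegToPos$ nor $\condPosToNeg$ under $\condKNN$, so every entry feeding $\Giltilmstar$ and every $\Rim$ vanishes, and Definition~\ref{def:approxmstar} yields $\phihatmstar=0$.

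\textbf{Symmetry, reduction to marginal contributions.} Suppose $\U(S\cup\{z_i\})=\U(S\cup\{z_j\})$ for all $S\subseteq D\setminus\{z_i,z_j\}$, with $i<j$ in the sorted order. The first step I would take is to rewrite the approximation back in terms of marginal contributions. Combining Theorem~\ref{thm:marginal-contri} with the $\Rim$-reconstruction in the proof of Theorem~\ref{thm:wknn-shapley-expression}, one obtains
\begin{equation}
\phihatmstar=\frac{1}{N}\sum_{\ell=0}^{N-1}\binom{N-1}{\ell}^{-1}\sum_{\substack{S\subseteq D\setminus\{z_i\},\,|S|=\ell\\ |S\cap\{z_1,\dots,z_{\mstar}\}|\ge\min(\ell,K)}}\big[\U(S\cup\{z_i\})-\U(S)\big],
\nonumber
\end{equation}
because the condition that at least $\min(\ell,K)$ points of $S$ lie among the first $\mstar$ points is precisely the truncation $\Fi[m,\cdot,\cdot]=0$ and $\Rim=0$ for $m>\mstar$. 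The identical formula holds with $z_j$ in place of $z_i$. This form makes the role of symmetry manifest: since $z_i$ and $z_j$ are interchangeable, the map sending $S\mapsto(S\setminus\{z_j\})\cup\{z_i\}$ when $z_j\in S$, and $S\mapsto S$ otherwise, is a size-preserving bijection from subsets of $D\setminus\{z_i\}$ to subsets of $D\setminus\{z_j\}$ that preserves each marginal contribution. Absent the position restriction, the two sums would therefore agree termwise, recovering the exact symmetry axiom.

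\textbf{The main obstacle.} The difficulty, which I expect to be the crux, is that the restriction $|S\cap\{z_1,\dots,z_{\mstar}\}|\ge\min(\ell,K)$ is not invariant under the bijection exactly when $z_i$ and $z_j$ straddle the threshold, i.e. $i\le\mstar<j$: replacing the farther twin $z_j$ by the closer twin $z_i$ raises by one the number of members lying in the prefix $\{z_1,\dots,z_{\mstar}\}$, so a borderline subset can cross the inclusion boundary. Tracking the symmetric difference of the two counted families, the only surviving mismatch is the aggregate marginal contribution $\sum_{S}\big[\U(S\cup\{z_i\})-\U(S)\big]$ over subsets $S\ni z_j$ with $z_i\notin S$ having exactly $\min(\ell,K)-1$ members in the prefix, and symmetry is proved once this aggregate is shown to vanish for each $\ell$. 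I would establish the vanishing from the structural content of the hypothesis — that no point lying strictly between $z_i$ and $z_j$ may distinguish the pair — which constrains these boundary subsets enough to force their contributions to cancel in the sum; in the generic case where symmetric points share a distance to $\xval$, they occupy adjacent ranks on the same side of $\mstar$, the mismatch set is empty, and symmetry is immediate. Carrying out this cancellation rigorously is the step I anticipate requiring the most care.
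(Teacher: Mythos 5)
Your null-player argument is correct and coincides with the paper's: the exact axiom gives $\phi_{z_i}=0$, and property \textbf{(2)} of Theorem~\ref{thm:error-bound} (equivalently, the non-negativity of the counts together with $\Giltilmstar\le\Gil$ and $\Rim\ge 0$) forces $\phihatmstar=0$.

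For symmetry, your rewriting of $\phihatmstar$ as a sum of marginal contributions over subsets whose $\min(\ell,K)$ nearest members all lie in the prefix $\{z_1,\dots,z_{\mstar}\}$ is exactly the paper's, and you have correctly located the one place where the swap bijection $S\mapsto(S\setminus\{z_j\})\cup\{z_i\}$ fails to preserve the truncation condition, namely $i\le\mstar<j$. The gap is that your proof stops at the crux: you assert that the leftover aggregate of marginal contributions over the mismatched boundary subsets vanishes ``by the structural content of the hypothesis,'' but this claim is false, so the intended cancellation cannot be carried out. Concretely, take $K=3$, $D=\{z_1,z_2,z_3\}$ sorted by distance to $\xval$, $y_1\ne\yval$, $y_2=y_3=\yval$, discretized weights $w_1=1$ and $w_2=w_3=\tfrac12$ (consistent with weights non-increasing in distance), and $\mstar=2$. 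Then $\U(T\cup\{z_2\})=\U(T\cup\{z_3\})$ for both $T\in\{\emptyset,\{z_1\}\}$, so $z_2$ and $z_3$ are symmetric and indeed $\phi_{z_2}=\phi_{z_3}=\tfrac13$; but the unique subset with nonzero marginal contribution for $z_3$ is $\{z_1,z_2\}$, which lies inside the prefix and survives the truncation, whereas the unique such subset for $z_2$ is $\{z_1,z_3\}$, whose farthest member has rank $3>\mstar$ and is discarded. Hence $\phihat_{z_3}^{(\mstar)}=\tfrac13\ne 0=\phihat_{z_2}^{(\mstar)}$. You should be aware that the paper's own proof silently assumes the truncation condition is preserved under the swap and therefore shares this gap; what both arguments actually establish is symmetry for pairs that do not straddle the threshold (both indices $\le\mstar$ or both $>\mstar$). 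Your diagnosis of the obstruction is sharper than the paper's treatment, but the vanishing statement you would need to finish the proof does not hold without an additional hypothesis ruling out straddling symmetric pairs.
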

\begin{proof}

\textbf{Null Player Axiom.} If a data point $z_i$ is a null player (i.e., $\U(S \cup z_i) = \U(S)$ for all $S \subseteq D \setminus \{z_i\}$), then we have $\phi_{z_i} = 0$. 
From the expression in Theorem \ref{thm:wknn-shapley-expression}, we must have $\Rim = 0$ for all $0 \le m \le N$ and $\Gil = 0$ for all $0 \le \ell \le N-1$ (as these are non-negative quantities). 
Since $\Giltil \le \Gil$, we know that $\Giltil = 0$ for all $0 \le \ell \le N-1$. Hence, we have $\phihatmstar = 0$.

\newcommand{\condM}[1]{\texttt{cond}_{\le #1}}

\textbf{Symmetry Axiom.} 
Denote the condition $\condM{m}$ as ``within $S$, the $\min(\ell, K)$-th closest to the query example $\xval$ is among $\{x_k\}_{k=1}^{m}$. 
If two data points $z_i, z_j$ are symmetry (i.e., $\U(S \cup z_i) = \U(S \cup z_j)$ for all $S \subseteq D \setminus \{z_i, z_j\}$), we must have 
$\Giltilmstar = \widetilde{\texttt{G}}_{j, \ell}^{(\mstar)}$ since 
\begin{align*}
\Giltilmstar = \sum_{S \subseteq D \setminus \{z_i\}, |S|=\ell, \condM{\mstar} } [\U(S \cup \{z_i\}) - \U(S)]
\end{align*}
Furthermore, we also have $\Rim = \texttt{R}_{j, m}$ since 
\begin{align*}
\Rim = \sum_{S \subseteq D \setminus \{z_i\}, |S|=K-1, \condM{m-1} } [\U(S \cup \{z_i\}) - \U(S)]
\end{align*}
which leads to 
$
\phihat_{z_i}^{(\mstar)} = \phihat_{z_j}^{(\mstar)}
$. 
\end{proof}

\newpage

\section{Evaluation Settings \& Additional Experiments}
\label{appendix:eval}

We provide a summary of the content in this section for the convenience of the readers. 
\begin{itemize}
    \item Appendix \ref{appendix:eval-settings}: General experiment settings (datasets and implementation details). 
    \item Appendix \ref{sec:eval-discretization}: Additional experiments for evaluating the influence of weights discretization. 
    \item Appendix \ref{appendix:runtime}: Experiment Settings and additional results for the runtime comparison of algorithms for computing/approximating WKNN-Shapley. 
    \item Appendix \ref{appendix:eval-noisydetect}: Experiment Settings and additional results for noisy data detection task. 
    \item Appendix \ref{appendix:eval-varyEPS}: Ablation study for the choice of $\mstar$ for the deterministic approximation algorithm from Section \ref{sec:deterministic-approx}. 
    \item Appendix \ref{appendix:eval-comparison}: Qualitative comparison between weighted and unweighted KNN-Shapley scores. 
    \item Appendix \ref{appendix:eval-dataselection}: Additional results for data selection tasks. 
\end{itemize}

\newpage

\subsection{Experiment Settings}
\label{appendix:eval-settings}

\subsubsection{Datasets}
\label{appendix:datasets}

An overview of the dataset information can be found in Table \ref{tb:datasets}. 
Following the existing literature in data valuation \cite{ghorbani2019data, kwon2022beta, jia2019towards, wang2023data, wang2023threshold}, we preprocess datasets for ease of training. 
Following \cite{kwon2022beta}, for Fraud, Creditcard, and all datasets from OpenML, we subsample the dataset to balance positive and negative labels. For these datasets, if they have multi-class, we binarize the label by considering $\ind[y=1]$. 
Following \cite{wang2023threshold}, for the image dataset MNIST, CIFAR10, we apply a ResNet50 \cite{he2016deep} that is pre-trained on the ImageNet dataset as the feature extractor. This feature extractor produces a 1024-dimensional vector for each image. 
For the sentence classification datasets AGNews and DBPedia, we use sentence embedding \cite{reimers-2019-sentence-bert} to extract features, resulting in 1024-dimensional vectors for each textual sample. We then standardize these extracted features using L2 normalization.

The size of each dataset we use is shown in Table \ref{tb:datasets}. For some of the datasets, we use a subset of the full set. The validation data size we use is 10\% of the training data size. 


\begin{table}[h]
\centering
\begin{tabular}{@{}cccc@{}}
\toprule
\textbf{Dataset} & \textbf{Number of classes} & \textbf{Size of dataset} &  \textbf{Source}                        \\ \midrule
Click            & 2 &2000 &\url{https://www.openml.org/d/1218}  \\
Fraud            & 2&2000 &\cite{dal2015calibrating}            \\
Creditcard       & 2&2000 &\cite{yeh2009comparisons}            \\
Apsfail          & 2&2000 &\url{https://www.openml.org/d/41138} \\
Phoneme          & 2&2000 &\url{https://www.openml.org/d/1489}  \\
Wind             & 2&2000 &\url{https://www.openml.org/d/847}   \\
Pol              & 2&2000 &\url{https://www.openml.org/d/722}   \\
CPU              & 2&2000 &\url{https://www.openml.org/d/761}   \\
2DPlanes         & 2&2000 &\url{https://www.openml.org/d/727}    \\ 
MNIST            &10 &2000 &\cite{lecun1998mnist}                \\
CIFAR10          &10 &2000 &\cite{krizhevsky2009learning}        \\
AGnews           &4  &2000 &\cite{wang2021n24news}                \\
DBPedia          &14 &2000 &\cite{auer2007dbpedia}        \\ \bottomrule
\end{tabular}
\caption{A summary of datasets used in Section \ref{sec:eval}'s experiments.}
\label{tb:datasets}
\end{table}

\subsubsection{Implementation of Weighted KNN-Shapley}
In Section \ref{sec:eval} in the main text, the weights used in KNN are based on $\ell_2$ distance between the training point and queried example, and then normalize all weights to $[0, 1]$. That is, the weight of a data point $z_i$ is computed by  
\begin{align*}
w_i := \frac{\norm{x_N - \xval}-\norm{x_i - \xval}}{ \norm{x_N - \xval} - \norm{x_1 - \xval} }
\end{align*} 
The weights are then discretized by rounding to the nearest values that can be represented with $b$ bits. 
That is, we create $2^b$ equally spaced points within the interval of $[0, 1]$, and round the weights to the closest point in the discretized space. We set the number of bits $b = 3$ in all experiments unless explicitly specified. 

\subsubsection{Details for Mislabel Data Detection Experiment}

In the experiment of mislabeled data detection, we randomly choose 10\% of the data points and flip their labels. Specifically, we flip 10\% of the labels by picking an alternative label from the rest of the classes uniformly at random. 
\begin{remark}[\textbf{Baseline of Data Shapley}]
We note that several works have demonstrated that (unweighted) KNN-Shapley significantly outperforms the traditional Data Shapley \cite{pandl2021trustworthy, wang2023threshold} in discerning data quality in tasks such as mislabeled data detection. Moreover, Data Shapley is highly inefficient as it requires ML models for many times, which is impractical for actual use. Hence, in this work, we omit the baseline of Data Shapley.  
\end{remark}




\subsection{Error From Weight Discretization}
\label{sec:eval-discretization}

\subsubsection{Value Deviation}

We empirically study the difference between WKNN-Shapley computed based on the original continuous weights and the discretized weights. However, for continuous weights, it is computationally infeasible to compute the exact Data Shapley. Therefore, we instead look at the computed Shapley values' difference when using $b$ bits and $b+1$ bits for $b = 1, 2, \ldots$. 
Figure \ref{fig:discretization_error} shows the results for $\ell_2$ and $\ell_\infty$ error. 
That is, a point on the figure at x-axis $b$ refers to the \emph{error reduction} if using one more bit $b+1$.
We have two observations here: 
\textbf{(1)} The error converges quickly as $b$ increases and is near zero after $b \ge 5$. 
\textbf{(2)} The larger the dataset size $N$ is, the smaller the error is. 
This interesting phenomenon is because the errors are dominated by the differences in the Shapley value computed for influential data points. 
When the dataset size is small, there are more influential data points since the performance of models trained on different data subsets can be significantly different from each other. 
On the other hand, when the dataset size is larger, there will be fewer influential points since most of the data subsets have a high utility (see Figure \ref{fig:shapley-distribution} for the visualization of the comparison between the distribution of data value scores). 

\begin{figure}[h]
\centering
\includegraphics[width=\columnwidth]{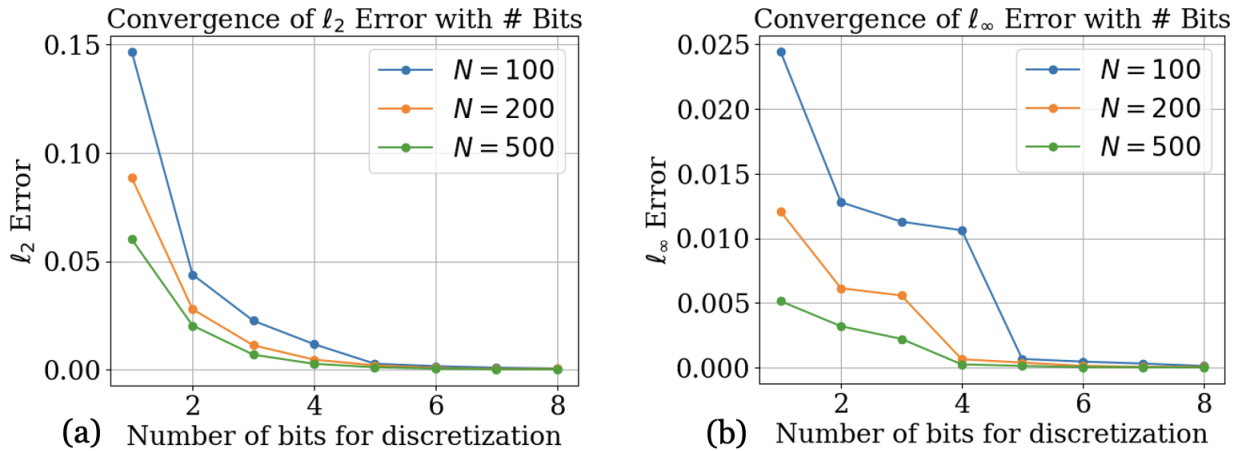}
\caption{Convergence of the discretization error with the number of bits growth. The $y$-axis shows the $\ell_2$ or $\ell_\infty$ norm of the difference between the Shapley values computed based on $b$ bits and $b+1$ bits. The lower, the better. We use Fraud dataset from OpenML \cite{dal2015calibrating}, and we use $K=5$ here.}
\label{fig:discretization_error}
\end{figure}

\begin{figure}[h]
    \centering
    \includegraphics[width=0.54\columnwidth]{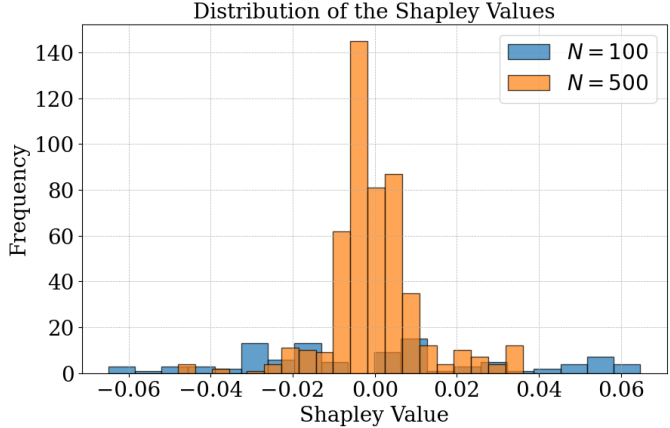}
    \caption{
    Distributions of WKNN-Shapley on different sizes of the subset of Fraud dataset from OpenML \cite{dal2015calibrating} (the number of bits for discretization $b = 5$ and $K = 5$).
    }
    \label{fig:shapley-distribution}
\end{figure}

\subsubsection{Performance on discerning data quality}

In this section, we compare the performance between continuous and discretized weighted KNN-Shapley on the tasks of discerning data quality, specifically mislabeled data detection and noisy data detection. 
The experiment settings are the same as those described in Section \ref{sec:eval-application} and Appendix \ref{appendix:eval-noisydetect}. 

\paragraph{Comparison between continuous and discretized WKNN-Shapley.}
Due to the runtime complexity of $O(N^K)$ associated with the exact algorithm for computing WKNN-Shapley with continuous weights, it becomes unfeasible to compute WKNN-Shapley with continuous weights for values of $K > 2$. Thus, in this section, our focus remains on the comparative performance for $K=2$.
Table \ref{tb:cts_vs_disc_K2-mislabel} shows the performance comparison on mislabeled data detection task, and Table \ref{tb:cts_vs_disc_K2-noisy} shows the performance comparison on noisy data detection task (with the same setting as stated in Appendix \ref{appendix:eval-settings}). 
As we can see, WKNN-Shapley with discretized weights (third column) maintains a performance closely aligned with its counterpart that uses continuous weights (first column). This observation further validates that weight discretization only has a small influence on the efficacy of WKNN-Shapley in differentiating between high- and low-quality data points.

\paragraph{Comparison of discretized weighted KNN-Shapley with different discretization bits.} 
Figure \ref{fig:varyB-mislabel} and \ref{fig:varyB-noisy} show the performance comparison of WKNN-Shapley with different number of bits for discretization $b$, on the task of mislabeled data detection and noisy data detection, respectively. 
As we can see, the performance of both the exact and deterministic approximation of WKNN-Shapley is relatively stable, regardless of the number of discretization bits.

\begin{table}[h]
\centering
\setlength\intextsep{0pt}
\setlength\abovecaptionskip{2pt}
\setlength\belowcaptionskip{0pt}
\resizebox{\columnwidth}{!}{\begin{tabular}{@{}ccccc@{}}
\toprule
                    & \textbf{\begin{tabular}[c]{@{}c@{}}Weighted\\ KNN-Shapley \\ (Continuous Weights)\end{tabular}} & \textbf{\begin{tabular}[c]{@{}c@{}}Weighted\\ KNN-Shapley\\ (Continuous Weights, \\ Monte Carlo Approximation)\end{tabular}} & \textbf{\begin{tabular}[c]{@{}c@{}}Weighted\\ KNN-Shapley\\ (Discretized Weights, \\ this work)\end{tabular}} & \textbf{\begin{tabular}[c]{@{}c@{}}Weighted\\ KNN-Shapley\\ (Discretized Weights, \\ deterministic approximation, \\ this work)\end{tabular}} \\ \midrule
\textbf{2DPlanes}   & 0.853                                                                                           & 0.851                                                                                                                        & 0.854                                                                                                         & 0.851                                                                                                                                         \\
\textbf{CPU}        & 0.809                                                                                           & 0.791                                                                                                                        & 0.809                                                                                                         & 0.796                                                                                                                                         \\
\textbf{Phoneme}    & 0.729                                                                                           & 0.609                                                                                                                        & 0.747                                                                                                         & 0.742                                                                                                                                         \\
\textbf{Fraud}      & 0.511                                                                                           & 0.501                                                                                                                        & 0.507                                                                                                         & 0.513                                                                                                                                         \\
\textbf{Creditcard} & 0.723                                                                                           & 0.716                                                                                                                        & 0.724                                                                                                         & 0.698                                                                                                                                         \\
\textbf{Vehicle}    & 0.733                                                                                           & 0.636                                                                                                                        & 0.72                                                                                                          & 0.769                                                                                                                                         \\
\textbf{Click}      & 0.707                                                                                           & 0.609                                                                                                                        & 0.707                                                                                                         & 0.719                                                                                                                                         \\
\textbf{Wind}       & 0.81                                                                                            & 0.769                                                                                                                        & 0.813                                                                                                         & 0.809                                                                                                                                         \\
\textbf{Pol}        & 0.973                                                                                           & 0.978                                                                                                                        & 0.996                                                                                                         & 1                                                                                                                                             \\
\textbf{MNIST}      & 0.732                                                                                           & 0.742                                                                                                                        & 0.733                                                                                                         & 0.72                                                                                                                                          \\
\textbf{CIFAR10}    & 0.742                                                                                           & 0.64                                                                                                                         & 0.729                                                                                                         & 0.722                                                                                                                                         \\ 
\textbf{AGNews}     & 0.942                                                                                        & 0.932                                                                                     & 0.944                                                                                  & 0.914                                                                                         \\
\textbf{DBPedia}    & 0.969                                                                                        & 0.969                                                                                     & 0.988                                                                                  & 0.988 \\ \bottomrule
\end{tabular}}
\caption{AUROC scores of different variants of weighted KNN-Shapley for mislabeled data detection on benchmark datasets at $K=2$. 
The higher the AUROC score is, the better the method is. 
The Monte Carlo approximation (second column) is a stochastic algorithm. However, since it is already computationally expensive for just a single execution, we only run it once. 
}
\label{tb:cts_vs_disc_K2-mislabel}
\end{table}

\begin{table}[h]
\centering
\setlength\intextsep{0pt}
\setlength\abovecaptionskip{2pt}
\setlength\belowcaptionskip{0pt}
\resizebox{\columnwidth}{!}{\begin{tabular}{@{}ccccc@{}}
\toprule
                    & \textbf{\begin{tabular}[c]{@{}c@{}}Weighted\\ KNN-Shapley \\ (Continuous Weights)\end{tabular}} & \textbf{\begin{tabular}[c]{@{}c@{}}Weighted\\ KNN-Shapley\\ (Continuous Weights, \\ Monte Carlo Approximation)\end{tabular}} & \textbf{\begin{tabular}[c]{@{}c@{}}Weighted\\ KNN-Shapley\\ (Discretized Weights, \\ this work)\end{tabular}} & \textbf{\begin{tabular}[c]{@{}c@{}}Weighted\\ KNN-Shapley\\ (Discretized Weights, \\ deterministic approximation, \\ this work)\end{tabular}} \\ \midrule
\textbf{2DPlanes}   & 0.62                                                                                            & 0.524                                                                                                                        & 0.609                                                                                                         & 0.604                                                                                                                                         \\
\textbf{CPU}        & 0.76                                                                                            & 0.698                                                                                                                        & 0.8                                                                                                           & 0.676                                                                                                                                         \\
\textbf{Phoneme}    & 0.591                                                                                           & 0.542                                                                                                                        & 0.578                                                                                                         & 0.6                                                                                                                                           \\
\textbf{Fraud}      & 0.831                                                                                           & 0.722                                                                                                                        & 0.831                                                                                                         & 0.622                                                                                                                                         \\
\textbf{Creditcard} & 0.493                                                                                           & 0.507                                                                                                                        & 0.533                                                                                                         & 0.489                                                                                                                                         \\
\textbf{Vehicle}    & 0.569                                                                                           & 0.529                                                                                                                        & 0.533                                                                                                         & 0.44                                                                                                                                          \\
\textbf{Click}      & 0.431                                                                                           & 0.458                                                                                                                        & 0.413                                                                                                         & 0.333                                                                                                                                         \\
\textbf{Wind}       & 0.773                                                                                           & 0.667                                                                                                                        & 0.773                                                                                                         & 0.649                                                                                                                                         \\
\textbf{Pol}        & 0.502                                                                                           & 0.52                                                                                                                         & 0.547                                                                                                         & 0.471                                                                                                                                         \\
\textbf{MNIST}      & 0.68                                                                                            & 0.671                                                                                                                        & 0.676                                                                                                         & 0.678                                                                                                                                         \\
\textbf{CIFAR10}    & 0.502                                                                                           & 0.529                                                                                                                        & 0.509                                                                                                         & 0.502                                                                                                                                         \\ 
\textbf{AGNews}     & 0.508                                                                                        & 0.508                                                                                     & 0.472                                                                                  & 0.444                                                                                         \\
\textbf{DBPedia}    & 0.447                                                                                        & 0.451                                                                                     & 0.443                                                                                  & 0.514 \\ \bottomrule
\end{tabular}}
\caption{AUROC scores of different variants of weighted KNN-Shapley for noisy data detection on benchmark datasets at $K=2$. 
The higher the AUROC score is, the better the method is. 
The Monte Carlo approximation (second column) is a stochastic algorithm. However, since it is already computationally expensive for just a single execution, we only run it once. 
}
\label{tb:cts_vs_disc_K2-noisy}
\end{table}

\begin{figure}[h]
    \centering
    \setlength\intextsep{0pt}
    \setlength\abovecaptionskip{0pt}
    \setlength\belowcaptionskip{0pt}
    \centering
    \includegraphics[width=0.8\columnwidth]{image/varyB_Mislabel.pdf}
    \caption{
    AUROC scores of different variants of KNN-Shapley for noisy data detection with different discretization bits 
    $b$. The higher the curve is, the better the method is. 
    }
    \label{fig:varyB-mislabel}
\end{figure}

\begin{figure}[h]
    \centering
    \setlength\intextsep{0pt}
    \setlength\abovecaptionskip{0pt}
    \setlength\belowcaptionskip{0pt}
    \centering
    \includegraphics[width=0.8\columnwidth]{image/varyB_Noisy.pdf}
    \caption{
    AUROC scores of different variants of KNN-Shapley for noisy data detection with different discretization bits 
    $b$. 
    The higher the curve is, the better the method is. 
    }
    \label{fig:varyB-noisy}
\end{figure}

\clearpage

\subsection{Settings \& 
Additional Experiments for Runtime Comparison}
\label{appendix:runtime}

\textbf{Detailed Settings.} 
For the runtime comparison experiment in Section \ref{sec:eval-runtime}, we follow similar experiment settings from prior study \cite{kwon2023data} and use a synthetic binary classification dataset. To generate the synthetic dataset, we sample data points from a $2$-dimensional standard Gaussian distribution, and the labels are assigned based on the sign of the sum of the two features. We note that the dataset dimension has minimal impact on the runtime of WKNN-Shapley compared with dataset size $N$, since the dataset dimension only affects the runtime of computing the distance between data points. All experiments were conducted on a 32-Core 2.6 GHz Intel Skylake CPU Processor.

We present additional experimental results comparing runtimes, further expanding on Section \ref{sec:eval-runtime}. We vary both $K$, the KNN hyperparameter, and $b$, the bit count for discretization. 

Figure \ref{fig:runtime-varyB} shows the runtime comparison for our exact method and deterministic approximation for WKNN-Shapley, considering different number of bits $b$ for discretization. As we can see, although the runtime increases with more bits for discretization, our algorithms, even at $b=7$, demonstrate a $>10^4$ times of improvement over both the exact computation and approximation algorithm introduced in \cite{jia2019efficient}. 

Figure \ref{fig:runtime-varyK-exact} shows the runtime comparison between our exact WKNN-Shapley computation algorithm and the $O(N^K)$ algorithm from \cite{jia2019efficient}, considering different choices of $K$. Since $K=10$ for the baseline algorithm is computationally infeasible even for very small $N$ (e.g., 20), we do not show the curve here. As we can see, our algorithm's curves for $K=3$, $K=5$, and $K=10$ exhibit a relatively modest ascent in runtime, staying well below $10^6$ seconds even at 100,000 training points. In contrast, the algorithm from \cite{jia2019efficient} witnesses a steeper rise. This distinction is expected given that our algorithm's runtime scales at $O(K^2)$, whereas the one from \cite{jia2019efficient} features an exponential time complexity with respect to $K$. 
Figure \ref{fig:runtime-varyK-approx} shows the runtime comparison between our deterministic approximation algorithm and the Monte Carlo-based approximation algorithm (from \cite{jia2019efficient}) for WKNN-Shapley. As we can see, our approximation algorithm is around $> 10^4$ times faster than the Monte Carlo algorithm for achieving the same error bound. 

\begin{remark}
We note that, as a training-free algorithm, KNN-Shapley exhibits a significant advantage in its computational efficiency compared with approaches such as Data Shapley/Banzhaf which requires many model retraining. For instance, as reported in Data Banzhaf's official Github repo\footnote{https://github.com/Jiachen-T-Wang/data-banzhaf}, it takes around 5 CPU hours to train 10,000 very small MLP models on different subsets of a tiny, size-200 dataset! On the contrary, it only takes a few seconds for KNN-Shapley under the same setting.
\end{remark}


\begin{figure}[h]
    \centering
    \centering
    \includegraphics[width=0.85\columnwidth]{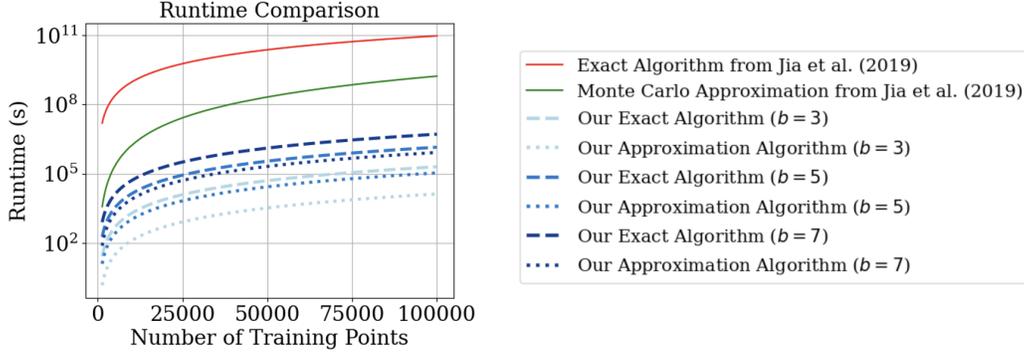}
    \caption{
    Runtime comparison between our exact and approximation algorithms for WKNN-Shapley in Section \ref{sec:shapley-for-binary}, and those from \cite{jia2019efficient}, across varying training data sizes $N$. We set $K = 5$ here for all methods. 
    For our algorithms from Section \ref{sec:shapley-for-binary}, we vary the number of bits for discretization. 
    All other settings are the same as Figure \ref{fig:runtime-nb3} in the maintext. 
    }
    \label{fig:runtime-varyB}
\end{figure}

\begin{figure}[h]
    \centering
    \centering
    \includegraphics[width=0.85\columnwidth]{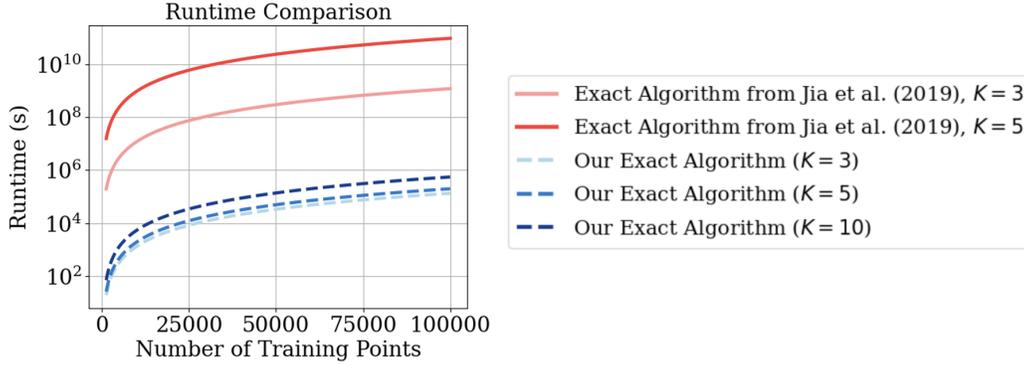}
    \caption{
    Runtime comparison between our exact WKNN-Shapley computation algorithm in Section \ref{sec:exact-shapley}, and the $O(N^K)$ from \cite{jia2019efficient}, across varying training data sizes $N$. 
    We set $b=3$ here for weights discretization in our algorithm. 
    We vary and compare the runtime for different choices of $K$. 
    All other settings are the same as Figure \ref{fig:runtime-nb3} in the maintext. 
    }
    \label{fig:runtime-varyK-exact}
\end{figure}

\begin{figure}[h]
    \centering
    \centering
    \includegraphics[width=0.85\columnwidth]{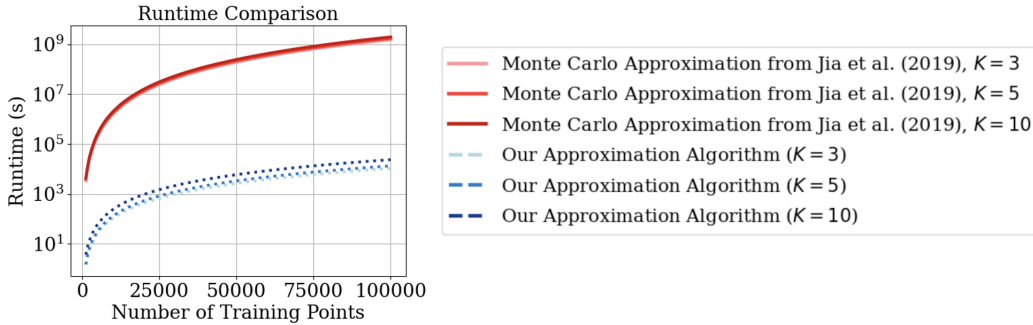}
    \caption{
    Runtime comparison between our deterministic WKNN-Shapley approximation algorithm in Section \ref{sec:deterministic-approx}, and the Monte Carlo approximation from \cite{jia2019efficient}, across varying training data sizes $N$. 
    We set $b=3$ here for weights discretization in our algorithm. 
    We vary and compare the runtime for different choices of $K$. 
    All other settings are the same as Figure \ref{fig:runtime-nb3} in the maintext. 
    }
    \label{fig:runtime-varyK-approx}
\end{figure}

\clearpage

\subsection{Application on Noisy Data Detection}
\label{appendix:eval-noisydetect}

\textbf{Settings.} 
In the experiment of noisy data detection, we randomly choose 10\% of the data points and add strong noise to their features. Specifically, we add zero-mean Gaussian noise to data features, where the standard deviation of the Gaussian noise added to each feature dimension is equal to the average absolute value of the feature dimension across the full dataset. 
Similar to the task of mislabeled data detection, we use AUROC as the performance metric on noisy data detection tasks. 

Table \ref{tb:noisy-detection} shows the AUROC scores across the 13 benchmark datasets we experimented on when $K=5$ and number of bits for discretization $b=3$. 
Similar to the results for mislabeled data detection, we can see that both exact and approximated WKNN-Shapley significantly outperform the unweighted KNN-Shapley (either soft-label or hard-label) across most datasets, attributable to WKNN-Shapley’s ability to more accurately differentiate between bad and good data based on the additional information of the proximity to the queried example. 
We can also see the similar encouraging result that the approximated WKNN-Shapley achieves performance comparable to, and sometimes even slightly better than, the exact WKNN-Shapley across the majority of datasets. This is likely attributable to its favored property in preserving the fairness properties of its exact counterpart. 

In Figure \ref{fig:ablation-varyK-noisy}, we show similar result on the task of noisy data detection that, compared to unweighted KNN-Shapley, WKNN-Shapley maintains notably stable performance across various choices of $K$, particularly for larger values of $K$. 
This is attributable to the additional weighting information incorporated in WKNN-Shapley. 


\begin{table}[h]
\centering
\resizebox{\columnwidth}{!}{\begin{tabular}{@{}ccccc@{}}
\toprule
& \textbf{\begin{tabular}[c]{@{}c@{}}Soft-label\\ KNN-Shapley \\ (Jia et al. (2019))\end{tabular}} & \textbf{\begin{tabular}[c]{@{}c@{}}Hard-label \\ KNN-Shapley \\ (this work)\end{tabular}} & \textbf{\begin{tabular}[c]{@{}c@{}}Exact\\ WKNN-Shapley\\ (this work)\end{tabular}} & \textbf{\begin{tabular}[c]{@{}c@{}}Approximated\\ WKNN-Shapley\\ (this work)\end{tabular}} \\ \midrule
\textbf{2DPlanes}   & 0.556                                                                                      & 0.498                                                                                      & \textbf{0.733}                                                                              & 0.68                                                                                      \\
\textbf{CPU}        & 0.778                                                                                      & 0.769                                                                                      & 0.942                                                                              & \textbf{0.947}                                                                                     \\
\textbf{Phoneme}    & 0.551                                                                                      & 0.6                                                                                        & \textbf{0.744}                                                                              & 0.673                                                                                     \\
\textbf{Fraud}      & 0.862                                                                                      & 0.858                                                                                      & 0.911                                                                              & \textbf{0.916}                                                                                     \\
\textbf{Creditcard} & 0.453                                                                                      & 0.422                                                                                      & \textbf{0.653}                                                                              & 0.636                                                                                     \\
\textbf{Vehicle}    & 0.511                                                                                      & 0.52                                                                                       & 0.916                                                                              & \textbf{0.933}                                                                                     \\
\textbf{Click}      & 0.44                                                                                       & 0.444                                                                                      & \textbf{0.711}                                                                              & 0.662                                                                                     \\
\textbf{Wind}       & 0.782                                                                                      & 0.804                                                                                      & 0.849                                                                              & \textbf{0.853}                                                                                     \\
\textbf{Pol}        & 0.493                                                                                      & 0.502                                                                                      & \textbf{0.836}                                                                              & 0.804                                                                                     \\
\textbf{MNIST}      & 0.782                                                                                      & 0.538                                                                                      & \textbf{0.911}                                                                              & \textbf{0.911}                                                                                     \\
\textbf{CIFAR10}    & 0.533                                                                                      & 0.418                                                                                      & 0.8                                                                                & \textbf{0.822}    \\      
\textbf{AGNews}     & 0.481                                                                                           & 0.531                                                                                                                        & \textbf{0.559}                                                                                                         & 0.543                                                                                                                                         \\
\textbf{DBPedia}    & 0.482                                                                                           & 0.498                                                                                                                        & \textbf{0.58}                                                                                                          & 0.576                                                                                                                                         \\ \bottomrule

\end{tabular}}
\caption{AUROC scores of different variants of KNN-Shapley for noisy data detection tasks on various datasets. The higher the AUROC score is, the better the method is.
}
\label{tb:noisy-detection}
\end{table}

\begin{figure}[h]
    \centering
    \centering
    \includegraphics[width=\columnwidth]{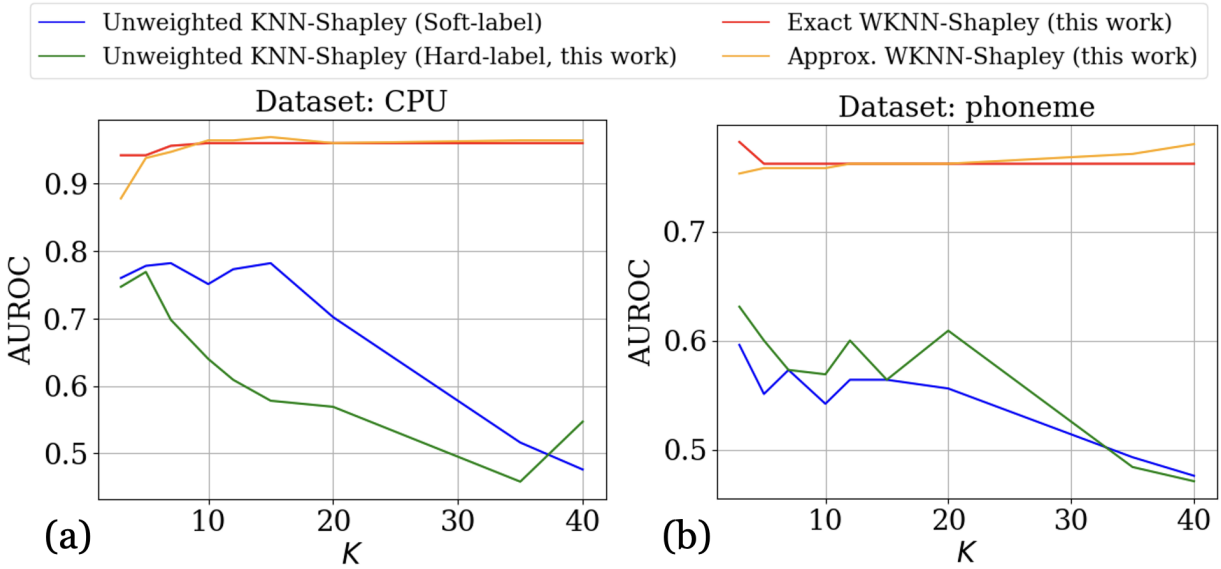}
    \caption{
    AUROC scores of different variants of KNN-Shapley for noisy data detection with different $K$s. The higher the curve is, the better the method is. 
    }
    \label{fig:ablation-varyK-noisy}
\end{figure}

\clearpage

\subsection{Ablation Study for the choice of $\mstar$ for the deterministic approximation algorithm}
\label{appendix:eval-varyEPS}

In this section, we evaluate the performance variation when we pick different $\mstar$ for our deterministic approximation algorithm in Section \ref{sec:deterministic-approx}. 
Specifically, we choose different values of $\mstar$ and plot the performance variation in mislabeled/noisy data detection task with different error bound $\eps(\mstar)$. 
Note that $\eps(\mstar) = 0$ corresponds to the exact WKNN-Shapley. 
We also highlight the location of $\eps(\sqrt{N})$, i.e., the error bound for the $\mstar$ we set in the experiment. 
As we can see from Figure \ref{fig:varyEPS}, the performance of the approximated WKNN-Shapley is highly stable across a wide range of choices of $\mstar$s. 
Hence, we set $\mstar = \sqrt{N}$ in all of the experiments instead of following the adaptive procedure of selecting $\mstar$ mentioned in Appendix \ref{appendix:how-to-select-mstar}.

\begin{figure}[h]
    \centering
    \includegraphics[width=\columnwidth]{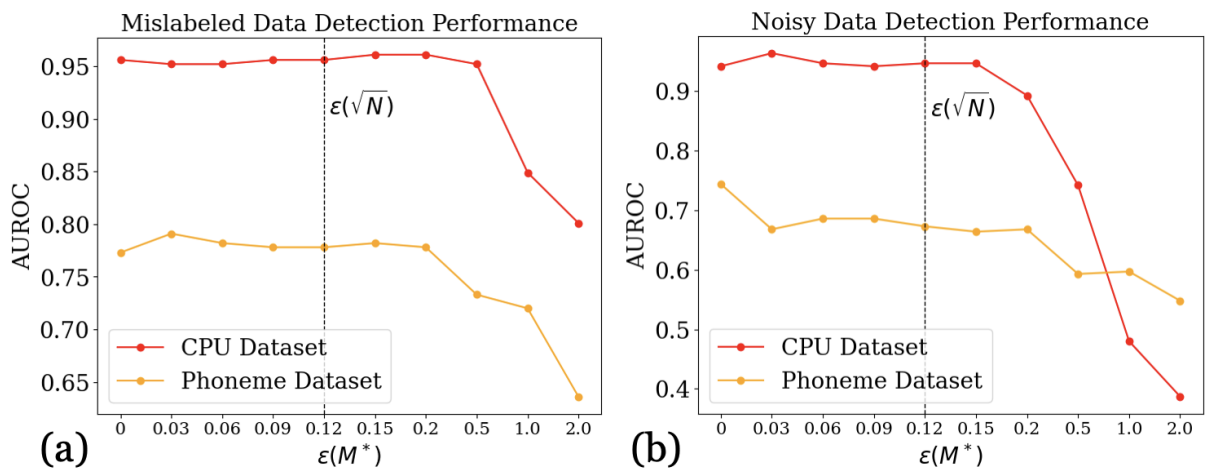}
    \caption{Performance variation of approximated WKNN-Shapley on (a) mislabeled data detection and (b) noisy data detection across different choice of $\mstar$. For a more direct comparison, we plot x-axis as the theoretical error bound $\eps(\mstar)$ derived in Theorem \ref{thm:error-bound}.}
    \label{fig:varyEPS}
\end{figure}

\clearpage

\subsection{Qualitative Comparison between Weighted and Unweighted KNN-Shapley}
\label{appendix:eval-comparison}

In our experiments, we show that weighted KNN-Shapley significantly outperforms unweighted KNN-Shapley in discerning data quality. This can likely be attributed to WKNN-Shapley’s adept ability to more accurately differentiate between bad and good data based on the proximity to the queried example. In this section, we present a more detailed qualitative analysis highlighting why WKNN-Shapley outperforms unweighted KNN-Shapley in discerning data quality. 

Figure \ref{fig:qualitative} shows the value score distribution of unweighted and weighted KNN-Shapley of 50 data points, where 5 of them are being mislabeled. 
The KNN-Shapley scores are computed with respect to a single validation point. 
As we can see, compared with the unweighted KNN-Shapley, WKNN-Shapley exhibits a much higher variation in value scores of different data points. 
More importantly, WKNN-Shapley can better differentiate the quality between the data points that are near the validation point. 
As we can see from the figure, the two mislabeled points that are the closest to the validation point (index \textbf{49} and \textbf{50}) receive much lower WKNN-Shapley scores compared with those benign data points that have a different label as the validation point (index \textbf{41-48}). 
On the other hand, unweighted KNN-Shapley assigns almost the same negative values for all data points that are close to the validation point but has a different label (index \textbf{41-50}), regardless of whether they are benign or mislabeled. 

\begin{figure}[h]
    \centering
    \includegraphics[width=\columnwidth]{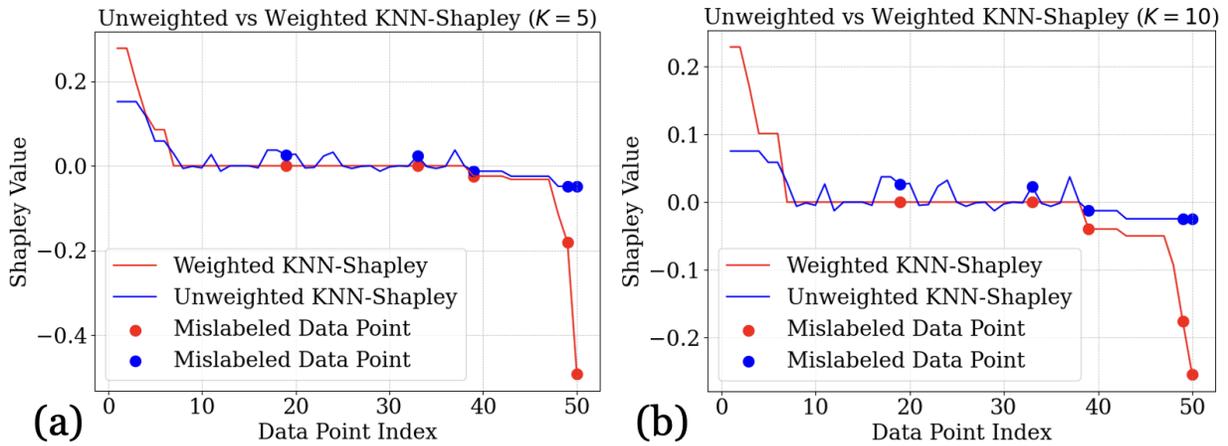}
    \caption{Distribution of unweighted and weighted KNN-Shapley scores at (a) $K=5$ and (b) $K=10$ for 50 data points from CPU dataset, where 5 of them are mislabeled data points.}
    \label{fig:qualitative}
\end{figure}

\clearpage

\subsection{Application on Data Selection for Neural Network Training and KNN-LM}
\label{appendix:eval-dataselection}

In this section, we showcase an additional application of weighted KNN-Shapley in selecting high-quality data points for important machine learning tasks. 
Specifically, we evaluate the utility of selected data points for (1) regular neural network training, and (2) $K$ nearest neighbor language models (KNN-LMs) \cite{khandelwal2019generalization}, a famous type of retrieval-augmented language model nowadays.



\begin{remark}[\textbf{Data selection is not the major application of data valuation techniques}]
We note that for the task of data selection, the performance is highly dependent on the diversity of selected data points. However, a reasonable data value notion typically needs to satisfy Symmetry axiom, which is being interpreted as fairness. Hence, if one selects data points with high values, it is likely that the selected data points lack diversity, as similar data points are required to receive similar value scores. In fact, preliminary theoretical and empirical results \cite{wang2021towards} show that the Shapley value as well as other cooperative-game-theory-based data valuation can result in arbitrarily low model performance in the worst case. 
Hence, we do not consider data selection as the major application of data valuation techniques. 
Having said that, in this section, we evaluate the performance of WKNN-Shapley on data selection task and demonstrate its superior performance compared with unweighted KNN-Shapley. 
\end{remark}

Figure \ref{fig:selection} (a) shows WKNN's performance on CIFAR10 \cite{krizhevsky2009learning} when trained on data points that receive the highest data value scores (computed based on the associated data valuation techniques). 
We use ResNet18 with batch size $128$, (initial) learning rate $10^{-3}$ and Adam optimizer for training. 
Since CIFAR10 is a relatively well-curated dataset, we manually introduce quality variation by randomly flipping the labels of 10\% of the training images. 
As we can see, both the exact and approximated WKNN-Shapley offer comparable results, and both of them outperform the original unweighted KNN-Shapley. 

Figure \ref{fig:selection} (b) shows KNN-LM's performance on the WNLI dataset \cite{wang2018glue}, where the \emph{data store} incorporates only those data points that receive the highest value scores. 
Since WNLI dataset does not have a publically available test set, we split its original validation set and pick 25 validation points as the validation set, and we use the rest of the validation points as the test set. 
KNN-LM is a popular retrieval-augmented language model where the output of the original LM is being interpolated with the output of the KNN classifiers, i.e., $p_{KNN-LM}(y) := \lambda p_{KNN}(y) + (1-\lambda) p_{LM}(y)$. Here, we set $\lambda = 0.5$. We use BERT \cite{kenton2019bert} pretrained on WNLI dataset as the language model here.\footnote{Publically available from \url{https://huggingface.co/JeremiahZ/bert-base-uncased-wnli} with a baseline accuracy around 53\% on splited test set.} 
As we can see from the figure, both the exact and approximated WKNN-Shapley stand out and outperform the original unweighted KNN-Shapley by a large margin. We note that when leveraging $>55\%$ of the entire data store, KNN-LM performs even worse than the original, unaugmented LM due to the relatively low quality of the benchmark dataset. This underscores the important role of selecting high-quality data points, where WKNN-Shapley proves to be an effective tool. 

\begin{figure}[h]
    \centering
    \includegraphics[width=0.8\columnwidth]{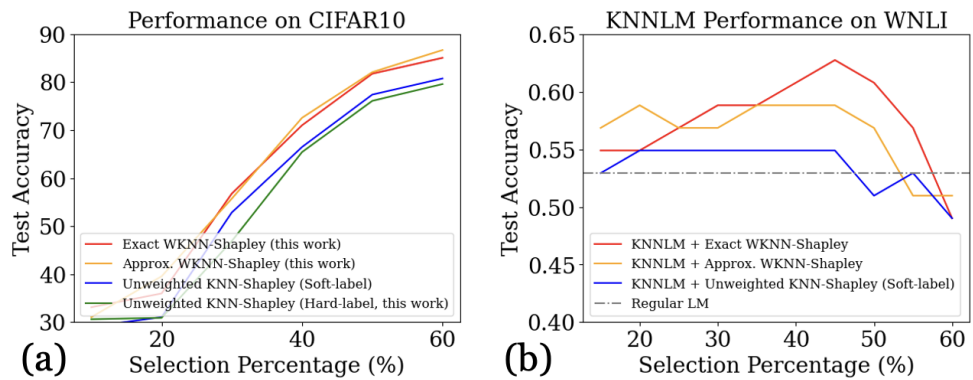}
    \caption{
    (a) The performance of ResNet18 on the CIFAR10 subset selected by different data valuation techniques. 
    (b) The performance of KNN-LM on the WNLI dataset \cite{wang2018glue}'s subset selected by different data valuation techniques. 
    }
    \label{fig:selection}
\end{figure}

\end{document}